\documentclass[twoside,journal]{IEEEtran}
\usepackage{amsfonts}
\usepackage{amssymb}
\usepackage{amsmath}
\usepackage{mathrsfs}
\usepackage{verbatim}
\usepackage{subfigure}
\usepackage{balance}
\usepackage{booktabs}
\usepackage{fancybox}
\usepackage{bm}
\usepackage{changepage}
\usepackage{extarrows}
\usepackage{algorithm}
\usepackage{algorithmic}
\usepackage{diagbox}
\usepackage{multirow}
\usepackage{array}
\usepackage{graphicx}
\usepackage{epstopdf}
\usepackage{caption}
\usepackage{color}
\captionsetup{font={footnotesize}}
\newtheorem{theorem}{Theorem}
\newtheorem{lemma}{Lemma}
\newtheorem{corollary}{Corollary}
\newtheorem{proof}{Proof}
\newtheorem{proposition}{Proposition}

\newtheorem{remark}{Remark}

\begin{document}

\title{Physical-Layer Security in the Finite Blocklength Regime over Fading Channels}
\author{Tong-Xing~Zheng,~\IEEEmembership{Member,~IEEE,}
Hui-Ming~Wang,~\IEEEmembership{Senior~Member,~IEEE,}\\
Derrick~Wing~Kwan~Ng,~\IEEEmembership{Senior~Member,~IEEE,}
and~Jinhong~Yuan,~\IEEEmembership{Fellow,~IEEE}
\thanks{T.-X.~Zheng is with the School of Information and Communications Engineering, Xi'an Jiaotong University, Xi'an 710049, China, also with the State Key Laboratory of Integrated Services Networks, Xidian University, Xi'an 710071, China, and also with the Ministry of Education Key Laboratory for Intelligent Networks and Network Security, Xi'an Jiaotong University, Xi'an 710049, China 
(e-mail: zhengtx@mail.xjtu.edu.cn).}
\thanks{H.-M. Wang is with the School of Information and Communications Engineering, Xi'an Jiaotong University, Xi'an 710049, China, and also with the Ministry of Education Key Laboratory for Intelligent Networks and Network Security, Xi'an Jiaotong University, Xi'an 710049, China 
(e-mail: xjbswhm@gmail.com).}
\thanks{D. W. K. Ng and J. Yuan are with the School of Electrical Engineering and Telecommunications, University of New South Wales, Sydney, NSW 2052, Australia (e-mail: w.k.ng@unsw.edu.au; j.yuan@unsw.edu.au).}
}
\maketitle
\vspace{-0.8 cm}
	
\begin{abstract}
This paper studies physical-layer secure transmissions from a transmitter to a legitimate receiver against an eavesdropper over slow fading channels, taking into account the impact of  finite blocklength secrecy coding. 
A comprehensive analysis and optimization framework is established to investigate secrecy throughput for both single- and multi-antenna transmitter scenarios.
Both adaptive and non-adaptive design schemes are devised, in which the secrecy throughput is maximized by exploiting the instantaneous and statistical channel state information of the legitimate receiver, respectively. 
Specifically, optimal transmission policy,  blocklength, and code rates are jointly designed to maximize the secrecy throughput. 
Additionally, null-space artificial noise is employed to improve the secrecy throughput for the multi-antenna setup with the optimal power allocation derived.
{Various important insights are developed. In particular, 1) increasing blocklength benefits both reliability and secrecy under the proposed transmission policy; 2) secrecy throughput monotonically increases with blocklength; 3) secrecy throughput initially increases but then decreases as secrecy rate increases, and the optimal secrecy rate maximizing the secrecy throughput should be carefully chosen in order to strike a good balance between rate and decoding correctness. }
Numerical results are eventually presented to verify theoretical findings.
\end{abstract}

\begin{IEEEkeywords}
	Physical-layer security, wiretap code, secrecy throughput, finite blocklength, optimization.
\end{IEEEkeywords}
	
\IEEEpeerreviewmaketitle
	
\section{Introduction}
{In the past decade, pursuing communication security at the physical layer has received a considerable interest, e.g.,}
\cite{Bloch2011Physical}-\cite{Wu2018Survey}. 
{In particular}, physical-layer security exploits the inherent randomness of noise and wireless channels to protect wireless secure transmissions {\cite{Deng2016Artificial}-\cite{Yan2018Three}}, which can provide an additional {mechanism for security guarantee and} can coexist with those security techniques already employed at the upper layers, such as key-based encipherment. Most recent progress in developing physical-layer security is motivated by Wyner's pioneering work. {Specifically, }the concept of secrecy capacity {was first established which is defined as} the supremum of secrecy rates at which both reliability and secrecy are achieved over a wiretap channel \cite{Wyner1975Wire-tap}. Wyner showed that the error probability and information leakage can be made arbitrarily low {concurrently} with {an} appropriate secrecy coding, provided that a data rate below the secrecy capacity is chosen and meanwhile the data is mapped to asymptotically long codewords, i.e., the coding  blocklength tends to infinity.
However, the upcoming 5G wireless communication system{s are required} to support various novel traffic types 
{adopting} short packets to reduce the {end-to-end communication} latency, e.g., smart{-}traffic safety and machine-to-machine communications {\cite{Durisi2016Toward,Wong2017Key}}. For the short-packet applications, conventional physical-layer security schemes originated from infinite blocklength {are generally} suboptimal and the {impact of} finite blocklength could be destructive for secure communications. {Therefore,} it is necessary to rethink the analysis and design of physical-layer security for the finite blocklength regime.

\subsection{Previous Works and Motivations}
Decoding {with} finite blocklength will inevitably reduce the secrecy capacity and some {preliminary works} have been devoted to analyzing the impact of finite blocklength on secrecy for the wiretap channel.
For example, the authors in \cite{Cao2015Physical} derived an upper bound for the information leakage probability for a {given} target decoding error probability {demonstrating} the inherent trade-off between secrecy and reliability.
The authors in \cite{Yang2016Finite} provided both upper and lower bounds for the maximal secrecy rate {capturing the impact of  finite blocklength, error probability, and information leakage in both} degraded discrete-memoryless wiretap channels and Gaussian wiretap channels. The obtained bounds were shown to be tighter than existing ones from \cite{Yassaee2013Non,Tan2012Achievable}. 
The work in \cite{Yang2016Finite} was further extended by \cite{Yang2017Secrecy}, in which the optimal second-order secrecy rate was derived for a semi-deterministic wiretap channel, and the optimal tradeoff between secrecy and reliability {with} finite blocklength was analytically characterized. It should be noted that, all the above works were {aimed} to uncover the fundamental limits of secrecy performance from the {information theory point of view}, whereas {the design of practical signaling and transmission schemes were not investigated}.

{In practice, due to finite blocklength penalty for practical coding schemes}, even a secrecy rate below the secrecy capacity cannot guarantee a perfectly successful {and secure} communication. In this sense, in addition to exploring and/or improving the fundamental limits of the maximal secrecy rate, optimizing secrecy throughput seems more important from the perspective of transmission efficiency, particularly for fading channels where the code rates can be {adapted} to the fading status. Herein, the secrecy throughput {denotes} the amount of successfully delivered secret information subject to certain reliability and secrecy constraints.
{In fact}, the secrecy throughput has extensively been taken as {an} optimization objective for the {design of} secure transmissions in slow fading channels in the context of infinite blocklength  \cite{Zhang2013Design}-\cite{Zheng2018Secure}. {Nevertheless, to optimize the secrecy throughput under the constraint of finite blocklength is difficult, and the results derived for infinite blocklength, e.g., \cite{Zhang2013Design}-\cite{Zheng2018Secure}, cannot be directly applied. Indeed, the blocklength itself is an optimization variable, and it couples with other variables in a sophisticated manner which makes the optimization problem intractable.} 
For instance, the authors in a recent work \cite{Farhat2018Secure} investigated the secrecy throughput of a relay-aided secure transmission with finite blocklength, where neither the instantaneous channel state information (CSI) with respect to (w.r.t.) the legitimate receiver nor the eavesdropper is available at the transmitter side. Numerical results were presented therein to show that there exists a critical value of the blocklength that maximizes the secrecy throughput.

Despite the above endeavors, there are  {some} fundamental questions regarding the design of physical-layer security schemes  {with finite blocklength} that have not been thoroughly addressed. First of all, a theoretical proof of the optimal blocklength  {and the corresponding} secrecy rate for maximizing the secrecy throughput is of great significance for the practical design of secure transmissions, which however has not yet been reported by existing literature.
Also, in many applications, the transmitter is capable to acquire the instantaneous CSI of the legitimate receiver {in slow fading channels} via training or feedback. {Yet,} the potential of {exploiting} the instantaneous CSI {to alleviate the negative impact} of finite blocklength {on the performance of secure communications} has not been exploited. Furthermore, only the single-antenna transmitter scenario has been {considered, e.g., \cite{Cao2015Physical}-\cite{Yang2017Secrecy}, \cite{Farhat2018Secure}}, and the design of the optimal signaling and code rates for multi-antenna systems with finite blocklength is still an open issue. This research work aims to {provide an analytical framework and design schemes to address the abovementioned problems}.

\subsection{Contributions}
This paper investigates the security issue between a pair of legitimate communicating parties in the presence of an eavesdropper, considering {the impact of finite blocklength in secrecy coding}. The secrecy throughput is thoroughly analyzed and optimized for  both single- and multi-antenna transmitter scenarios. In particular, both adaptive and non-adaptive parameter design schemes are proposed for  each scenario. The main contributions of this work are summarized as follows:
\begin{itemize}
	\item 
	For the single-antenna transmitter scenario, the secrecy throughput is {maximized} by jointly {optimizing} the transmission policy, blocklength, as well as  code rates. {Closed-form bounds and approximations for the secrecy rate are provided to facilitate the practical design of code rates for achieving a close-to-optimal performance.}
	\item
	For the multi-antenna transmitter configuration, the optimality of the null-space artificial noise (AN) scheme in terms of secrecy throughput maximization is first {investigated}.
	Afterwards, the optimal transmission policy,  blocklength, code rates, and power allocation between the information-bearing signal and the AN are derived. Particularly, the power allocation and the secrecy rate are designed via the alternating optimization method, and {their impacts on the system performance} are further revealed.
	\item
{ Numerous useful insights into the design of secure transmissions are provided with finite blocklength.
	For example, 1) increasing the blocklength can improve both reliability and secrecy, with properly exploiting the instantaneous CSI of the main channel and the statistical CSI of the wiretap channel, which has not been revealed by existing literature, e.g., \cite{Cao2015Physical}-\cite{Yang2017Secrecy};
	2) using the maximal blocklength is profitable for boosting the secrecy throughput, which is distinguished from the observation in \cite{Farhat2018Secure};
	3) due to the finite blocklength penalty, there is a critical secrecy rate that can maximize the secrecy throughput even for the adaptive scheme, rather than always employing the maximal available secrecy rate, which is fundamentally different from the phenomenon with infinite blocklength, e.g., \cite{Zhang2013Design,Zheng2015Multi}.}
\end{itemize}

\subsection{Organization and Notations}
The remainder of this paper is organized as below. 
Section II describes the system model and the underlying optimization problem. 
Sections III and IV detail the secrecy throughput maximization for both single- and multi-antenna transmitter scenarios.  
Section V draws a conclusion.

\emph{Notations}: Bold lowercase letters denote column vectors. $|\cdot|$, $\|\cdot\|$, $(\cdot)^{\dagger}$, $(\cdot)^{\rm T}$, $\ln(\cdot)$, $\mathbb{P}\{\cdot\}$, $\mathbb{E}_v[\cdot]$ denote the absolute value, Euclidean norm, conjugate, transpose, natural logarithm, probability, and the expectation over a random variable $v$, respectively.
$f_v(\cdot)$ and $\mathcal{F}_v(\cdot)$ denote the probability density function (PDF) and cumulative distribution function (CDF) of $v$, respectively. $F^{-1}(\cdot)$ denotes the inverse function of a function $F(\cdot)$. $\mathcal{CN}(\mu,\sigma^2)$ denotes the circularly symmetric complex Gaussian distribution with mean $\mu$ and variance $\sigma^2$.

\section{System Model and Problem Description}

\subsection{Channel Model}
\begin{figure}[!t]
	\centering
	\includegraphics[width = 3.4in]{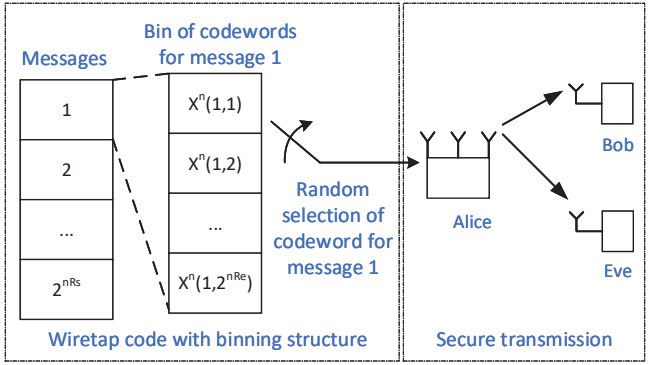}
	\caption{Secure transmission from Alice (multi-antenna) to Bob (single-antenna) overheard by Eve (single-antenna). Alice adopts a wiretap code with a binning structure, where $2^{nR_s}$ messages each are mapped to a bin of $2^{nR_e}$ codewords with a finite blocklength $n$. A codeword among a set of codewords representing the same message is randomly chosen for transmission \cite{Wyner1975Wire-tap}.} 
	\label{SYSTEM_MODEL}
\end{figure}

Consider a secure transmission from a transmitter (Alice) to a legitimate receiver (Bob) coexisting with an eavesdropper (Eve), as depicted in Fig. \ref{SYSTEM_MODEL}.
 Alice is equipped with $M\ge 1$ transmit antennas, whereas Bob and Eve are single-antenna devices. { Quasi-static Rayleigh fading channels are considered, where the channel coherence time is on the order of the blocklength. More specifically, the fading coefficients are assumed to remain constant during the transmission of an entire codeword, but change independently and randomly between two codewords \cite{Bloch2011Physical}.} Denote the coefficients of the main and wiretap channels by $\bm h_b$ and $\bm h_e$, and each entry of $\bm h_b$ and $\bm h_e$ follow the Gaussian distribution $\mathcal{CN}(0,\sigma_b^2)$ and $\mathcal{CN}(0,\sigma_e^2)$, respectively.\footnote{The subscripts $b$ and $e$ are used to refer to Bob and Eve, respectively.} {A common hypothesis is adopted \cite{Zhang2013Design,Zheng2015Multi}, i.e., Bob and Eve know  perfectly the instantaneous CSI of their individual channels $\bm h_b$ and $\bm h_e$, and Alice has the instantaneous CSI of Bob's channel $\bm h_b$ but does not has the instantaneous CSI of Eve's channel $\bm h_e$. Besides, the statistics of both channels $\bm h_b$ and $\bm h_e$ are available at Alice.} Assume that $\bm h_b$, $\bm h_e$, and the receiver noise are mutually independent, where noise variances at Bob and Eve are denoted by $w_{b}^2$ and $w_{e}^2$, respectively.
Alice adopts a constant transmit power $P$. {For notational simplicity, define} $P_b\triangleq \frac{P}{w_b^2}$ and $P_e\triangleq \frac{P}{w_e^2}$ as the normalized power for Bob and Eve, respectively.

\subsection{Finite Blocklength Secrecy Coding}
{ To safeguard information confidentiality, secrecy coding should be employed to encode the secret information bits. Instead of investigating any explicit practical constructions of secrecy codes, the Wyner's wiretap code \cite{Wyner1975Wire-tap}, as a generic code structure, is employed in this paper. A synopsis of the state-of-the-art coding schemes for wiretap channels can be found in \cite{Harrison}.
	
	It is reported in \cite{Bloch2011Physical} that the Wyner's wiretap code possesses a binning structure,} as illustrated in Fig. \ref{SYSTEM_MODEL}, where $2^{nR_s}$ messages are encoded to $2^{nR_t}$ codewords, and each message is mapped to a bin of $2^{nR_e}$ codewords. Here, $n$ denotes the blocklength (i.e., the codeword length or the number of channel uses), $R_s$ and $R_t$ (bits/s/Hz/channel) denote the secrecy rate and codeword rate, respectively. The binning codeword rate, i.e., the rate redundancy $R_e = R_t - R_s$, reflects the cost of providing secrecy.

It is well-known that, for {an} infinite blocklength with $n\rightarrow\infty$, as long as the codeword rate $R_t$ is not larger than Bob's channel capacity, Bob can recover messages {with an arbitrarily low decoding error probability}. On the other hand, perfect secrecy cannot always be guaranteed due to the absence of Eve's instantaneous CSI: once the rate redundancy $R_e$ falls below Eve's channel capacity, perfect secrecy is compromised, and a secrecy outage event is said to have occurred. 
Nevertheless, in the finite blocklength regime which is restricted to a finite number of channel uses, no {practical protocols} can achieve perfectly reliable communications \cite{Bertsekas1992Data}. {Hence, to capture the impact of finite blocklength}, the maximal channel coding rate for sustaining a desired decoding error probability $\epsilon$ at a finite blocklength $n$ (e.g., $n\ge 100$) for a given signal-to-noise ratio (SNR) $\gamma$ {was studied in \cite{Polyanskiy2010Channel} and can be} approximated by  
\begin{equation}
R(\gamma, n,\epsilon) \approx C(\gamma) - \sqrt{\frac{V(\gamma)}{n}}Q^{-1}(\epsilon),
\end{equation}
where $C(\gamma)\triangleq\log_2(1+\gamma)$ denotes the Shannon channel capacity, $V(\gamma)\triangleq \left(1-(1+\gamma)^{-2}\right)\log_2^2 e$ denotes the channel dispersion \cite{Polyanskiy2010Channel}, 
 and $Q(x)$ is the $Q$-function defined as $Q(x)\triangleq \frac{1}{\sqrt{2\pi}}\int_x^{\infty}e^{-\frac{t^2}{2}}dt$.
 Equivalently, the decoding error probability {for a given coding rate $R$} can be expressed as
 \begin{equation}\label{error_pro}
 \epsilon(\gamma, n,R)= Q\left(\frac{C(\gamma)-R}{\sqrt{V(\gamma)/n}}\right).
 \end{equation}
 For ease of notation, let $C_i \triangleq C(\gamma_i)$ and $V_i \triangleq V(\gamma_i)$ for $i\in\{b,e\}$, {where $\gamma_i$ denotes the corresponding SNR}. Define the successful decoding probability of Bob as the complement of its decoding error probability with the codeword rate $R_t$. Then, the successful decoding probability conditioned on the power gain of the main channel, i.e., $\eta\triangleq \|\bm h_b\|^2$, can be expressed as
 \begin{equation}\label{pe}
 p_s(\eta) \triangleq 1 - \epsilon(\gamma_b,n,R_t)
 =1-Q\left(\frac{C_b-R_t}{\sqrt{V_b/n}}\right).
 \end{equation}
The secrecy performance is characterized by the information leakage probability defined below:
\begin{equation}\label{oe}
\mathcal{O}_e \triangleq \mathbb{E}_{\gamma_e}\left[1-\epsilon(\gamma_e,n,R_e)\right].
\end{equation}
\begin{remark}
	Due to the finite blocklength, 
	the secrecy metric information leakage probability in \eqref{oe} appears to be distinguished from the widely used secrecy outage probability, defined as $\mathbb{P}\{R_e\le C_e\}$ \cite{Zheng2015Multi}, for the infinite blocklength regime with $n\rightarrow\infty$.
\end{remark}

\subsection{Optimization Problem}
{ Since Alice knows Bob's instantaneous CSI perfectly, she is able to adapt the code rates to the instantaneous channel gain $\eta$, which implies that the code rates can be functions of $\eta$. 
This paper focuses on the metric named secrecy throughput (bits/s/Hz/channel), which measures the average successfully transmitted information bits per second per Hertz per channel use subject to a secrecy constraint $\mathcal{O}_e\le\delta$, where $\delta\in[0,1]$ is a pre-established threshold for the information leakage probability. Formally, the secrecy throughput is defined as 
 \begin{equation}\label{def_st}
 \mathcal{T}\triangleq\mathbb{E}_{\eta}\left[R_s(\eta)p_s(\eta)\right]~{\rm s.t.}~\mathcal{O}_e\le\delta,
 \end{equation}
which is averaged over $\eta$. 
Note that the introduction of finite blocklength leads to a different definition of secrecy throughput compared to the case of infinite blocklength which is $\mathcal{T}\triangleq\mathbb{E}_{\eta}\left[R_s(\eta)\right]$  \cite{Zhang2013Design,Zheng2015Multi}.
In addition, as will be shown later, in order to meet certain secrecy and reliability requirements during the transmission period, an on-off transmission policy is required;\footnote{The on-off policy was initially proposed for  ergodic-fading channels \cite{Gopala2008Secrecy}, where a codeword experiences many channel realizations. It was later introduced to slow fading channels and well characterized the condition for secure   transmissions  \cite{Zhang2013Design}.} i.e., the transmission should take place only when the channel gain $\eta$ exceeds some threshold $\mu>0$. 
With the on-off policy, $R_s(\eta)$ is set to zero for $\eta<\mu$. }

This paper aims to maximize the secrecy throughput by designing the optimal on-off threshold, signaling, blocklength, as well as code rates.
The following two sections will detail the optimization for single- and multi-antenna transmitter scenarios, respectively. For each scenario, both adaptive and non-adaptive design schemes are examined, where Alice adjusts the arguments based on the instantaneous and statistical CSI of the main channel, respectively.

\section{Single-Antenna Transmitter Scenario}
For the single-antenna transmitter scenario, the SNRs of Bob and Eve are given by $\gamma_b = P_b\eta$ with $\eta = |h_b|^2$ and $\gamma_e = P_e|h_e|^2$, respectively. Clearly, $\gamma_i$ is exponentially distributed with mean $\Gamma_i = P_i \sigma_i^2$ for $i\in\{b,e\}$.
The subsequent two subsections {aim} to maximize the secrecy throughput $\mathcal{T}$ defined in \eqref{def_st} by jointly designing the on-off threshold $\mu(\eta)$, the wiretap code rates $R_s(\eta)$ and $R_e(\eta)$, and the blocklength $n(\eta)$, via adaptive and non-adaptive ways, respectively. For notational convenience, these parameters are treated as functions of $\eta$ by default for the adaptive scheme, with the notation $\eta$ being dropped, and  $\mathcal{T}_{\rm A}$ and  $\mathcal{T}_{\rm N}$ are used to differentiate the adaptive scheme to its non-adaptive counterpart. The optimization problem then can be formulated as below:
\begin{subequations}\label{st_max_problem}
	\begin{align}\label{st_max}
	\max_{\mu>0, R_e>0, R_s>0, n} ~&\mathcal{T}=\mathbb{E}_{\eta}\left[R_sp_s\right]\\
	{\rm s.t.} 
	~~&C_b\ge R_t =  R_s+R_e,~\forall\eta>\mu ,\label{st_max_c1}\\
	~&\mathcal{O}_e\le\delta,\label{st_max_c2} \\
	\label{st_max_c3}
	~&1\le n \le N,~n,N\in\mathbb{Z}^{+}.
	\end{align}
\end{subequations}
Note that \eqref{st_max_c1} is interpreted as a reliability requirement since otherwise the successful decoding probability $p_s$ in \eqref{pe} falls below $0.5$ and it is no better than random guessing, which is definitely not acceptable;  \eqref{st_max_c2} describes the secrecy constraint; \eqref{st_max_c3} is related to a latency constraint, where the integer $N$ denotes the maximal available blocklength imposed by a maximal tolerable delay. 

\subsection{Adaptive Optimization Scheme}\label{ada_single}
In the adaptive scheme, the parameters $\mu$, $R_s$, $R_e$, and $n$ are designed based on $\eta$, i.e., they are adjusted in real time.
A detailed optimization procedure is provided as follows.
\subsubsection{Solving $R_e$}
Since $Q$-function $Q(x)$ is a monotonically decreasing function of $x$, it is known that $p_s$ defined in \eqref{pe} decreases with $R_e$ for a fixed $R_s$. This suggests that, the optimal $R_e$ maximizing $\mathcal{T}_{\rm A}$ should be the minimal $R_e$ that satisfies the secrecy constraint $\mathcal{O}_e\le\delta$. Now that $\mathcal{O}_e$ in \eqref{oe} decreases with $R_e$, the optimal $R_e$ is given as the inverse of $\mathcal{O}_e$ at $\delta$,  i.e.:
\begin{equation}\label{opt_re_ad}
R_e^*= \mathcal{O}_e^{-1}(\delta).
\end{equation}
{ Obviously, $R_e^*$ is independent of $\eta$, but monotonically decreases with $\delta$. This is intuitive that a larger rate redundancy is required to combat the eavesdropper in order to meet a more rigorous secrecy constraint.}  Although it is difficult to derive a closed-form expression for $R_e^*$ due to the complicated $Q$-function, the value of $R_e^*$ can be efficiently acquired via a bisection method with $\mathcal{O}_e(R_e)=\delta$, requiring only the computation {of $Q(x)$ or a lookup table}.

\subsubsection{Solving $\mu$}
The secrecy throughput $\mathcal{T}_{\rm A}$ given in \eqref{st_max} can be calculated as 
 \begin{equation}\label{max_st_ad}
 \mathcal{T}_{\rm A}=\int_{P_b\mu}^{\infty}R_sp_sf_{\gamma_b}(\gamma)d\gamma,
 \end{equation}
where $f_{\gamma_b}(\gamma) = \frac{1}{\Gamma_b}e^{-{\gamma}/{\Gamma_b}}$ is the PDF of $\gamma_b= P_b\eta$.
{It appears} that choosing $\mu$ as small as possible is beneficial for increasing $\mathcal{T}_{\rm A}$, on the premise of satisfying the reliability constraint \eqref{st_max_c1}.
In addition, constraint \eqref{st_max_c1} suggests that $C_b>R_e^*\Rightarrow \eta=\frac{\gamma_b}{P_b}>\frac{2^{R_e^*}-1}{P_b}$ must be ensured to achieve a positive $R_s$.
Hence, the optimal on-off threshold is given by
\begin{equation}\label{opt_mu_ad}
\mu^* = \frac{2^{R_e^*}-1}{P_b}.
\end{equation}
{ This result indicates that the transmission condition for the adaptive scheme is determined by the secrecy constraint. Apparently, $\mu^*$ is monotonically decreasing with $\delta$ since $R_e^*$ decreases with $\delta$. This implies, a weaker channel is still allowed for transmission for a looser secrecy constraint.}

Once $\mu$ is obtained, to maximize $\mathcal{T}_{\rm A}$ in \eqref{max_st_ad} only calls for maximizing $\mathcal{T}_{\rm A}(\eta)\triangleq R_sp_s$ which is conditioned on $\eta$. The subproblem is described as below:
\begin{equation}\label{st_max1}
\max_{R_s, n} ~\mathcal{T}_{\rm A}(\eta)= R_sp_s
~~{\rm s.t.}~~ \eqref{st_max_c3},~0\le R_s\le C_b - R_e^*.
\end{equation}
The basic idea to tackle the above problem is first to maximize $p_s$ over $n$ for a fixed $R_s$ and then to design the optimal $R_s$ that maximizes $R_sp_s$ with the optimal $n$.

\subsubsection{Solving $n$}
For any fixed $R_t\le C_b$, there is no doubt that $p_s$ increases with $n$. However, as shown in \eqref{oe}, {$\epsilon(\gamma_e,n,R_e)$ decreases with $n$ for $R_e\le C_e$ but increases with $n$ otherwise. Then, it remains unclear how
	$\mathcal{O}_e$ defined in \eqref{oe}, as well as $R_e^*$ in \eqref{opt_re_ad}, varies with $n$. More importantly, it is less obvious if the monotonicity of $p_s$ w.r.t. $n$ can still hold, since $R_t = R_s+R_e^*$ becomes independent of $n$. 
Therefore, in order to derive the optimal $n^*$ maximizing $\mathcal{T}_{\rm A}(\eta)$ in \eqref{st_max1}, the monotonicity of $\mathcal{O}_e$ or $R_e^*$ w.r.t. $n$ should be first identified.
\begin{lemma}\label{lemma_opt_re_n}
$\mathcal{O}_e$ in \eqref{oe} and $R_e^*$ in \eqref{opt_re_ad} decrease with $n$.
\end{lemma}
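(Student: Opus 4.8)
The plan is to derive the second assertion from the first and then spend all the effort on proving $\partial\mathcal{O}_e/\partial n<0$ for a fixed $R_e$. Since $R_e^*$ is defined implicitly by $\mathcal{O}_e(R_e^*,n)=\delta$ through \eqref{opt_re_ad}, differentiating this identity in $n$ gives $\frac{dR_e^*}{dn}=-\left(\partial\mathcal{O}_e/\partial n\right)\big/\left(\partial\mathcal{O}_e/\partial R_e\right)$. The denominator is negative because $\mathcal{O}_e$ decreases with $R_e$, a fact already used to obtain \eqref{opt_re_ad}. Hence, once $\partial\mathcal{O}_e/\partial n<0$ is established, the ratio is negative and $R_e^*$ decreases with $n$; the whole lemma thus rests on the monotonicity of $\mathcal{O}_e$ in $n$.

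First I would differentiate \eqref{oe} under the integral sign. Writing the $Q$-argument in \eqref{error_pro} as $g=\xi(\gamma)\sqrt{n}$ with $\xi(\gamma)\triangleq\frac{C(\gamma)-R_e}{\sqrt{V(\gamma)}}$, and using $-Q'(x)=\frac{1}{\sqrt{2\pi}}e^{-x^2/2}$ together with $\partial g/\partial n=\xi(\gamma)/(2\sqrt{n})$, I obtain
\begin{equation}
\frac{\partial\mathcal{O}_e}{\partial n}=\frac{1}{2\sqrt{n}}\int_0^{\infty}\frac{1}{\sqrt{2\pi}}e^{-g^2/2}\,\xi(\gamma)\,f_{\gamma_e}(\gamma)\,d\gamma .
\end{equation}
The factor $\xi(\gamma)$ is negative for $\gamma<\gamma_0\triangleq 2^{R_e}-1$ (where $C(\gamma)<R_e$) and positive for $\gamma>\gamma_0$, so the integrand changes sign at $\gamma_0$ and no term-by-term conclusion is possible. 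The crux is to show that the negative mass below $\gamma_0$ outweighs the positive mass above it.

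To compare the two regions I would symmetrize. A short calculus check shows $\xi$ is strictly increasing on $(0,\infty)$ from $-\infty$ to $+\infty$: its derivative $\xi'=V^{-1/2}\big[C'-\tfrac{(C-R_e)V'}{2V}\big]$ is positive because $\gamma(\gamma+2)>\ln(1+\gamma)$ for all $\gamma>0$. Thus $x=\xi(\gamma)\sqrt{n}$ is a valid change of variable, and pairing $x$ with $-x$ reduces $\partial\mathcal{O}_e/\partial n<0$, up to a positive factor, to $\int_0^{\infty}x e^{-x^2/2}\big[G(\gamma_+)-G(\gamma_-)\big]\,dx<0$, where $G(\gamma)\triangleq f_{\gamma_e}(\gamma)/\xi'(\gamma)$ and $\gamma_+>\gamma_0>\gamma_-$ are the unique points with $\xi(\gamma_+)=-\xi(\gamma_-)$. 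Intuitively, as $n$ grows $1-Q(g)$ sharpens toward the indicator $\mathbf{1}\{\gamma_e>\gamma_0\}$, whose $f_{\gamma_e}$-average is the limiting outage $e^{-\gamma_0/\Gamma_e}$; because the exponentially decaying density weights the region just below $\gamma_0$ more heavily than the region just above it, the smooth average sits above this limit and relaxes down to it as $n\to\infty$.

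The main obstacle is precisely the inequality $G(\gamma_+)<G(\gamma_-)$, in which the two ingredients pull against each other: the density ratio $f_{\gamma_e}(\gamma_+)/f_{\gamma_e}(\gamma_-)=e^{-(\gamma_+-\gamma_-)/\Gamma_e}<1$ favors it, whereas the channel-dispersion Jacobian makes $\xi'(\gamma_-)>\xi'(\gamma_+)$ and opposes it. Making this rigorous amounts to controlling $G'/G=-\tfrac{1}{\Gamma_e}-\tfrac{\xi''}{\xi'}$; near $\gamma_0$ the requirement $G'<0$ reads $\tfrac{\xi''(\gamma_0)}{\xi'(\gamma_0)}>-\tfrac{1}{\Gamma_e}$, which holds once $\gamma_0$ is comparable to or larger than Eve's mean SNR $\Gamma_e$ — exactly the regime forced by a stringent secrecy constraint, since $\mathcal{O}_e\le\delta$ with small $\delta$ drives $R_e^*$, and hence $\gamma_0$, well above $\Gamma_e$. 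The far-tail pairs (large $x$) need a separate estimate, but there the Gaussian weight $e^{-x^2/2}$ dominates any algebraic growth of the Jacobian and cannot overturn the sign. Combining the near- and far-pair bounds yields $\partial\mathcal{O}_e/\partial n<0$, and the implicit-function step of the first paragraph then completes the proof.
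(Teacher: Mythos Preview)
Your reduction via implicit differentiation is correct, and writing $\partial\mathcal{O}_e/\partial n$ as an integral whose sign flips at $\gamma_0=2^{R_e}-1$ is the right diagnosis. The gap is the step you yourself flag as the ``main obstacle'': you never actually prove $G(\gamma_+)<G(\gamma_-)$. Your local condition $\xi''(\gamma_0)/\xi'(\gamma_0)>-1/\Gamma_e$ is parameter-dependent; you only argue it in the regime where $\gamma_0\gtrsim\Gamma_e$, and the far-tail claim that ``the Gaussian weight dominates any algebraic growth of the Jacobian'' is not a bound. Since the two factors in $G=f_{\gamma_e}/\xi'$ genuinely pull in opposite directions, the pointwise comparison you need is not automatic, and without it the sign of the symmetrized integral is undetermined. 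As written, the argument is a plausibility sketch for a restricted regime, not a proof of the lemma.

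The paper sidesteps exactly this difficulty by working not with the exact $Q$-function but with the standard piece-wise linear surrogate $\Xi(\gamma_e,n,R_e)$ for $Q\big((C_e-R_e)/\sqrt{V_e/n}\big)$. Under this approximation, $\mathcal{O}_e$ reduces to $1-\frac{\beta}{\theta_e}\int_{\tau_e^l}^{\tau_e^u}\mathcal{F}_{\gamma_e}(\gamma)\,d\gamma$ with $\beta=\sqrt{n}/(2\pi)$, and one computes $\partial^2\mathcal{O}_e/\partial\beta^2=\frac{\theta_e}{4\beta^3}\big[f_{\gamma_e}(\tau_e^l)-f_{\gamma_e}(\tau_e^u)\big]>0$ simply because the exponential density is decreasing. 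Hence $\partial\mathcal{O}_e/\partial\beta$ is increasing in $\beta$ and tends to $0$ as $\beta\to\infty$, so it is negative for all finite $\beta$. The monotonicity in $R_e$ is handled by a similar direct computation, and the implicit-function step is the same as yours. In short, the paper trades exactness for the linearized $Q$, which collapses your delicate symmetrization into a one-line PDF-monotonicity check; if you want to keep the exact $Q$, you must close the $G(\gamma_+)<G(\gamma_-)$ gap for all parameter values, not just the large-$\gamma_0$ regime.
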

\begin{proof}
	Please refer to Appendix \ref{appendix_lemma_opt_re_n}.
\end{proof}

{ Lemma \ref{lemma_opt_re_n} shows that increasing the blocklength is beneficial for decreasing the information leakage probability such that the required rate redundancy of the wiretap code can be lowered.
This result is perhaps counter-intuitive, which makes sense when one realizes that a larger blocklength will yield a larger decoding error probability for Eve if Eve's channel capacity falls below the rate redundancy.}
With Lemma \ref{lemma_opt_re_n}, the  monotonicity of $\mathcal{T}_{\rm A}(\eta)$ w.r.t. $n$ is uncovered, followed by the optimal $n^*$ that maximizes $\mathcal{T}_{\rm A}(\eta)$.
\begin{theorem}\label{theorem_opt_n}
$\mathcal{T}_{\rm A}(\eta)$ in \eqref{st_max1} increases with $n$ and is maximized at	$n^* = N$.
\end{theorem}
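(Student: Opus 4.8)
The plan is to hold $R_s$ fixed and track how $p_s$ in \eqref{pe} depends on $n$ through the two places where $n$ enters: explicitly via the dispersion term $\sqrt{V_b/n}$, and implicitly through the optimal redundancy $R_e^*$ from \eqref{opt_re_ad}. Substituting $R_t = R_s + R_e^*$ into \eqref{pe} and noting that $C_b$ and $V_b$ do not depend on $n$, I would rewrite the success probability as
\begin{equation}
p_s = 1 - Q\!\left(\frac{\left(C_b - R_s - R_e^*\right)\sqrt{n}}{\sqrt{V_b}}\right).
\end{equation}
Since $Q(\cdot)$ is strictly decreasing, $p_s$ — and hence $\mathcal{T}_{\rm A}(\eta)=R_sp_s$ for a fixed $R_s\ge 0$ — is increasing in $n$ if and only if the argument $z(n)\triangleq(C_b - R_s - R_e^*)\sqrt{n}/\sqrt{V_b}$ is increasing in $n$.

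The key step is to establish that $z(n)$ increases, and here the two effects of $n$ reinforce rather than compete. First, the reliability constraint \eqref{st_max_c1} together with the feasibility range in \eqref{st_max1} guarantees $C_b - R_s - R_e^*\ge 0$, so the numerator factor is non-negative. Second, Lemma \ref{lemma_opt_re_n} states that $R_e^*$ decreases with $n$, so $C_b - R_s - R_e^*$ is itself increasing in $n$ (with $C_b$ and $R_s$ held fixed), while $\sqrt{n}$ is positive and increasing. A product of two non-negative, non-decreasing functions of $n$ is non-decreasing, and strictly increasing once the factor $C_b - R_s - R_e^*$ is positive; dividing by the constant $\sqrt{V_b}$ does not affect monotonicity. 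Hence $z(n)$ increases with $n$, so $p_s$ and therefore $R_sp_s$ increase with $n$.

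Finally, I would lift this pointwise-in-$R_s$ result to the joint optimization in \eqref{st_max1}. Because $R_e^*$ decreases in $n$, the feasible interval $0\le R_s\le C_b - R_e^*$ only enlarges as $n$ grows, so any $R_s$ feasible at a smaller blocklength remains feasible at $n=N$; combined with $R_sp_s$ being increasing in $n$ for each such $R_s$, the maximum of $R_sp_s$ over $R_s$ is non-decreasing in $n$ and is thus attained at the largest admissible blocklength $n^*=N$ permitted by \eqref{st_max_c3}. The only delicate point is that $n$ couples $p_s$ to $R_e^*$, which a priori could make $R_t$ — and hence the monotonicity of $p_s$ — behave erratically; this is precisely what Lemma \ref{lemma_opt_re_n} resolves, and once it is invoked the remaining argument is an elementary monotonicity comparison at integer values of $n$, requiring no differentiation.
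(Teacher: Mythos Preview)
Your proof is correct and follows essentially the same logic as the paper: fix $R_s$, invoke Lemma~\ref{lemma_opt_re_n} to control the implicit dependence of $R_e^*$ on $n$, and conclude that $p_s$ (hence $R_sp_s$) increases with $n$. The paper reaches the same conclusion by differentiating $p_s$ in $n$ (treating $n$ as continuous) and reading off the sign from $\frac{dR_e^*}{dn}<0$, whereas you avoid calculus entirely by factoring the $Q$-argument as a product of two non-negative increasing functions; your additional remark that the feasible interval for $R_s$ only enlarges with $n$ is a nice touch the paper leaves implicit.
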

\begin{proof}
	Please refer to Appendix \ref{appendix_theorem_opt_n}.
\end{proof}

{ Theorem \ref{theorem_opt_n} reveals that exploiting a larger blocklength is beneficial for improving the secrecy throughput under given channel gains. 
This result is nontrivial in light of \cite{Farhat2018Secure} where there exists a critical value of the blocklength, instead of the maximal one, that can achieve the maximal secrecy throughput. The main reason behind the two different results lies in that, Bob's instantaneous CSI is available here and is adequately exploited, and the codeword rate will not exceed Bob's channel capacity under the on-off policy such that using a larger blocklength can always lower the decoding error probability for Bob. Combined with Lemma \ref{lemma_opt_re_n}, it can be seen that increasing the blocklength improves reliability and secrecy simultaneously, thus making the secrecy throughput higher. However, this can no longer be promised in \cite{Farhat2018Secure} where the instantaneous CSI of the main channel is unknown, and using a larger blocklength might degrade the reliability once the codeword rate exceeds Bob's channel capacity, just as implied in Lemma \ref{lemma_opt_re_n}.}
Revisiting \eqref{max_st_ad},
since $\mu^*$ in the lower limit of the integral decreases with $n$ (see \eqref{opt_mu_ad} where $R_e^*$ decreases with $n$), it is clear that the global optimal blocklength that maximizes $\mathcal{T}_{\rm A}$ is also $n^* = N$.
\subsubsection{Solving $R_s$}
Substituting the derived optimal $R_e^*$, $\mu^*$, and $n^*$ into \eqref{pe} yields the maximal $p_s$,
and then the optimal $R_s^*$ can be determined by solving the following problem:
\begin{subequations}\label{st_max_rs}
	\begin{align}
	\max_{R_s}& ~\mathcal{T}_{\rm A}(\eta)= R_s\left[1 - Q\left(\frac{C_b-R_s-R_e^*}{\sqrt{V_b/N}}\right)\right]\label{st_max_rs_c1}	\\
	~~{\rm s.t.}&~~0<R_s\le C_b - R_e^*.
	\end{align}
\end{subequations}
\begin{theorem}\label{theorem_opt_rs}
	$\mathcal{T}_{\rm A}(\eta)$ in \eqref{st_max_rs} is a concave function of $R_s$, and its maximal value is achieved at 
\begin{align}\label{opt_rs}
R_s^*
=\begin{cases}
C_b - R_e^*, &  \eta\le\frac{\gamma_b^{\circ}}{P_b},\\
R_s^{\circ}, & \rm otherwise,
\end{cases}
\end{align}
where $\gamma_b^{\circ}\in\left(\sqrt{\frac{1}{2}+\sqrt{\frac{1}{4}+\frac{\pi}{2N}}}-1,e^{\sqrt{\frac{\pi}{2N}}+R_e^*\ln 2}-1\right)$ is the unique root $\gamma_b>0$ that satisfies $ C_b-\sqrt{\frac{\pi V_b}{2N}}=R_e^*$, and $R_s^{\circ}$ is the unique zero-crossing $R_s<C_b - R_e^*$ of the derivative 
\begin{equation}\label{dTs}
\frac{d\mathcal{T}_{\rm A}(\eta)}{dR_s} = 1 - Q\left(\frac{C_b-R_s-R_e^*}{\sqrt{V_b/N}}\right)-\frac{R_s\sqrt{N}}{\sqrt{2\pi V_b}}e^{-\frac{\left(C_b-R_s-R_e^*\right)^2}{2V_b/N}}.
\end{equation}
\end{theorem}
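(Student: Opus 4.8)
The plan is to treat $\mathcal{T}_{\rm A}(\eta)=R_sp_s$ as a one-dimensional function of $R_s$ on the feasible interval $(0,C_b-R_e^*]$ and analyze it by elementary calculus, since $R_e^*$, $C_b$, $V_b$ and $N$ are all constants once $\eta$ is fixed. First I would write $p_s=1-Q(x)$ with $x\triangleq (C_b-R_s-R_e^*)\sqrt{N/V_b}$ and use $Q'(x)=-\tfrac{1}{\sqrt{2\pi}}e^{-x^2/2}$ together with $dx/dR_s=-\sqrt{N/V_b}$; this recovers the first derivative in \eqref{dTs}. Differentiating once more gives
\begin{equation}
\frac{d^2\mathcal{T}_{\rm A}(\eta)}{dR_s^2}=-\frac{1}{\sqrt{2\pi}}\sqrt{\frac{N}{V_b}}\,e^{-x^2/2}\left(2+\frac{N}{V_b}R_s\left(C_b-R_s-R_e^*\right)\right).
\end{equation}
On the feasible interval one has $R_s>0$ and $C_b-R_s-R_e^*\ge0$, so the product $R_s(C_b-R_s-R_e^*)$ is nonnegative, the bracketed factor is strictly positive, and the prefactor is strictly negative. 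Hence the second derivative is negative throughout $(0,C_b-R_e^*]$, which establishes strict concavity.

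Next I would locate the maximizer using the endpoints. Evaluating \eqref{dTs} as $R_s\to0^+$ gives $1-Q\big((C_b-R_e^*)\sqrt{N/V_b}\big)$, which is positive because the on-off policy guarantees $C_b>R_e^*$ on the transmission region, so the $Q$-argument is positive and the $Q$-value lies below $1/2$. At the right endpoint $R_s=C_b-R_e^*$ the $Q$-argument vanishes, $Q(0)=1/2$, and \eqref{dTs} reduces to $\tfrac12-\tfrac{(C_b-R_e^*)\sqrt{N}}{\sqrt{2\pi V_b}}$. By concavity the derivative is strictly decreasing, so: if this endpoint derivative is nonnegative, the function increases across the whole interval and the maximizer is the corner $R_s^*=C_b-R_e^*$; otherwise the derivative changes sign exactly once and the maximizer is the unique interior zero $R_s^{\circ}$ of \eqref{dTs}. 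A short rearrangement shows the endpoint derivative is nonnegative precisely when $C_b-R_e^*\le\sqrt{\pi V_b/(2N)}$, i.e.\ when $\phi(\gamma_b)\triangleq C_b-\sqrt{\pi V_b/(2N)}\le R_e^*$.

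It then remains to translate $\phi(\gamma_b)\le R_e^*$ into the threshold $\gamma_b\le\gamma_b^{\circ}$ and to bound $\gamma_b^{\circ}$, which I expect to be the main obstacle. Defining $\gamma_b^{\circ}$ as a root of $\phi(\gamma_b)=R_e^*$, I would first show that $\phi$ is unimodal in $\gamma_b>0$: a direct differentiation shows $\phi'(\gamma_b)$ has the sign of $1-\sqrt{\tfrac{\pi}{2N}}\,(1+\gamma_b)^{-2}/\sqrt{1-(1+\gamma_b)^{-2}}$, which under the substitution $z=(1+\gamma_b)^{-2}$ reduces to the sign of $1-z-\tfrac{\pi}{2N}z^2$; this quantity is monotonic in $z$ (hence in $\gamma_b$) and passes from negative at small $\gamma_b$ to positive at large $\gamma_b$ with a single sign change, so $\phi$ decreases to a unique valley and then increases to $+\infty$, with $\phi(0)=0$. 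Consequently $\phi$ upcrosses the positive level $R_e^*$ exactly once, giving a unique $\gamma_b^{\circ}$, and $\phi(\gamma_b)\le R_e^*\iff\gamma_b\le\gamma_b^{\circ}$ on the transmission region, which yields \eqref{opt_rs}.

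For the bounds I would proceed as follows. Solving $1-z-\tfrac{\pi}{2N}z^2=0$ for $z\in(0,1)$ and inverting $z=(1+\gamma_b)^{-2}$ shows that the valley of $\phi$ sits exactly at $\gamma_b=\sqrt{\tfrac12+\sqrt{\tfrac14+\tfrac{\pi}{2N}}}-1$; since $\phi\le\phi(0)=0<R_e^*$ there, the root $\gamma_b^{\circ}$ must lie strictly to the right, giving the lower bound. For the upper bound I would use $V_b\le\log_2^2 e$ to obtain $\phi(\gamma_b)\ge C_b-\log_2 e\sqrt{\pi/(2N)}$, so that $\phi(\gamma_b^{\circ})=R_e^*$ forces $C_b\le R_e^*+\log_2 e\sqrt{\pi/(2N)}$; exponentiating in base two and using $\ln2\cdot\log_2 e=1$ gives $1+\gamma_b^{\circ}<e^{\sqrt{\pi/(2N)}+R_e^*\ln2}$. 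Finally, uniqueness of $R_s^{\circ}$ is immediate from strict concavity, since the strictly decreasing derivative vanishes at most once. The routine parts are the two differentiations; the delicate part is the unimodality of $\phi$ together with matching its valley location to the stated lower bound.
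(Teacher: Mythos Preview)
Your proposal is correct and follows essentially the same route as the paper: concavity via the second derivative, endpoint analysis yielding the switching condition $C_b-R_e^*\le\sqrt{\pi V_b/(2N)}$, the unimodal shape of $\psi(\gamma_b)=C_b-\sqrt{\pi V_b/(2N)}$ to identify a unique $\gamma_b^{\circ}$, and the bound $V_b<\log_2^2 e$ for the upper estimate. Your treatment is in fact more explicit than the paper's---you actually compute the second derivative and derive the valley location of $\psi$ (which the paper simply states), so the ``delicate part'' you flag is handled, and the argument goes through as written.
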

\begin{proof}
	Please refer to Appendix \ref{appendix_theorem_opt_rs}.
\end{proof}

Theorem \ref{theorem_opt_rs} presents an optimal secrecy rate $R_s^*$ that differs from the one for  infinite blocklength with $N\rightarrow\infty$, where in the latter employing the maximal achievable secrecy rate $R_s^* = C_b -R_e^*$ is always optimal for secrecy throughput improvement. 
The fundamental reason behind such difference lies in the decoding failure caused by finite blocklength. Specifically,  when the quality of the main channel is poor (i.e., a small $\eta$) or when a large rate redundancy $R_e^*$ is required, e.g., due to a high average SNR of Eve or a {stringent} secrecy requirement, the successful decoding probability $p_s$ is initially small and decreases slowly with $R_s$. 
In this case, the secrecy throughput {improvement} is mainly bottlenecked by $R_s$, and hence it is necessary to choose the maximal secrecy rate $R_s^* = C_b -R_e^*$. Otherwise, $p_s$ is initially large but drops rapidly with $R_s$, thus dramatically degrading the secrecy throughput. Therefore, a relatively small $R_s$ is supposed to be chosen to strike a good balance between the decoding and throughput performance.

The optimal secrecy rate $R_s^{*}\le C_b - R_e^*$ in \eqref{opt_rs} can be obtained efficiently using the Newton's method, despite its implicit form. The following corollaries further give a closed-form {asymptotically tight} lower bound $R_s^{L}$ on $R_s^{*}$ and provide useful insights into the behavior of $R_s^{*}$.
\begin{corollary}\label{corollary_ad1}
	The optimal secrecy rate $R_s^*$ in \eqref{opt_rs} satisfies
	\begin{equation}\label{lower_bound}
	R_s^{*}\ge R_s^{L}\triangleq C_b - R_e^* - \sqrt{\frac{2V_b}{N}\ln\left(\frac{1}{2}+\frac{C_b-R_e^*}{\sqrt{2\pi V_b/N}}\right)}.
	\end{equation}
\end{corollary}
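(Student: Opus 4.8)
The plan is to exploit the concavity established in Theorem \ref{theorem_opt_rs} to reduce the bound to a single derivative-sign check. Since $\mathcal{T}_{\rm A}(\eta)$ is concave in $R_s$, its derivative $\frac{d\mathcal{T}_{\rm A}(\eta)}{dR_s}$ in \eqref{dTs} is non-increasing in $R_s$, and $R_s^*$ is precisely the point at which this derivative changes sign (or the upper endpoint $C_b-R_e^*$ when it stays positive). Consequently, to prove $R_s^*\ge R_s^L$ it suffices to verify that the derivative is nonnegative at the candidate point $R_s=R_s^L$: if $\frac{d\mathcal{T}_{\rm A}(\eta)}{dR_s}\big|_{R_s=R_s^L}\ge 0$, then by monotonicity the zero-crossing $R_s^{\circ}$ cannot lie below $R_s^L$, and in the boundary case $R_s^*=C_b-R_e^*\ge R_s^L$ holds trivially (the subtracted square-root term in \eqref{lower_bound} is nonnegative whenever the logarithm's argument is at least one, which is guaranteed in the interior regime $\eta>\gamma_b^{\circ}/P_b$ by the defining relation $C_b-\sqrt{\pi V_b/(2N)}=R_e^*$ for $\gamma_b^{\circ}$).

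First I would substitute $R_s=R_s^L$ into \eqref{dTs}. The point of the particular form of $R_s^L$ is that it makes the argument of the $Q$-function collapse to a clean value: writing $x_L\triangleq\frac{C_b-R_s^L-R_e^*}{\sqrt{V_b/N}}$, the definition in \eqref{lower_bound} yields $x_L=\sqrt{2\ln\!\left(\tfrac12+\tfrac{C_b-R_e^*}{\sqrt{2\pi V_b/N}}\right)}$, so that $e^{-x_L^2/2}=\left(\tfrac12+\tfrac{C_b-R_e^*}{\sqrt{2\pi V_b/N}}\right)^{-1}$. Introducing the shorthand $\beta\triangleq\frac{C_b-R_e^*}{\sqrt{2\pi V_b/N}}$, the exponential becomes $e^{-x_L^2/2}=\frac{2}{1+2\beta}$. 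Next I would bound the linear term in \eqref{dTs}: since $R_s^L\le C_b-R_e^*$, we have $\frac{R_s^L\sqrt N}{\sqrt{2\pi V_b}}\le\beta$, hence the subtracted term is at most $\beta\,e^{-x_L^2/2}=\frac{2\beta}{1+2\beta}$. Therefore $\frac{d\mathcal{T}_{\rm A}(\eta)}{dR_s}\big|_{R_s=R_s^L}\ge 1-Q(x_L)-\frac{2\beta}{1+2\beta}=\frac{1}{1+2\beta}-Q(x_L)$.

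The key step is then to recognize that $\frac{1}{1+2\beta}=\frac12 e^{-x_L^2/2}$, so the remaining inequality $\frac{d\mathcal{T}_{\rm A}(\eta)}{dR_s}\big|_{R_s=R_s^L}\ge 0$ is equivalent to the classical Chernoff-type bound $Q(x_L)\le\frac12 e^{-x_L^2/2}$, which holds for all $x_L\ge 0$. Since $x_L\ge 0$ in the interior regime (the logarithm's argument exceeds one), this closes the argument and gives $R_s^*\ge R_s^L$. I expect the main (and only real) obstacle to be reverse-engineering the estimate: the value of $R_s^L$ is not guessed but constructed so that substitution into \eqref{dTs}, combined with the monotone bound $R_s^L\le C_b-R_e^*$, reduces the derivative sign exactly to the Chernoff inequality; identifying that this is the tight elementary bound on $Q(\cdot)$ that makes the chain nonnegative is where the work lies, whereas the concavity reduction and the algebraic simplifications are routine.
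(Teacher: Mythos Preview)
Your proposal is correct and follows essentially the same approach as the paper: both arguments rely on the concavity from Theorem~\ref{theorem_opt_rs}, the Chernoff-type bound $Q(x)\le\tfrac12 e^{-x^2/2}$, and the trivial estimate $R_s\le C_b-R_e^*$ to lower-bound the derivative in \eqref{dTs}. The only cosmetic difference is that the paper constructs $R_s^L$ by setting the resulting derivative lower bound to zero, whereas you take $R_s^L$ as given and verify nonnegativity at that point; these are two equivalent presentations of the same computation.
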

\begin{proof}
	The result follows by finding a lower bound on $\frac{d\mathcal{T}_{\rm A}(\eta)}{dR_s}$ in \eqref{dTs} applying the inequalities $Q(x)\le\frac{1}{2}e^{-x^2/2}$ and $R_s\le C_b-R_e^*$ and then setting the resultant lower bound to zero.
\end{proof}

The term $ \sqrt{\frac{2V_b}{N}\ln\left(\frac{1}{2}+\frac{C_b-R_e^*}{\sqrt{2\pi V_b/N}}\right)}$ in \eqref{lower_bound} is interpreted as the secrecy rate loss arisen from finite blocklength. This term vanishes as $N\rightarrow\infty$ or $R_e^*\rightarrow C_b  -\sqrt{\frac{\pi V_b}{2N}}$, and accordingly $R_s^{*}$ approaches $C_e-R_e^*$. In this sense, the lower bound $R_s^{L}$ can be employed as a computational convenient alternative to the optimal $R_s^*$, particularly for the large blocklength scenarios. 

\begin{corollary}\label{corollary_ad2}
	The optimal secrecy rate $R_s^*$ monotonically increases with the channel gain $\eta$.
\end{corollary}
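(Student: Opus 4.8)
The claim is that $R_s^*$ increases monotonically with $\eta$. The plan is to exploit the characterization of $R_s^*$ from Theorem 3 and treat the two branches of \eqref{opt_rs} separately, then verify they join monotonically at the threshold $\eta = \gamma_b^{\circ}/P_b$.

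For the first branch, where $\eta \le \gamma_b^{\circ}/P_b$ and $R_s^* = C_b - R_e^*$, monotonicity is immediate: since $R_e^*$ is independent of $\eta$ (established right after \eqref{opt_re_ad}) and $C_b = \log_2(1 + P_b\eta)$ is strictly increasing in $\eta$, the difference $C_b - R_e^*$ clearly increases with $\eta$. The interesting branch is the second one, where $R_s^* = R_s^{\circ}$ is the implicit zero-crossing of $\frac{d\mathcal{T}_{\rm A}(\eta)}{dR_s}$ in \eqref{dTs}. Here I would use implicit differentiation: writing the stationarity condition as $g(R_s,\eta) = 0$ where $g$ denotes the right-hand side of \eqref{dTs}, the implicit function theorem gives
\begin{equation}\label{implicit_diff}
\frac{dR_s^{\circ}}{d\eta} = -\frac{\partial g/\partial\eta}{\partial g/\partial R_s}.
\end{equation}
Since Theorem 3 already establishes that $\mathcal{T}_{\rm A}(\eta)$ is concave in $R_s$, the denominator $\partial g/\partial R_s = \frac{d^2\mathcal{T}_{\rm A}(\eta)}{dR_s^2}$ is negative at the interior maximizer. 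The sign of $\frac{dR_s^{\circ}}{d\eta}$ therefore reduces to the sign of $\partial g/\partial\eta$.

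The main obstacle is evaluating $\partial g/\partial\eta$, because $\eta$ enters \eqref{dTs} through both $C_b$ and $V_b$ simultaneously, and $V_b = (1 - (1+\gamma_b)^{-2})\log_2^2 e$ depends on $\eta$ via $\gamma_b = P_b\eta$. I would compute $\frac{\partial C_b}{\partial\eta} = \frac{P_b \log_2 e}{1 + P_b\eta} > 0$ and $\frac{\partial V_b}{\partial\eta} = \frac{2 P_b \log_2^2 e}{(1+P_b\eta)^3} > 0$, then differentiate $g$ term by term, tracking how the argument $\frac{C_b - R_s - R_e^*}{\sqrt{V_b/N}}$ of both the $Q$-function and the Gaussian exponential responds. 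The hope is that the dominant effect is through $C_b$: increasing $\eta$ raises $C_b$, which raises the argument of $Q(\cdot)$, thereby lowering $Q$ and raising $g$ (making $\partial g/\partial\eta > 0$), so that $\frac{dR_s^{\circ}}{d\eta} > 0$. The subtlety is that the $V_b$ terms contribute with mixed signs and must be shown not to overturn this; since $V_b$ is bounded and its derivative decays like $(1+P_b\eta)^{-3}$ while $C_b$ grows logarithmically, I expect the capacity-driven term to dominate, but a careful bound is needed here rather than a one-line argument.

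Finally, I would confirm continuity and monotonicity across the switch point $\eta = \gamma_b^{\circ}/P_b$. At this threshold, the definition of $\gamma_b^{\circ}$ via $C_b - \sqrt{\pi V_b/(2N)} = R_e^*$ ensures that the unconstrained stationary point $R_s^{\circ}$ coincides with the boundary value $C_b - R_e^*$ (one can check that $g$ evaluated at $R_s = C_b - R_e^*$ changes sign precisely at $\gamma_b^{\circ}$), so the two branches agree at the junction and the combined $R_s^*(\eta)$ is continuous. Since each branch is separately increasing and they meet continuously, the piecewise function is monotonically increasing overall, completing the argument.
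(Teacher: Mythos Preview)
Your overall strategy matches the paper's: use implicit differentiation on the stationarity condition and show that $\partial g/\partial\eta>0$ combined with concavity in $R_s$ yields $\frac{dR_s^*}{d\eta}>0$. Where you anticipate needing ``a careful bound'' to control the mixed-sign $V_b$ contributions, the paper sidesteps this entirely with one structural observation: the quantity
\[
u(\eta)\;=\;\frac{C_b-R_s-R_e^*}{\sqrt{V_b}}
\]
is itself monotonically increasing in $\eta$ for fixed $R_s$. This is a direct calculation: writing $y=1+\gamma_b$, one has $u\propto (\ln y - c)/\sqrt{1-y^{-2}}$ for a nonnegative constant $c$, and differentiating gives a numerator proportional to $y^2-1-\ln y+c$, which is strictly positive for $y>1$. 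Once $u$ is known to increase with $\eta$, every $\eta$-dependent piece of $g$ in \eqref{dTs} moves in the favorable direction: $Q(\sqrt{N}\,u)$ decreases, $e^{-Nu^2/2}$ decreases (since $u\ge0$ in the feasible region), and the remaining coefficient $1/\sqrt{V_b}$ also decreases because $V_b$ is increasing in $\eta$. Hence $\partial g/\partial\eta>0$ with no competition between terms, and the ``careful bound'' you flagged is unnecessary.

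Your explicit treatment of the boundary branch $R_s^*=C_b-R_e^*$ and the continuity check at $\eta=\gamma_b^{\circ}/P_b$ is more thorough than the paper's terse proof, which addresses only the interior stationary case; that part of your plan is a welcome addition rather than a detour.
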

\begin{proof}
	It is proved that $\frac{C_b-R_s-R_e^*}{\sqrt{V_b}}$ in \eqref{dTs} increases with $\eta$ such that $\frac{d\mathcal{T}_{\rm A}(\eta)}{dR_s}$ increases with $\eta$. Then, using the derivative rule for implicit functions with $\frac{d\mathcal{T}_{\rm A}(\eta)}{dR_s^*}=0$ reaches $\frac{dR_s^*}{d\eta}>0$.
\end{proof}

{ Fig. \ref{ADA_ST_RS} depicts  secrecy throughput $\mathcal{T}_{\rm A}(\eta)$ versus secrecy rate $R_s$ for different blocklength $N$ and channel gain $\eta$.
The concavity of $\mathcal{T}_{\rm A}(\eta)$ on $R_s$ given by Theorem \ref{theorem_opt_rs} is well verified. Specifically, $\mathcal{T}_{\rm A}(\eta)$ first increases and then decreases with $R_s$, and there exists an optimal $R_s^*$ that maximizes $\mathcal{T}_{\rm A}(\eta)$.
It is also found that $\mathcal{T}_{\rm A}(\eta)$ almost linearly increases with $R_s$ at first, since the throughput loss due to decoding error is negligible.
	Note that the curves in the figure are cut in different points which represent different values of the maximal achievable secrecy rate $R_s^{\max}$ for different $N$ and $\eta$, and it is obvious that $R_s^{\max}$ increases with $N$ and $\eta$. As $\eta$ grows, 
	$\mathcal{T}_{\rm A}(\eta)$ improves significantly and the corresponding optimal $R_s^*$ increases, which validates Corollary \ref{corollary_ad2}. The underlying reason is that, when the main channel quality improves, choosing a larger $R_s$ contributes more to improving $\mathcal{T}_{\rm A}(\eta)$ compared with increasing the successful decoding probability $p_s$ (by lowering $R_s$).
	In addition, as proved in Theorem \ref{theorem_opt_n},  $\mathcal{T}_{\rm A}(\eta)$ increases with $N$. It is also proved that the optimal $R_s^*$ increases with $N$ as $\eta\rightarrow\infty$. However, it is no longer true when $\eta$ is too small, e.g., $\eta=3$ dB. This is because, for a low channel quality, the decoding performance becomes a key restricting factor on throughput improvement, and hence $R_s$ should be decreased to ensure a large $p_s$ as $N$ increases. 
	Moreover, the secrecy throughput obtained with the lower bound $R_s^{L}$ in Corollary \ref{corollary_ad1} approaches closely the optimal one particularly when $N$ is sufficiently large, which demonstrates the usefulness of the lower bound. 
	}

\begin{figure}[!t]
	\centering
	\includegraphics[width = 3.7in]{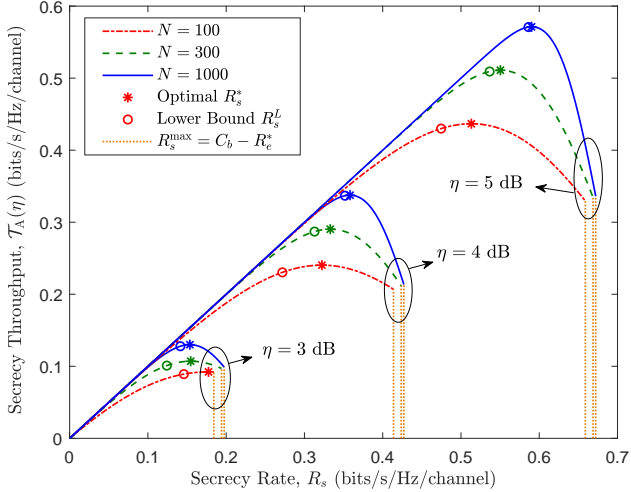}
	\caption{$\mathcal{T}_{\rm A}(\eta)$ vs. $R_s$ for different $N$ and $\eta$, with $P_b = 0$ dB, $\Gamma_e = 0$ dB, and $\delta = 0.2$.} 
	\label{ADA_ST_RS}
\end{figure}

\subsection{Non-Adaptive Optimization Scheme}

This section devises a non-adaptive optimization scheme where the parameters $\mu$, $R_s$, $R_e$, and $n$ are designed based on the statistical CSI of the main channel and remain unchanged during the transmission period. Such a non-adaptive scheme can be {computed} off-line, which significantly lowers the complexity compared with an adaptive one.

Since all the parameters are independent of the channel gain $\eta$, the problem of maximizing the secrecy throughput in \eqref{def_st} can be recast as follows:
\begin{equation}\label{st_max_non}
\max_{\mu, R_e, R_s, n} ~\mathcal{T}_{\rm N}=R_s\bar p_s
~~{\rm s.t.} 
~~\eqref{st_max_c1}-\eqref{st_max_c3},
\end{equation}
where $\bar p_s=\int_{P_b\mu}^{\infty}p_sf_{\gamma_b}(\gamma)d\gamma$ denotes the average successful decoding probability.

The above problem can be {handled via} similar steps for its adaptive counterpart in Sec. \ref{ada_single}. To begin with, in order to increase $p_s$ for a given $R_s$, a minimal rate redundancy $R_e$ should be chosen while satisfying the secrecy constraint $\mathcal{O}_e\le\delta$. Hence, the optimal $R_e^*$ is given in \eqref{opt_re_ad}. It can be inferred from \eqref{st_max_non} that a smaller transmission threshold $\mu$ can produce a larger $\mathcal{T}_{\rm N}$. Nonetheless,  $C_b\ge\log_2(1+P_b\mu)\ge R_s+R_e^*$ must be ensured, since otherwise there would always exist a transmission initiated when $\eta>\mu$ while violating the reliability constraint \eqref{st_max_c1}. Consequently, the optimal $\mu^*$ for a fixed $R_s$ is given by
\begin{equation}\label{opt_mu_non}
\mu^* = \frac{2^{R_s+R_e^*}-1}{P_b}.
\end{equation}
Note that in order to support a constant secrecy rate $R_s$, the optimal on-off threshold $\mu^*$ for the non-adaptive scheme is generally larger than that of the adaptive one as given in \eqref{opt_mu_ad}.
On the other hand, the optimal $\mu^*$ monotonically decreases with $\delta$ and $n$, which is similar to the adaptive case. That is to say, the transmission condition can be relaxed when facing a looser secrecy requirement or using a larger blocklength.

Substituting $R_e^*$ and $\mu^*$ into $\bar p_s$ and invoking the {approximation} of $Q$-function in \eqref{app_q} yields
\begin{align}\label{st_non}
\bar p_s& =\int_{P_b\mu^*}^{\infty}\left[1- \Xi(\gamma_b,n,R_s+R_e^*)\right]f_{\gamma_b}(\gamma)d\gamma\nonumber\\
& \stackrel{\mathrm{(a)}}= 1-\mathcal{F}_{\gamma_b}(\theta_b^2)\int_{\theta_b^2}^{\tau^u_b}\left(\frac{1}{2}-\frac{\beta}{\theta_b}(\gamma - \theta_b^2)\right)f_{\gamma_b}(\gamma)d\gamma\nonumber\\
&\stackrel{\mathrm{(b)}}= 1-\frac{1}{2}\mathcal{F}_{\gamma_b}(\theta_b^2)-\frac{\beta}{\theta_b}\int_{\theta_b^2}^{\tau^u_b}\mathcal{F}_{\gamma_b}(\gamma)d\gamma,
\end{align}
where $\mathrm{(a)}$ is due to $\theta_b = \sqrt{P_b\mu^*}=\sqrt{2^{R_s+R_e^*}-1}$, $\beta= \frac{\sqrt{n}}{2\pi}$, and  $\tau^u_b=\theta_b^2+\frac{\theta_b}{2\beta}$, and $\mathrm{(b)}$ stems from the {use of} partial integration.
 With \eqref{st_non}, the problem of maximizing $\mathcal{T}_{\rm N}$ over $n$ and $R_s$ can be equivalently transformed as below:
 \begin{subequations}\label{st_max_re}
 \begin{align}
 \max_{\beta, \theta_b} &~~\mathcal{T}_{\rm N}=\left[\log_2(1+\theta_b^2)-R_e^*\right]{\bar p_s}\\
 ~~{\rm s.t.} &
 ~~\frac{1}{2\pi}\le \beta \le \frac{\sqrt{N}}{2\pi}, ~ \theta_b>\sqrt{2^{R_e^*}-1}.
 \end{align}
 \end{subequations}
\begin{theorem}\label{theorem_opt_n_non}
 $\mathcal{T}_{\rm N}$ in \eqref{st_max_re} is a monotonically increasing function of $\beta$ or $n$.
\end{theorem}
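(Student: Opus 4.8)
The plan is to hold $\theta_b$ fixed and show that each of the two factors in $\mathcal{T}_{\rm N}=\left[\log_2(1+\theta_b^2)-R_e^*\right]\bar p_s = R_s\,\bar p_s$ is nonnegative and non-decreasing in $\beta$; because $\beta=\sqrt{n}/(2\pi)$ is strictly increasing in $n$, monotonicity in $\beta$ is equivalent to monotonicity in $n$. The point to flag at the outset is that fixing $\theta_b$ does \emph{not} fix $R_s$: the substitution $\theta_b=\sqrt{2^{R_s+R_e^*}-1}$ only pins the sum $R_s+R_e^*=\log_2(1+\theta_b^2)$, while $R_e^*$ is itself a function of $n$ (hence of $\beta$) through the secrecy constraint $\mathcal{O}_e\le\delta$. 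Differentiating the product with respect to $\beta$ therefore yields two terms, $\partial\mathcal{T}_{\rm N}/\partial\beta = -(dR_e^*/d\beta)\,\bar p_s + R_s\,\partial\bar p_s/\partial\beta$, and I would argue both are nonnegative.

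The first factor is handled by Lemma \ref{lemma_opt_re_n}: since $R_e^*$ decreases with $n$, we have $dR_e^*/d\beta\le 0$, so $R_s=\log_2(1+\theta_b^2)-R_e^*$ is non-decreasing in $\beta$ and the contribution $-(dR_e^*/d\beta)\,\bar p_s$ is nonnegative. The feasibility constraint $\theta_b>\sqrt{2^{R_e^*}-1}$ guarantees $R_s>0$ throughout, which I will need for strictness of the second term.

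For the second factor I would show $\partial\bar p_s/\partial\beta>0$ directly from \eqref{st_non}. With $\theta_b$ fixed the first two terms of $\bar p_s$ are constant, so only $g(\beta)\triangleq\frac{\beta}{\theta_b}\int_{\theta_b^2}^{\tau^u_b}\mathcal{F}_{\gamma_b}(\gamma)d\gamma$ varies, where $\tau^u_b=\theta_b^2+\theta_b/(2\beta)$. Differentiating, and using $d\tau^u_b/d\beta=-\theta_b/(2\beta^2)$ together with the identity $1/(2\beta)=(\tau^u_b-\theta_b^2)/\theta_b$, the derivative simplifies to
\[
\frac{\partial\bar p_s}{\partial\beta}=-g'(\beta)=\frac{1}{\theta_b}\int_{\theta_b^2}^{\tau^u_b}\left[\mathcal{F}_{\gamma_b}(\tau^u_b)-\mathcal{F}_{\gamma_b}(\gamma)\right]d\gamma.
\]
The integrand is nonnegative because $\mathcal{F}_{\gamma_b}$ is a CDF and hence non-decreasing, and is strictly positive on a set of positive measure, so $\partial\bar p_s/\partial\beta>0$. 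Since $R_s>0$, the second term is strictly positive, and adding the nonnegative first term gives $\partial\mathcal{T}_{\rm N}/\partial\beta>0$, which proves the claim; as in Theorem \ref{theorem_opt_n}, this forces the optimal blocklength to the boundary $n^*=N$.

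The main obstacle is bookkeeping rather than a hard inequality: one must correctly carry the implicit $\beta$-dependence of $R_e^*$ so as not to discard the first term, and then recognize that after the substitutions the $\bar p_s$-derivative reduces to the elementary fact that the average of a monotone CDF over an interval is at most its value at the right endpoint. Notably this last step uses only monotonicity of $\mathcal{F}_{\gamma_b}$, not the explicit exponential form, so it is robust; the exponential law enters only to certify that $\mathcal{F}_{\gamma_b}$ strictly increases on a set of positive measure and hence that the inequality is strict.
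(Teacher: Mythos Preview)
Your proof is correct and essentially identical to the paper's: both differentiate $\mathcal{T}_{\rm N}=\left[\log_2(1+\theta_b^2)-R_e^*\right]\bar p_s$ with $\theta_b$ fixed, invoke Lemma~\ref{lemma_opt_re_n} for the $-\frac{dR_e^*}{d\beta}\bar p_s$ term, and establish $\frac{d\bar p_s}{d\beta}=\frac{1}{\theta_b}\int_{\theta_b^2}^{\tau^u_b}\left[\mathcal{F}_{\gamma_b}(\tau^u_b)-\mathcal{F}_{\gamma_b}(\gamma)\right]d\gamma>0$ from monotonicity of the CDF. Your observation that the argument depends only on monotonicity of $\mathcal{F}_{\gamma_b}$, not its exponential form, is also explicitly noted in the paper.
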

\begin{proof}
The result follows by proving that 
\begin{align}
\frac{d\mathcal{T}_{\rm N}}{d\beta}&=
-\frac{dR_e^*}{d\beta}\bar p_s+\left[\log_2(1+\theta_b^2)-R_e^*\right]\frac{d\bar p_s}{d\beta}\nonumber\\
& \stackrel{\mathrm{(a)}}>\left[\log_2(1+\theta_b^2)-R_e^*\right]\frac{d\bar p_s}{d\beta} \stackrel{\mathrm{(b)}}>0,
\end{align}
where $\mathrm{(a)}$ is due to $\frac{dR_e^*}{dn}<0$ from \eqref{dRe}, and $\mathrm{(b)}$ follows from $\frac{d\bar p_s}{d\beta}=\frac{1}{\theta_b}\int_{\theta_b^2}^{\tau^u_b}\left[\mathcal{F}
_{\gamma_b}(\tau^u_b)-\mathcal{F}_{\gamma_b}(\gamma)\right]d\gamma>0$ as $\mathcal{F}_{\gamma_b}(\gamma)$ is an increasing function of $\gamma$.
\end{proof}

Theorem \ref{theorem_opt_n_non} suggests that Alice should use the maximal blocklength to maximize the secrecy throughput for the non-adaptive scheme, regardless of other parameters, i.e., the globally optimal blocklength is $n^* = N$. More importantly, this conclusion {holds} for any distribution of $\gamma_b$.

Substituting the CDF $\mathcal{F}_{\gamma_b}(\gamma) = 1 - e^{-{\gamma}/{\Gamma_b}}$ into \eqref{st_non} yields
\begin{equation}\label{st_theta}
\mathcal{T}_{\rm N} =\frac{1}{2} \left[\log_2(1+\theta_b^2)-R_e^*\right]\left[1+Y(\theta_b)\right]e^{-\frac{\theta_b^2}{\Gamma_b}},
\end{equation} 
where $Y(\theta_b) = \frac{2\beta\Gamma_b}{\theta_b}(1-e^{-\frac{\theta_b}{2\beta\Gamma_b}})>0$. 
The optimal $\theta_b^*$ that maximizes $\mathcal{T}_{\rm N}$ is provided below.
\begin{theorem}\label{theorem_opt_rs_non}
 $\mathcal{T}_{\rm N}$ in \eqref{st_theta} is  first-increasing-then-decreasing w.r.t. $\theta_b$; the optimal $\theta_b^*$ maximizing $\mathcal{T}_{\rm N}$ is the unique root $\theta_b>\sqrt{2^{R_e^*}-1}$ of $G(\theta_b)=0$, where $G(\theta_b)$ is a decreasing function of $\theta_b$:
\begin{equation}\label{G}
G(\theta_b)=\frac{1+Y(\theta_b)}{\ln2}-\left[\log_2(1+\theta_b^2)-R_e^*\right]\frac{1+\theta_b^2}{\theta_b}g(\theta_b),
\end{equation}
with $g(\theta_b) = \left(\frac{1}{2\theta_b}+\frac{1}{4\beta\Gamma_b}+\frac{\theta_b}{\Gamma_b}\right)Y(\theta_b)+\frac{\theta_b}{\Gamma_b}-\frac{1}{2\theta_b}$.
\end{theorem}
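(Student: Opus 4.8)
The plan is to show that $d\mathcal{T}_{\rm N}/d\theta_b$ factors into a strictly positive prefactor times $G(\theta_b)$, so that the sign of the derivative is governed entirely by $G$, and then to prove that $G$ is strictly decreasing with exactly one sign change. First I would differentiate \eqref{st_theta}. Writing $\mathcal{T}_{\rm N}=\frac12 L\,(1+Y)\,E$ with $L\triangleq\log_2(1+\theta_b^2)-R_e^*$ and $E\triangleq e^{-\theta_b^2/\Gamma_b}$, the product rule gives $\frac{d\mathcal{T}_{\rm N}}{d\theta_b}=\frac12 E\big[L'(1+Y)+L(Y'-\frac{2\theta_b}{\Gamma_b}(1+Y))\big]$. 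The key algebraic observation is that the bracket $Y'-\frac{2\theta_b}{\Gamma_b}(1+Y)$ equals $-2g(\theta_b)$, which is exactly where the form of $g$ in \eqref{G} originates; I would verify this by substituting $Y'=\frac{1}{\theta_b}-\frac{Y}{2\beta\Gamma_b}-\frac{Y}{\theta_b}$, obtained by differentiating $Y=\frac{2\beta\Gamma_b}{\theta_b}(1-e^{-\theta_b/(2\beta\Gamma_b)})$. Using $L'=\frac{2\theta_b}{(1+\theta_b^2)\ln2}$ and pulling out the common factor $\frac{2\theta_b}{1+\theta_b^2}$ then yields the clean identity
\[
\frac{d\mathcal{T}_{\rm N}}{d\theta_b}=\frac{\theta_b E}{1+\theta_b^2}\,G(\theta_b),
\]
with $G$ precisely as stated. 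Since $\frac{\theta_b E}{1+\theta_b^2}>0$ on the domain, $\mathrm{sign}(d\mathcal{T}_{\rm N}/d\theta_b)=\mathrm{sign}(G)$.

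Next I would pin down the endpoints of $G$. As $\theta_b\downarrow\sqrt{2^{R_e^*}-1}$ we have $L\to0^+$, so the second term of $G$ vanishes and $G\to\frac{1+Y}{\ln2}>0$; as $\theta_b\to\infty$, $Y\to0$ while $L\,\frac{1+\theta_b^2}{\theta_b}g$ grows like $\theta_b^2\log\theta_b$ (because $g\sim\theta_b/\Gamma_b$), so $G\to-\infty$. Consequently, once $G$ is shown to be strictly decreasing, the intermediate value theorem delivers a unique root $\theta_b^*>\sqrt{2^{R_e^*}-1}$, with $G>0$ (hence $\mathcal{T}_{\rm N}$ increasing) for $\theta_b<\theta_b^*$ and $G<0$ (hence $\mathcal{T}_{\rm N}$ decreasing) for $\theta_b>\theta_b^*$. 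This is exactly the claimed first-increasing-then-decreasing shape with maximizer $\theta_b^*$, so the whole theorem reduces to the monotonicity of $G$.

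The crux is therefore proving $G'(\theta_b)<0$. Differentiating $G=\frac{1+Y}{\ln2}-L\,W\,g$ with $W\triangleq\frac{1+\theta_b^2}{\theta_b}$, and exploiting the fortunate simplification $L'W=\frac{2}{\ln2}$, reduces $G'$ to $\frac{Y'-2g}{\ln2}-L\,W'g-L\,W\,g'$. The first term is manifestly negative, since $Y'-2g=2Y'-\frac{2\theta_b}{\Gamma_b}(1+Y)<0$; here I would also record the two auxiliary facts $Y'<0$ (because $Y=\frac{1-e^{-u}}{u}$ with $u=\theta_b/(2\beta\Gamma_b)$ is decreasing in $u$) and $g>0$ (immediate from $g=-\frac12(Y'-\frac{2\theta_b}{\Gamma_b}(1+Y))$ with $Y'<0$, $1+Y>0$).

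I expect the main obstacle to be the remaining two terms. Because $W'=1-\theta_b^{-2}$ is \emph{negative} on the part of the domain with $\theta_b<1$ (which is nonempty whenever $R_e^*<1$), the term $-L\,W'g$ can be positive there, and the sign of $g'$ is not evident by inspection. The bulk of the work will thus be a careful grouping and bounding argument: I would express $g'$ through $Y'$ and $Y''$, substitute $Y,Y',Y''$ in terms of $e^{-\theta_b/(2\beta\Gamma_b)}$, and show that the genuinely negative contributions dominate the possibly-positive $-L\,W'g$ term uniformly for all $\theta_b>\sqrt{2^{R_e^*}-1}$. Establishing this sign domination—rather than the factorization of paragraph one or the endpoint analysis of paragraph two—is the hard part, and success there completes the proof.
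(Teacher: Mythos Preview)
Your proposal follows essentially the same route as the paper: factor $\frac{d\mathcal{T}_{\rm N}}{d\theta_b}=\frac{\theta_b}{1+\theta_b^2}e^{-\theta_b^2/\Gamma_b}G(\theta_b)$, check the endpoint signs of $G$, and reduce everything to $G'<0$. Your derivative computation and the identity $Y'-\tfrac{2\theta_b}{\Gamma_b}(1+Y)=-2g$ match the paper exactly, and your argument for $g>0$ via that identity is in fact slightly cleaner than the paper's (which instead appeals to a lower bound on $Y$).

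Where you are still vague, the paper supplies three concrete ingredients you should incorporate. First, it groups your last two terms as $-L\,h$ with $h\triangleq W'g+Wg'=(1-\theta_b^{-2})g+\tfrac{1+\theta_b^2}{\theta_b}g'$ and splits on the sign of $h$: if $h\ge 0$ there is nothing to do since $\tfrac{Y'-2g}{\ln 2}<0$. Second, in the delicate case $h<0$, it drops the $-R_e^*$ contribution (legitimate because $R_e^*h<0$) and then uses $\ln(1+\theta_b^2)\le\theta_b^2$, which converts the problematic logarithmic factor into a polynomial one; this is precisely what tames the potentially positive $-LW'g$ term on the range $\theta_b<1$ that you flagged. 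Third, it isolates as a lemma the two-sided bounds $\tfrac{2\beta\Gamma_b}{\theta_b+2\beta\Gamma_b}<Y<\min\{1,\tfrac{2\beta\Gamma_b}{\theta_b}\}$ and, after substituting these, reduces the inequality to an elementary polynomial estimate that closes using $\beta=\tfrac{\sqrt{N}}{2\pi}>\tfrac{1}{8}$. Your plan to ``express $g'$ through $Y,Y',Y''$ and show domination'' is in the right spirit, but without the $h$--case split, the $\ln(1+x)\le x$ trick, and the explicit $Y$--bounds you will not get a clean sign; those are the missing ideas.
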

\begin{proof}
	Please refer to Appendix \ref{appendix_theorem_opt_rs_non}.
\end{proof}

Based on Theorem \ref{theorem_opt_rs_non}, the optimal $\theta_b^*$ or secrecy rate $R_s^*=\log_2(1+(\theta_b^*)^2)-R_e^*$ can be efficiently calculated using a bisection search with $G(\theta_b)=0$, and thus the maximal $\mathcal{T}_{\rm N}^*$ can be obtained from \eqref{st_max_re}. The following corollaries demonstrate the behavior of $R_s^*$ w.r.t. to the average channel power gain  $\sigma_b^2=\frac{\Gamma_b}{P_b}$ and provide a closed-form approximation of $R_s^*$ at the large $\sigma_b^2$ regime.   
\begin{corollary}\label{corollary_non1}
	The optimal $R_s^*$ monotonically increases with $\sigma_b^2$.
\end{corollary}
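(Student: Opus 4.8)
The goal is Corollary~\ref{corollary_non1}: the optimal $R_s^*$, equivalently the optimal $\theta_b^*$ (since $R_s^*=\log_2(1+(\theta_b^*)^2)-R_e^*$ is strictly increasing in $\theta_b^*$ for fixed $R_e^*$, and $R_e^*$ is independent of $\sigma_b^2$), increases monotonically with $\sigma_b^2$. The plan is to exploit the implicit characterization of $\theta_b^*$ from Theorem~\ref{theorem_opt_rs_non}, namely $G(\theta_b^*)=0$, and apply the implicit function theorem treating $\Gamma_b=P_b\sigma_b^2$ as the parameter that varies with $\sigma_b^2$. Since $P_b$ is fixed, $\Gamma_b$ is a strictly increasing affine function of $\sigma_b^2$, so it suffices to show $\frac{d\theta_b^*}{d\Gamma_b}>0$.

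First I would write the total-derivative relation. Differentiating $G(\theta_b^*,\Gamma_b)=0$ implicitly gives
\begin{equation}\label{eq:IFT_non}
\frac{d\theta_b^*}{d\Gamma_b}=-\frac{\partial G/\partial\Gamma_b}{\partial G/\partial\theta_b}\Bigg|_{\theta_b=\theta_b^*}.
\end{equation}
Theorem~\ref{theorem_opt_rs_non} already establishes that $G$ is a strictly decreasing function of $\theta_b$, so $\partial G/\partial\theta_b<0$ at $\theta_b^*$. Consequently the sign of $\frac{d\theta_b^*}{d\Gamma_b}$ equals the sign of $\partial G/\partial\Gamma_b$, and the corollary reduces to proving $\frac{\partial G}{\partial\Gamma_b}>0$ at the optimum. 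I would therefore compute $\partial G/\partial\Gamma_b$ from the explicit form of $G$ in \eqref{G}, keeping in mind that both $Y(\theta_b)$ and $g(\theta_b)$ depend on $\Gamma_b$ through the combination $\frac{\theta_b}{2\beta\Gamma_b}$ and the explicit $\frac{\theta_b}{\Gamma_b}$ and $\frac{1}{4\beta\Gamma_b}$ terms.

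The main obstacle will be the sign analysis of $\partial G/\partial\Gamma_b$, because $G$ mixes the terms $1+Y$ and $g$, each of which has a nontrivial $\Gamma_b$-dependence, and the bracket $[\log_2(1+\theta_b^2)-R_e^*]$ is positive but otherwise unconstrained. To make this tractable I would first differentiate $Y(\theta_b)=\frac{2\beta\Gamma_b}{\theta_b}\bigl(1-e^{-\theta_b/(2\beta\Gamma_b)}\bigr)$ with respect to $\Gamma_b$; introducing the shorthand $u\triangleq\frac{\theta_b}{2\beta\Gamma_b}>0$, one finds $\frac{\partial Y}{\partial\Gamma_b}=\frac{1}{\theta_b}\cdot\frac{1}{\Gamma_b}\cdot\frac{2\beta\Gamma_b}{\theta_b}\bigl(\cdots\bigr)$, which after simplification reduces to an expression of the form $\frac{2\beta}{\theta_b}\bigl(1-(1+u)e^{-u}\bigr)$, and the factor $1-(1+u)e^{-u}$ is strictly positive for all $u>0$ by a standard convexity (or Taylor-remainder) argument. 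Thus $Y$ is increasing in $\Gamma_b$. The remaining work is to show that the net effect on $G$ — accounting for the increase of $1+Y$ in the first term and the variation of $g$ (which enters with a negative sign) in the second term — is positive. I expect the cleanest route is to substitute the optimality condition $G(\theta_b^*)=0$ to eliminate one of the bracketed factors (e.g. solve $\log_2(1+\theta_b^2)-R_e^*=\frac{(1+Y)\theta_b}{\ln 2\,(1+\theta_b^2)g}$ at $\theta_b^*$ and back-substitute), so that $\partial G/\partial\Gamma_b$ collapses to a single-signed expression; verifying that the residual factor $g(\theta_b^*)>0$ at the maximizer (which must hold, since $G(\theta_b^*)=0$ with $1+Y>0$ forces $g(\theta_b^*)>0$) then closes the argument. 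This elimination step is the delicate part, since it requires carefully tracking which terms survive after using the first-order condition.
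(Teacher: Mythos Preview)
Your proposal is correct and follows essentially the same route as the paper: apply the implicit function rule to $G(\theta_b^*)=0$, use Theorem~\ref{theorem_opt_rs_non} for $\partial G/\partial\theta_b<0$, and verify that $G$ increases with the channel parameter ($\sigma_b^2$, equivalently $\Gamma_b$). The paper's own proof is terser---it simply states that $G$ increases with $\sigma_b^2$ ``following similar steps as the proof of Theorem~\ref{theorem_opt_rs_non}'' (i.e.\ via the bounds of Lemma~\ref{W}) and then writes $\frac{dR_s^*}{d\sigma_b^2}=-\frac{\partial G/\partial\sigma_b^2}{\partial G/\partial R_s^*}>0$---whereas you spell out the $\partial Y/\partial\Gamma_b>0$ computation and propose substituting the first-order condition $G(\theta_b^*)=0$ to collapse the remaining sign check, which is a perfectly sound variant of the same argument.
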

\begin{proof}
	Following similar steps as the proof of Theorem \ref{theorem_opt_rs_non}, it can be verified that $G(\theta_b)$ in \eqref{G} increases with  $\sigma_b^2$ such that $\frac{dR_s^*}{d\sigma_b^2}=- \frac{\partial G(\theta_b)/\partial \sigma_b^2}{\partial G(\theta_b)/\partial R_s^*}>0$, which completes the proof.	
\end{proof}

 Corollary \ref{corollary_non1} suggests that a larger secrecy rate should be employed to boost {the} secrecy throughput when the quality of the main channel improves, {despite the fact that} it might deteriorate the decoding correctness {at Bob}. 

\begin{corollary}\label{corollary_non2}
	At the regime of $\sigma_b^2\rightarrow\infty$, the optimal secrecy rate $R_s^*$ is approximated by 
	\begin{align}
R_s^*
\approx R_s^{A}&= {\log_2(e)}{\mathcal{W}_0\left(\sigma_b^2 2^{-R_e^*}\right)}\nonumber\\
&\approx \log_2(\sigma_b^2) - R_e^* - \log_2\left[\ln (\sigma_b^2) - R_e^*\ln 2\right],
	\end{align}
	where $\mathcal{W}_0(x)$ is the Lambert's $W$ function \cite[Sec. 4.13]{Olver2010NIST} that satisfies $x = \mathcal{W}_0(x)e^{\mathcal{W}_0(x)}$.
\end{corollary}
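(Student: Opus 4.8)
The plan is to start from the first-order optimality condition $G(\theta_b^*)=0$ established in Theorem~\ref{theorem_opt_rs_non} and to extract a closed-form asymptotic root as $\sigma_b^2\to\infty$ (equivalently, as the mean SNR $\Gamma_b=P_b\sigma_b^2\to\infty$). The idea is that in this regime the transcendental equation $G(\theta_b)=0$ collapses to a much simpler relation of the ``exponential-times-linear'' type, which is exactly what the Lambert $\mathcal{W}_0$ function inverts. I would then translate the resulting $\theta_b^*$ back into the secrecy rate through the identity $R_s^*=\log_2(1+(\theta_b^*)^2)-R_e^*$.

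First I would pin down the self-consistent scaling of the optimizer: since Corollary~\ref{corollary_non1} shows $R_s^*$ grows with $\sigma_b^2$, one expects $\theta_b^*\to\infty$ while $(\theta_b^*)^2/\Gamma_b\to 0$, so that the argument $u\triangleq\theta_b/(2\beta\Gamma_b)$ of $Y(\theta_b)=(1-e^{-u})/u$ tends to zero. Under this scaling I would expand $Y(\theta_b)=1-u/2+O(u^2)\to 1$, so the numerator $1+Y(\theta_b)\to 2$ in \eqref{G}. Next I would expand $g(\theta_b)$: the $\tfrac{1}{2\theta_b}$ contributions (from the first bracket multiplied by the leading $Y\approx 1$ and from the isolated $-\tfrac{1}{2\theta_b}$) cancel, leaving the dominant term $g(\theta_b)\approx 2\theta_b/\Gamma_b$, with the $1/\Gamma_b$ and $\theta_b/(4\beta\Gamma_b)$ pieces of strictly lower order. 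Substituting $1+Y\to 2$ and $g\approx 2\theta_b/\Gamma_b$ into $G(\theta_b^*)=0$, the common factors $2$ and $\theta_b$ cancel, reducing the condition to $[\log_2(1+(\theta_b^*)^2)-R_e^*]\,(1+(\theta_b^*)^2)\approx \Gamma_b/\ln 2$.

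Finally I would recognize the Lambert-$\mathcal{W}_0$ structure. Using $1+(\theta_b^*)^2=2^{R_s^*+R_e^*}$ and $\log_2(1+(\theta_b^*)^2)-R_e^*=R_s^*$, the reduced condition becomes $R_s^*\,2^{R_s^*+R_e^*}\approx \Gamma_b/\ln 2$; setting $w=R_s^*\ln 2$ turns this into $w\,e^{w}=\Gamma_b 2^{-R_e^*}$, whose principal branch yields $R_s^*\approx \log_2(e)\,\mathcal{W}_0(\Gamma_b 2^{-R_e^*})$, i.e. the stated $R_s^A$ (up to the normalization $\sigma_b^2=\Gamma_b/P_b$). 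The second, fully explicit approximation then follows by inserting the large-argument expansion $\mathcal{W}_0(z)=\ln z-\ln\ln z+o(1)$ and using $\ln(\sigma_b^2 2^{-R_e^*})=\ln\sigma_b^2-R_e^*\ln 2$. I expect the main obstacle to be the second step: rigorously justifying the dominant-balance simplification of $g(\theta_b)$---in particular verifying the leading-order cancellation and confirming the self-consistent scaling $(\theta_b^*)^2/\Gamma_b\to 0$ a posteriori---so that discarding the subleading terms is legitimate and the residual error in the reduced equation is genuinely $o(1)$ in $R_s^*$.
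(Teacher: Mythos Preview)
Your proposal is correct and follows essentially the same route as the paper: the paper's proof also passes to the limit $Y(\theta_b)\to 1$ and $g(\theta_b)\to 2\theta_b/\Gamma_b$ in $G(\theta_b)=0$, substitutes $\theta_b^2=2^{R_s+R_e^*}-1$, and then recognizes the Lambert-$\mathcal{W}_0$ structure, with the second line obtained from the large-argument expansion $\mathcal{W}_0(x)\approx\ln x-\ln\ln x$. Your write-up is in fact more careful than the paper's (you make the self-consistent scaling $(\theta_b^*)^2/\Gamma_b\to 0$ and the cancellation inside $g(\theta_b)$ explicit), and you correctly flag that the reduction naturally produces $\Gamma_b$ rather than $\sigma_b^2$ inside $\mathcal{W}_0$, which coincide only under the normalization $P_b=1$ used in the paper's numerical examples.
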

\begin{proof}
	It is clear that $Y(\theta_b)\rightarrow 1$ and $g(\theta_b)\rightarrow\frac{2\theta_b}{\Gamma_b}$ as $\sigma_b^2\rightarrow\infty$. Substituting the results into \eqref{G} with $\theta_b^2 = 2^{R_s+R_e^*}-1$ and letting $G(\theta_b)=0$ produce the first approximation. The second approximation comes from the expansion of $\mathcal{W}_0(x)$ as $x\rightarrow\infty$ that $\mathcal{W}_0(x)\approx \ln x - \ln(\ln x)$. 
\end{proof}

{ Fig. \ref{NON_ST_RS} plots the secrecy throughput $\mathcal{T}_{\rm N}$ versus the secrecy rate $R_s$ for different values of the blocklength $N$ and the average channel gain $\sigma_b^2$. It can be seen that $\mathcal{T}_{\rm N}$ first increases and then decreases with $R_s$, which validates Theorem \ref{theorem_opt_rs_non}.
The optimal $R_s^*$ maximizing $\mathcal{T}_{\rm N}$ increases with $\sigma_b^2$, which verifies Corollary \ref{corollary_non1} well, and the reason behind is similar to that for Corollary \ref{corollary_ad2}. It can  also be observed that the optimal $R_s^*$ is almost impervious to different $N$. This is because, the optimal secrecy rate for the non-adaptive scheme only depends on the average successful decoding probability, and the averaging process softens the impact of the blocklength.
Theorem \ref{theorem_opt_n_non} is also confirmed, where it is found that $\mathcal{T}_{\rm N}$ increases with $N$.
In addition, the secrecy throughput with the approximate $R_s^{A}$ obtained in Corollary \ref{corollary_non2} is almost coincided with that of the optimal $R_s^*$, which demonstrates the {practicability} of the low-complexity approximation.}
\begin{figure}[!t]
	\centering
	\includegraphics[width = 3.6in]{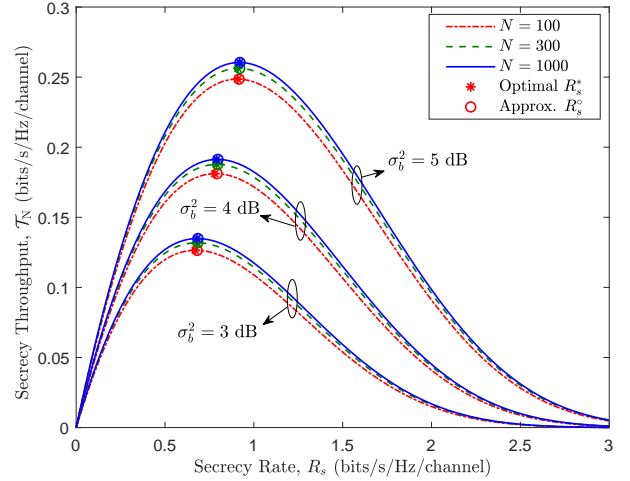}
	\caption{$\mathcal{T}_{\rm N}$ vs. $R_s$ for different $N$ and $\sigma_b^2$, with $P_b = 0$ dB, $\Gamma_e = 0$ dB, and $\delta = 0.2$. }
	\label{NON_ST_RS}
\end{figure}

Fig. \ref{MAX_ST_SINGLE} compares the secrecy throughput for adaptive and non-adaptive schemes with different blocklength $N$.
The left-hand-side figure depicts the maximal secrecy throughput $\mathcal{T}^*$, where $\mathcal{T}_{\rm A}^*$ for the adaptive case improves as $N$ increases whereas $\mathcal{T}_{\rm N}^*$ for the non-adaptive case almost remains unchanged. When the average channel gain $\sigma_b^2$ increases or the secrecy constraint becomes relaxed (i.e., a larger $\delta$), the maximal $\mathcal{T}^*$ for both schemes improves significantly, and the gap $\mathcal{T}_{\rm A}^*-\mathcal{T}_{\rm N}^*$ increases.
The right{-hand-side} figure illustrates the relative throughput gain $\Delta\mathcal{T}\triangleq \frac{\mathcal{T}_{\rm A}^*-\mathcal{T}_{\rm N}^*}{\mathcal{T}_{\rm N}^*}$ which reflects the superiority of the adaptive scheme over its non-adaptive counterpart. It is shown that $\Delta\mathcal{T}$ grows dramatically with $N$ but decreases with $\sigma_b^2$ and $\delta$. This suggests that the adaptive scheme is more preferred for some {\it unfavorable} scenarios, e.g., with a large blocklengh (large delay), a poor channel quality, or a {stringent} secrecy requirement; otherwise, the non-adaptive scheme could be an alternative choice owing to its low implementation complexity.
 
\begin{figure}[!t]
	\centering
	\includegraphics[width = 3.6in]{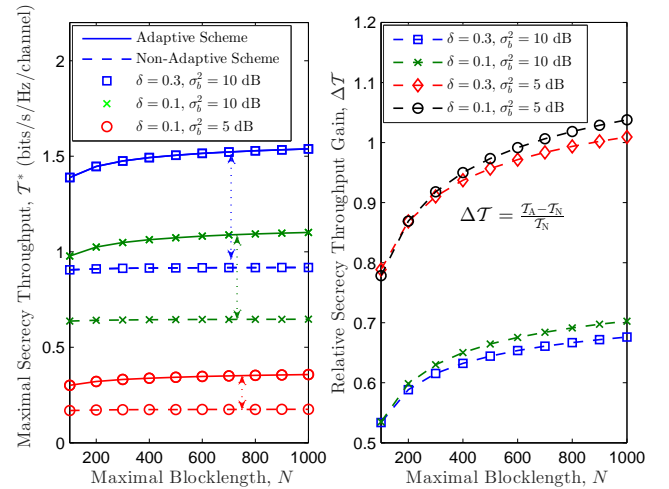}
	\caption{$\mathcal{T}^*$ and $\Delta\mathcal{T}$ vs. $N$ for different $\sigma_b^2$ and $\delta$, with $P_b = 0$ dB and $\Gamma_e = 0$ dB.}
	\label{MAX_ST_SINGLE}
\end{figure}

\section{Multi-Antenna Transmitter Scenario}
{When Alice is equipped with multiple antennas, she can intentionally transmit AN together with the information-bearing signal to degrade Eve's channel quality. Generally, the null-space AN scheme, in which the AN is injected uniformly in directions orthogonal to the main channel, is heuristically employed in the context of infinite blocklength \cite{Goel2008Guaranteeing}. 
The near-optimality of AN in terms of improving secrecy capacity for the multi-input single-output wiretap channel was first proved in \cite{Khisti2010Secure_I} from a rigorous information-theoretic perspective, and its degraded performance was later observed for the multi-input multi-output wiretap channel \cite{Khisti2010Secure_II}.
On the other hand, it was argued in \cite{Lin2013On} that distributing a certain proportion of AN in the direction of main channel can surprisingly gain a larger ergodic secrecy rate.
{When it comes to finite blocklength, since decoding failure might occur even when the codeword rate lies below the channel capacity, which is quite different from the infinite blocklength case, it is still unclear whether the null-space AN is optimal and how the optimal power allocation of the AN scheme should be determined for maximizing the secrecy throughput. To this end, this section focuses on the optimization of secrecy throughput with finite blocklength for the multi-antenna scenario, where the optimality of the null-space AN scheme will be identified first. }}

Considering a general scenario where the AN is not restricted to be orthogonal to the main channel, Alice's transmitted signal can be constructed in the form of 
\begin{equation}\label{an_signal}
\bm x = \sqrt{\phi P}{\bm w}\left(\sqrt{\alpha}s+\sqrt{1-\alpha}v\right)+\sqrt{\frac{(1-\phi)P}{M-1}}{\bm W}_{\bot}{\bm z},
\end{equation}
where ${\bm w}=\frac{\bm h_b^{\dag}}{\|\bm h_b\|}$ denotes the beamforming vector for the main channel, ${\bm W}_{\bot}$ denotes the $M\times(M-1)$ projection matrix onto the null space of $\bm h_b$ such that ${\bm h_b^{\mathrm{T}}}{\bm W}_{\bot}=\bm 0$, and the columns of $[{\bm w} ~{\bm W}_{\bot}]$ constitute an orthogonal basis; $s$, $v$, and ${\bm z}$ denote the information signal, the AN in the direction of ${\bm w}$, and the AN in the null space ${\bm W}_{\bot}$, with each element obeying $\mathcal{CN}(0,1)$; $\phi\in[0,1]$ represents the fraction of the total transmit power $P$ allocated to the direction of ${\bm w}$, and $\alpha\in[0,1]$ represents the power allocation ratio of the information signal to $\phi P$.
With \eqref{an_signal}, the received signal-to-interference-plus-noise ratios (SINRs) at Bob and Eve are  respectively 
\begin{align}
\label{sinr_bob}
\gamma_b &= \frac{\alpha\phi P_b\eta}{(1-\alpha)\phi P_b\eta+1},\\
\label{sinr_eve}
\gamma_e &= \frac{\alpha\phi P_e|\bm h_e^{\mathrm T}\bm w|^2}{(1-\alpha)\phi P_e|\bm h_e^{\mathrm T}\bm w|^2+\frac{(1-\phi)P_e\|\bm h_e^{\mathrm T}\bm W_{\bot}\|^2}{M-1}+1},
\end{align}
where $\eta = \|\bm h_b\|^2$.
The successful decoding probability $p_s$ and the information leakage probability $\mathcal{O}_e$ for the multi-antenna case are still given by \eqref{error_pro} and \eqref{oe}, respectively. The {corresponding} secrecy throughput optimization problem can be formulated as below:
	\begin{align}\label{st_max_multi_problem}
	\max_{\mu, R_e, R_s, n,\alpha,\phi} ~\mathcal{T}=\mathbb{E}_{\eta}\left[R_sp_s\right]
	~~{\rm s.t.} 
	~~\eqref{st_max_c1}-\eqref{st_max_c3}, ~0\le \alpha,\phi\le 1.
	\end{align}

{The following subsections will first detail the optimization procedure for both adaptive and non-adaptive schemes, and then briefly discuss the scenario of a multi-antenna Eve.}

\subsection{Adaptive Optimization Scheme}
This subsection optimizes the secrecy throughput $\mathcal{T}_{\rm A}$ by designing the parameters involved in problem \eqref{st_max_multi_problem} adaptively according to the instantaneous channel realization $\bm h_b$.

\subsubsection{Solving $R_e$}
Similar to the single-antenna case, the optimal rate redundancy is given by $R_e^* = \mathcal{O}_e^{-1}(\delta)$ with $\mathcal{O}_e$ in \eqref{oe}. Note that $R_e^*$ herein is a function of $\phi$ and $\alpha$. 

\subsubsection{Solving $\alpha$}
Resort to a function $\kappa(x,\alpha)\triangleq\frac{x\alpha}{x(1-\alpha)+1}$ defined in \cite{Wang2015Secrecy}, which increases with $x$ for $\alpha>0$. Then, the SINRs $\gamma_b$ in \eqref{sinr_bob} and $\gamma_e$ in \eqref{sinr_eve} can be reformulated as $\gamma_b(\phi,\alpha)=\kappa\left(\gamma_b(\phi,1),\alpha \right)$ and $\gamma_e(\phi,\alpha)=\kappa\left(\gamma_e(\phi,1),\alpha \right)$. 
Define $\Phi_e(\phi,\alpha)\triangleq 2^{R_e^*}-1$ as the SINR threshold for $\gamma_e(\phi,\alpha)$ such that $R_e^* = \log_2(1+\Phi_e(\phi,\alpha))$. Recalling the secrecy constraint $\mathcal{O}_e(\Phi_e;\theta,\alpha)=\delta$,  $\Phi_e(\phi, \alpha)$ is the $\delta$-upper quantile of $\gamma_e(\phi,\alpha)$ such that it also follows the form $\Phi_e(\phi,\alpha)=\kappa(\Phi_e(\phi,1),\alpha)$ \cite{Wang2015Secrecy}.
Hence, the condition {for guaranteeing} a positive secrecy rate is described as
\begin{align}\label{neq_positive}
\gamma_b(\phi,\alpha)>\Phi_e(\phi,\alpha)&\Rightarrow\kappa(\gamma_b(\phi,1),\alpha)>\kappa(\Phi_e(\phi,1),\alpha)\nonumber\\
&\Rightarrow\gamma_b(\phi,1)>\Phi_e(\phi,1)\nonumber\\
&\stackrel{\mathrm{(a)}}\Rightarrow \rho_b> \rho_e(\phi),
\end{align}
where $\rho_b\triangleq P_b\eta$, $\rho_e(\phi)\triangleq \frac{\Phi_e(\phi,1)}{\phi }$, and $\mathrm{(a)}$ is due to $\gamma_b(\phi,1)=\phi P\eta$.
Then, the threshold $\mu$ can be simply set as $\mu(\phi) = \frac{\rho_e(\phi)}{P_b}$ for any fixed $\phi$.
Revisiting \eqref{max_st_ad},  since $\mu(\phi)$ is independent of $\alpha$, the optimal $\alpha^*$ that maximizes $\mathcal{T}_{\rm A}$ can be obtained by maximizing $\mathcal{T}_{\rm A}(\eta)= R_sp_s$, where $p_s$ is defined in \eqref{pe} and can be rewritten as
\begin{equation}\label{ps_multi}
p_s = 1 - Q\left(\sqrt{n}\lambda_b\frac{\ln\lambda_b - \ln\lambda_e-R_s\ln2 }{\sqrt{\lambda_b^2-1}}\right),
\end{equation}
with $\lambda_b\triangleq 1+\kappa(\gamma_b(\phi,1),\alpha)>\lambda_e\triangleq 1+\kappa(\Phi_e(\phi,1),\alpha)>1$.
Although it is difficult to see how $p_s$ varies with $\alpha$ for a fixed $R_s<\log_2\frac{\lambda_b}{\lambda_e}$ as both $\lambda_b$ and $\lambda_e$ increase with $\alpha$,  the following theorem provides the optimal $\alpha^*$ that maximizes $\mathcal{T}_{\rm A}$.
\begin{theorem}\label{theorem_opt_alpha}
	$\alpha^*=1$ is optimal for maximizing the secrecy throughput $\mathcal{T}_{\rm A}$. 
\end{theorem}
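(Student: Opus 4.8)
The plan is to fix the null-space split $\phi$ (together with $n$ and the realization $\eta$) and show that the conditional secrecy throughput $\mathcal{T}_{\rm A}(\eta)=R_sp_s$, after the inner maximization over $R_s$, is nondecreasing in $\alpha$; since this holds for \emph{every} $\phi$, the jointly optimal choice is $\alpha^*=1$. The key simplification inherited from the $\kappa$-representation is that $a\triangleq\gamma_b(\phi,1)=\phi P_b\eta$ and $b\triangleq\Phi_e(\phi,1)$ are constants once $\phi$ is fixed (the $\delta$-quantile $\Phi_e(\phi,1)$ depends only on the channel statistics and $\phi$, not on $\alpha$), while $\lambda_b=1+\kappa(a,\alpha)=(a+1)/(a(1-\alpha)+1)$ and the analogous $\lambda_e$ carry all of the $\alpha$-dependence.

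First I would record two monotonicity facts. A direct computation gives that $\lambda_b/\lambda_e$ is strictly increasing in $\alpha$, so the feasible interval $R_s\in[0,\log_2(\lambda_b/\lambda_e)]$ \emph{widens} with $\alpha$. Hence it suffices to prove that for every \emph{fixed} $R_s$ the success probability $p_s$ is nondecreasing in $\alpha$: then for $\alpha_1<\alpha_2$ every admissible $R_s$ at $\alpha_1$ remains admissible at $\alpha_2$ and yields a larger objective, so the optimized $\mathcal{T}_{\rm A}(\eta)$ cannot decrease. Because $p_s=1-Q(\sqrt{n}\,u)$ with $u=\lambda_b(\ln\lambda_b-\ln\lambda_e-R_s\ln2)/\sqrt{\lambda_b^2-1}$ and $Q$ decreasing, this reduces to the single inequality $\partial u/\partial\alpha>0$.

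Next I would differentiate $u$. Using $\lambda_b'/\lambda_b=a/(a(1-\alpha)+1)$ and the identity $\tfrac{d}{d\lambda_b}[\lambda_b/\sqrt{\lambda_b^2-1}]=-(\lambda_b^2-1)^{-3/2}$, the sign of $\partial u/\partial\alpha$ collapses, after clearing positive factors, to the clean condition
\begin{equation}\label{alpha_ineq}
\frac{(a-b)(\lambda_b^2-1)}{a\,(b(1-\alpha)+1)}>m,\qquad m\triangleq\ln\lambda_b-\ln\lambda_e-R_s\ln2=(C_b-R_t)\ln2\ge0.
\end{equation}
The main obstacle is exactly \eqref{alpha_ineq}: raising $\alpha$ increases the numerator $C_b-R_e^*$ but simultaneously inflates the dispersion $V_b$, so the two effects compete and no term-by-term monotonicity of $u$ is available.

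I would dispatch \eqref{alpha_ineq} by a bounding chain that decouples the competition. Since $R_s\ge0$ gives $m\le\ln(\lambda_b/\lambda_e)$, and $\ln x\le x-1$ together with $\lambda_e\ge1$ gives $\ln(\lambda_b/\lambda_e)\le\lambda_b-\lambda_e$, it is enough to show that the left-hand side of \eqref{alpha_ineq} dominates $\lambda_b-\lambda_e$. Substituting $\lambda_b^2-1=\tfrac{a\alpha(a(2-\alpha)+2)}{(a(1-\alpha)+1)^2}$ and $\lambda_b-\lambda_e=\tfrac{\alpha(a-b)}{(a(1-\alpha)+1)(b(1-\alpha)+1)}$, the ratio of the two sides reduces to $\tfrac{a(2-\alpha)+2}{a(1-\alpha)+1}=1+\tfrac{a+1}{a(1-\alpha)+1}>1$. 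Therefore the left-hand side of \eqref{alpha_ineq} strictly exceeds $\lambda_b-\lambda_e\ge\ln(\lambda_b/\lambda_e)\ge m$ whenever $a>b$, which is guaranteed by the on-off condition $\eta>\mu$. This yields $\partial u/\partial\alpha>0$, hence $p_s$ is strictly increasing in $\alpha$ for every fixed $R_s$, and combined with the widening feasible set the optimized throughput is maximized at $\alpha^*=1$, i.e., the null-space AN scheme with no self-jamming along the main-channel direction.
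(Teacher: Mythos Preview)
Your proof is correct and follows essentially the same strategy as the paper: fix $R_s$ and show that the argument $u$ of the $Q$-function is increasing in $\alpha$, using the elementary bound $\ln(\lambda_b/\lambda_e)\le(\lambda_b-\lambda_e)/\lambda_e\le\lambda_b-\lambda_e$. The only organizational difference is that the paper first splits $u$ into the $R_s$-free part $Z(\alpha)=\lambda_b(\ln\lambda_b-\ln\lambda_e)/\sqrt{\lambda_b^2-1}$ and the remainder $-R_s\ln2\cdot\lambda_b/\sqrt{\lambda_b^2-1}$, observes that the latter is trivially increasing in $\alpha$, and then uses the clean identity $d\lambda_i/d\alpha=\lambda_i(\lambda_i-1)/\alpha$ to obtain a derivative of $Z$ expressed purely in $\lambda_b,\lambda_e$; you instead keep $u$ in one piece, work in the $(a,b,\alpha)$ coordinates, and absorb $R_s$ via the bound $m\le\ln(\lambda_b/\lambda_e)$. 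Both routes land on the same log inequality and the same conclusion; your additional remark that the feasible set for $R_s$ widens with $\alpha$ is a nice explicit touch that the paper leaves implicit.
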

\begin{proof}
	Please refer to Appendix \ref{appendix_theorem_opt_alpha}.
\end{proof}

Theorem \ref{theorem_opt_alpha} suggests that there is no need to inject the AN in the main channel direction for secrecy throughput improvement with finite blocklength. The reason is that, once the main channel quality suffices to guarantee $\lambda_b>\lambda_e$, a larger $\alpha$ can improve the term $\frac{\ln\lambda_b - \ln\lambda_e }{\sqrt{1-\lambda_b^{-2}}}$ in \eqref{ps_multi} which reflects the channel superiority of the main channel over the wiretap channel.

Define $\xi \triangleq \frac{\phi^{-1}-1}{M-1}$.  Substituting $\alpha^*=1$ into \eqref{sinr_bob} and \eqref{sinr_eve} yields the CDFs of $\gamma_b$ and $\gamma_e$:
\begin{align}
\label{CDF_rb}
\mathcal{F}_{\gamma_b}(\gamma) &= 1-e^{-\frac{\gamma}{\phi \Gamma_b}}\sum_{k=0}^{M-1}\frac{1}{k!}\left(\frac{\gamma}{\phi \Gamma_b}\right)^{k},\\
\label{CDF_re}
\mathcal{F}_{\gamma_e}(\gamma) &= 1-e^{-\frac{\gamma}{\phi \Gamma_e}}\left(1+\xi\gamma\right)^{1-M},
\end{align}

\subsubsection{Solving $\mu$}
The threshold $\mu(\phi) = \frac{\rho_e(\phi)}{P_b}$ mentioned in the last step is related to $\phi$. This step further determines the optimal $\mu^*$ which is independent of $\phi$ and $\eta$.
For tractability, consider an asymptotically large blocklength and {exploit the tail property} of the $Q$-function, then the information leakage probability $\mathcal{O}_e$ is approximated as \cite{Makki2016Wireless}
\begin{equation}\label{oe_app}
\mathcal{O}_e(\Phi_e)\approx  e^{-\frac{\Phi_e}{\phi \Gamma_e}}\left(1+\xi\Phi_e\right)^{1-M}.
\end{equation}
Fig. \ref{OE} shows that the approximate $\mathcal{ O}_e(\Phi_e)$ is extremely close to the exact value for quite a wide range of $\phi$, $M$, $n$, and $\Gamma_e$, and it then can be adopted to facilitate the subsequent analysis and optimization.
Revisiting $\rho_e(\phi)= \frac{\Phi_e(\phi)}{\phi }$ with $\mathcal{ O}_e(\Phi_e(\phi))=\delta$, the following lemma is obtained.
\begin{figure}[!t]
	\centering
	\includegraphics[width = 3.6in]{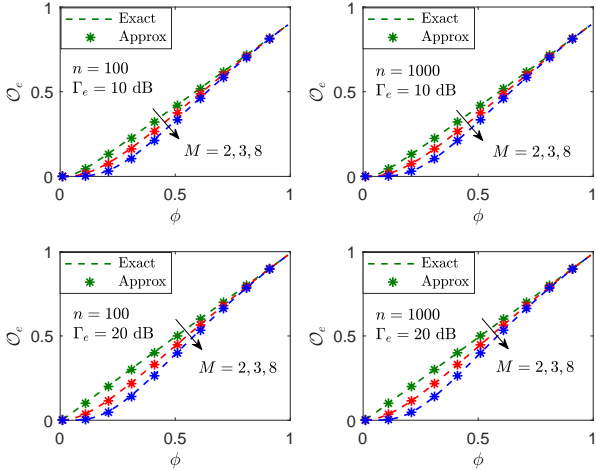}
	\caption{$\mathcal{O}_e$ vs. $\phi$ for different $M$, $n$, and $\Gamma_e$.}
	\label{OE}
\end{figure}

\begin{lemma}[{\cite{Wang2015Secrecy}}]\label{dchi}
$\rho_e(\phi)>0$, $\frac{d\rho_e(\phi)}{d\phi}=\frac{\rho_e(\phi)}{\left[1+\phi\rho_e(\phi)\xi\right]/\Gamma_e+1-\phi}>0$, and $\frac{d^2\rho_e(\phi)}{d\phi^2}>\frac{2}{\rho_e(\phi)}\left[\frac{d\rho_e(\phi)}{d\phi}\right]^2>0$.
\end{lemma}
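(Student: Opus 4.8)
The plan is to reduce the quantile $\Phi_e(\phi)$ to the scaled variable $\rho_e$ and then work entirely with an implicit equation. Using the approximation \eqref{oe_app}, the defining relation $\mathcal{O}_e(\Phi_e)=\delta$ reads $\delta = e^{-\Phi_e/(\phi\Gamma_e)}(1+\xi\Phi_e)^{1-M}$. The key opening step is to substitute $\Phi_e=\phi\rho_e$ together with $\xi=\frac{1-\phi}{\phi(M-1)}$: the exponent collapses to $\Phi_e/(\phi\Gamma_e)=\rho_e/\Gamma_e$, and the polynomial factor becomes $1+\xi\Phi_e=1+\frac{(1-\phi)\rho_e}{M-1}$. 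Writing $u\triangleq 1+\phi\rho_e\xi=1+\frac{(1-\phi)\rho_e}{M-1}$ and taking logarithms yields the implicit equation $F(\phi,\rho_e)\triangleq\frac{\rho_e}{\Gamma_e}+(M-1)\ln u+\ln\delta=0$ defining $\rho_e$ as a function of $\phi$. Positivity $\rho_e>0$ is immediate, since $\gamma_e$ is a nonnegative SINR and $\delta<1$ force the $\delta$-upper quantile $\Phi_e>0$, whence $\rho_e=\Phi_e/\phi>0$.

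For the first derivative I would differentiate $F=0$ implicitly. A short computation gives $\partial F/\partial\rho_e=\frac{1}{\Gamma_e}+\frac{1-\phi}{u}$ and $\partial F/\partial\phi=-\frac{\rho_e}{u}$, so that $\frac{d\rho_e}{d\phi}=-\frac{\partial F/\partial\phi}{\partial F/\partial\rho_e}=\frac{\rho_e}{u/\Gamma_e+(1-\phi)}$, which is precisely the claimed expression once the denominator is written as $D\triangleq u/\Gamma_e+(1-\phi)$ with $u=1+\phi\rho_e\xi$. On $\phi\in(0,1]$ every summand of $D$ is nonnegative and $u/\Gamma_e>0$, so $D>0$ and hence $\rho_e'\triangleq d\rho_e/d\phi>0$.

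For the second derivative I would avoid differentiating $F=0$ twice and instead differentiate the compact identity $\rho_e'D=\rho_e$, which gives $\rho_e''D=\rho_e'(1-D')$. The first-order formula supplies the handy identity $\rho_e'/D=(\rho_e')^2/\rho_e$, so $\rho_e''=\frac{(\rho_e')^2}{\rho_e}(1-D')$. Since $D'=\frac{u'}{\Gamma_e}-1$ with $u'=\frac{(1-\phi)\rho_e'-\rho_e}{M-1}$, subtracting yields $\rho_e''-\frac{2(\rho_e')^2}{\rho_e}=-\frac{(\rho_e')^2}{\rho_e(M-1)\Gamma_e}\left[(1-\phi)\rho_e'-\rho_e\right]$, so the target inequality reduces to the single sign condition $(1-\phi)\rho_e'<\rho_e$.

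This last condition is where the earlier structure pays off: substituting $\rho_e'=\rho_e/D$ turns $(1-\phi)\rho_e'<\rho_e$ into $(1-\phi)/D<1$, i.e.\ $D>1-\phi$, which holds because $D-(1-\phi)=u/\Gamma_e>0$. Therefore $\rho_e''>\frac{2(\rho_e')^2}{\rho_e}>0$, the trailing strict inequality being obvious from $\rho_e>0$. The main obstacle I anticipate is purely bookkeeping: keeping the chain-rule terms in $u'$ and $D'$ organized so that the second-order inequality collapses cleanly. Once the identity $\rho_e'/D=(\rho_e')^2/\rho_e$ is spotted and the expression for $\rho_e''$ is assembled, the remaining reduction to $D>1-\phi$ is routine.
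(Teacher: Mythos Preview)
Your proof is correct. Note that the paper itself does not prove this lemma but simply cites it from \cite{Wang2015Secrecy}, so there is no ``paper's own proof'' to compare against; your derivation via the implicit relation $F(\phi,\rho_e)=\rho_e/\Gamma_e+(M-1)\ln u+\ln\delta=0$, the compact identity $\rho_e'D=\rho_e$, and the reduction of the second-order inequality to $D>1-\phi$ is a clean and self-contained argument that fills in what the paper delegates to a reference.
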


Lemma \ref{dchi} indicates that $\rho_e(\phi)$ increases with $\phi$. It is observed from \eqref{neq_positive} that no positive $R_s$ can be achieved if $P_b\eta\le \rho_e(0)$. 
To avoid this, the optimal on-off threshold should be chosen as 
\begin{equation}\label{mu_multi}
\mu^* = \frac{\rho_e(0)}{P_b}.
\end{equation}

\subsubsection{Solving $n$}
This step gives the optimal blocklength $n^*$ that maximizes secrecy throughput. \begin{theorem}\label{theorem_opt_n_multi}
	$n^*=N$ is optimal for maximizing $\mathcal{T}_{\rm A}$ in \eqref{st_max_multi_problem}.
\end{theorem}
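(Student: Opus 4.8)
The plan is to reduce the multi-antenna problem to the single-antenna framework of Theorem~\ref{theorem_opt_n} once the information-to-AN split has been fixed. By Theorem~\ref{theorem_opt_alpha} we set $\alpha^*=1$, after which the SINRs $\gamma_b$ and $\gamma_e$ reduce to the forms whose CDFs are given in \eqref{CDF_rb} and \eqref{CDF_re}. The crucial structural fact is that neither CDF depends on $n$, so the main-channel statistics governing the outer expectation $\mathcal{T}_{\rm A}=\mathbb{E}_{\eta}[R_sp_s]$, taken over $\eta>\mu^*$, are frozen with respect to $n$. It therefore suffices to show that the conditional throughput $\mathcal{T}_{\rm A}(\eta)=R_sp_s$ is nondecreasing in $n$ for every fixed $\eta$ and every admissible choice of the remaining variables, and then to transfer this monotonicity to the maximized objective by a standard envelope argument.

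First I would freeze the rate redundancy and the on-off threshold with respect to $n$. Under the large-blocklength approximation \eqref{oe_app} already adopted in the step ``Solving $\mu$'', $\mathcal{O}_e$ is independent of $n$; consequently $R_e^*=\mathcal{O}_e^{-1}(\delta)$ and hence $\mu^*=\rho_e(0)/P_b$ in \eqref{mu_multi} are both $n$-invariant, so the support $\{\eta>\mu^*\}$ and its probability measure are frozen as well. With $R_e^*$ and $R_s$ held fixed, the codeword rate $R_t=R_s+R_e^*$ is constant in $n$, while the reliability constraint \eqref{st_max_c1} guarantees $C_b\ge R_t$ throughout the admissible region. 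Rewriting $p_s$ from \eqref{pe} as $1-Q\!\left((C_b-R_t)\sqrt{n}/\sqrt{V_b}\right)$, the nonnegative argument $(C_b-R_t)\sqrt{n}/\sqrt{V_b}$ is increasing in $n$, so $Q(\cdot)$ decreases and $p_s$ increases in $n$. Hence $\mathcal{T}_{\rm A}(\eta)=R_sp_s$ is nondecreasing in $n$ pointwise in $\eta$; integrating against the $n$-independent density preserves this, and subject to the latency constraint \eqref{st_max_c3} the maximum is attained at $n^*=N$.

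The main obstacle is the coupling between $n$ and the optimal redundancy $R_e^*$: a naive increase of $n$ could in principle inflate $R_e^*$, thereby shrinking the admissible secrecy rate and offsetting the gain in $p_s$. This is exactly the counter-intuitive point that Lemma~\ref{lemma_opt_re_n} resolves in the single-antenna setting, and if one insists on retaining the exact $\mathcal{O}_e$, the bulk of the work lies in verifying that the same monotonicity ($R_e^*$ nonincreasing in $n$) carries over to the wiretap-SINR law \eqref{CDF_re}; a nonincreasing $R_e^*$ only enlarges the feasible set $0\le R_s\le C_b-R_e^*$ and lowers $R_t$, both favorable, so the conclusion is unchanged. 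Adopting the approximation \eqref{oe_app} sidesteps this difficulty by rendering $R_e^*$ and $\mu^*$ exactly $n$-invariant. A secondary, routine point is the envelope step: because the feasible set of $R_s$ and $\phi$ does not depend on $n$ once $R_e^*$ is frozen, pointwise monotonicity of $\mathcal{T}_{\rm A}(\eta)$ in $n$ immediately yields monotonicity of $\max_{R_s,\phi}\mathcal{T}_{\rm A}$ in $n$, completing the argument.
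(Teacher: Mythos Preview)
Your proposal reaches the right conclusion but takes a different route from the paper. The paper does \emph{not} invoke the asymptotic approximation \eqref{oe_app} for this theorem; instead it re-runs the machinery of Lemma~\ref{lemma_opt_re_n} for the multi-antenna wiretap law. Concretely, the paper observes that the proof of Theorem~\ref{theorem_opt_n} goes through verbatim once one checks the single inequality $f_{\gamma_e}(\tau^l_e)>f_{\gamma_e}(\tau^u_e)$ appearing in \eqref{dqe2}; it then differentiates \eqref{CDF_re} to obtain the PDF of $\gamma_e$ under the AN scheme and notes that this PDF is monotonically decreasing, so the inequality holds and hence $R_e^*$ is strictly decreasing in $n$, not merely constant.

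Your primary route---freezing $R_e^*$ via \eqref{oe_app}---is internally consistent and shorter, but it proves a slightly weaker statement (optimality under the approximated leakage model rather than the exact one in \eqref{oe}), and it forfeits the structural insight that a larger blocklength actually \emph{reduces} the required redundancy in the multi-antenna case as well. You do flag the exact route as the alternative and correctly identify its crux (extending Lemma~\ref{lemma_opt_re_n} to the new $\gamma_e$ law), but you stop short of executing it; the paper's contribution here is precisely that one-line verification that $f_{\gamma_e}$ in \eqref{PDF_re} is decreasing, which is all that is needed.
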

\begin{proof}
The proof is similar to that of Theorem \ref{theorem_opt_n}. According to \eqref{dqe2}, one only needs to prove that $f_{\gamma_e}(\tau^l_e)>f_{\gamma_e}(\tau^u_e)$. 
The PDF of $\gamma_e$ is calculated from \eqref{CDF_re} and is given by
\begin{equation}\label{PDF_re}
f_{\gamma_e}(\gamma) =\left(\frac{1}{\phi \Gamma_e\left(1+\xi\gamma\right)^{M-1}}+\frac{\xi(M-1)}{\left(1+\xi\gamma\right)^{M}}\right)e^{-\frac{\gamma}{\phi \Gamma_e}}.
\end{equation}
Apparently, $f_{\gamma_e}(\gamma)$  decreases with $\gamma$ such that $f_{\gamma_e}(\tau^l_e)>f_{\gamma_e}(\tau^u_e)$, which completes the proof.
\end{proof}

Theorem \ref{theorem_opt_n_multi} suggests that a multi-antenna transmitter also should adopt the maximal blocklength to maximize the secrecy throughput, regardless of the power allocation and code rates. This is validated by Fig. \ref{ADA_PHI_ST_RS_MULTI}, and the reason behind is similar to that of the single-antenna case.

\subsubsection{Solving $\phi$}
By now, the secrecy throughput $\mathcal{T}_{\rm A}(\eta)$ conditioned on $\eta$ is given by 
\begin{equation}\label{st_gamma_multi}
\mathcal{T}_{\rm A}(\eta)=R_s\left(1-Q\left[\sqrt{N}\lambda_b\frac{\ln\frac{\lambda_b}{\lambda_e}-R_s\ln2}{\sqrt{\lambda_b^{2}-1}}\right]\right),
\end{equation}
where $\lambda_b = 1+\phi\rho_b$ and $\lambda_e = 1+\phi\rho_e$ with  $\rho_b>\rho_e$. For notational simplicity, $\phi$ has been dropped from $\rho_e(\phi)$. Obviously, maximizing $\mathcal{T}_{\rm A}(\eta)$ is equivalent to maximizing the following function:
\begin{equation}\label{l-function}
L(\phi) =\frac{\lambda_b}{\sqrt{\lambda_b^{2}-1}} \left({\ln\frac{\lambda_b}{\lambda_e}-R_s\ln2}\right).
\end{equation}
\begin{theorem}\label{theorem_opt_phi}
$L(\phi)$ in \eqref{l-function} is a concave function of $\phi$, and the optimal $\phi^*$ maximizing $L(\phi)$ is
\begin{align}\label{opt_phi}
\phi^*
=\begin{cases}
1, &  \eta\ge\frac{\rho_b^{\circ}}{P_b}~{\rm and}~\frac{\rho_e(1)}{1+\rho_e(1)}<\frac{1}{1+\Gamma_e},\\
\phi^{\circ}, & \rm otherwise.
\end{cases}
f\end{align}
Here $\phi^{\circ}$ is the unique zero-crossing $\phi\in[0,1)$ of the following derivative: 
\begin{equation}\label{dL}
\frac{dL(\phi)}{d\phi}=\frac{(1-A_\phi){\lambda_b}-1}{\phi\sqrt{\lambda_b^{2}-1}}-\frac{({\lambda_b-1})\left({\ln{\lambda_b}-B_\phi}\right)}{\phi\left(\lambda_b^2-1\right)^{3/2}},
\end{equation}
where $A_\phi \triangleq \frac{\phi}{\lambda_e}\left(\rho_e+\phi\frac{d\rho_e}{d\phi}\right)$ and $B_\phi \triangleq \ln\lambda_e + R_s\ln2$ with $\frac{d\rho_e}{d\phi}$ given in Lemma \ref{dchi}, and $\rho_b^{\circ}$ is the unique root $\rho_b$ of the equation $X(\rho_b)=0$ with $X(\rho_b)$ given below: \begin{equation}\label{condition_phi_1}
X(\rho_b)=(1-A_1)(1+\rho_b)-1-\frac{\ln(1+\rho_b)-B_1}{2+\rho_b}.
\end{equation}
\end{theorem}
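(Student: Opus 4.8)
The plan is to regard $L(\phi)$ in \eqref{l-function} as a product $L = h(\lambda_b)\,g(\phi)$, where $h(\lambda_b)\triangleq \lambda_b/\sqrt{\lambda_b^2-1}$ and $g(\phi)\triangleq \ln(\lambda_b/\lambda_e)-R_s\ln 2$, and to exploit that $\lambda_b = 1+\phi\rho_b$ is affine in $\phi$ (with $\rho_b$ fixed once $\eta$ is given) while all the wiretap-side dependence enters through $\lambda_e = 1+\phi\rho_e(\phi)$. I would first verify the derivative \eqref{dL}: computing $h'(\lambda_b) = -(\lambda_b^2-1)^{-3/2}$ and $g'(\phi) = \rho_b/\lambda_b - (\rho_e+\phi\,d\rho_e/d\phi)/\lambda_e$ and applying the product rule, the result collapses to \eqref{dL} once the abbreviations $A_\phi = (\phi/\lambda_e)(\rho_e+\phi\,d\rho_e/d\phi)$ and $B_\phi = \ln\lambda_e + R_s\ln 2$ are introduced. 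This settles the form of $dL/d\phi$ and reduces the theorem to two tasks: establishing concavity and then locating the maximiser.

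Concavity is the core of the argument and I expect it to be the main obstacle. Writing $L'' = h''g + 2h'g' + hg''$, the difficulty is that $h$ is convex in $\lambda_b$ (one finds $h''(\lambda_b) = 3\lambda_b(\lambda_b^2-1)^{-5/2}>0$), so $h''g>0$ pushes against concavity and must be dominated by the $hg''$ and $2h'g'$ contributions. The obstruction is that $g''$ contains $\lambda_e'' = 2\rho_e' + \phi\rho_e''$, and $\rho_e(\phi)$ is known only implicitly through $\mathcal{O}_e(\Phi_e)=\delta$. This is precisely where Lemma \ref{dchi} is indispensable: its lower bound $\rho_e'' > \tfrac{2}{\rho_e}(\rho_e')^2$ forces $\lambda_e''$ to be large enough that $g''$ is sufficiently negative, and the real work is arranging the algebra so that substituting this inequality leaves a manifestly negative expression for $L''$ on the whole feasible interval.

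Granting concavity, $dL/d\phi$ is strictly decreasing, so it vanishes at most once on $[0,1)$; this at once makes the interior stationary point $\phi^\circ$ in \eqref{opt_phi} unique and a maximiser, and it implies the maximiser sits at the boundary $\phi=1$ exactly when $dL/d\phi|_{\phi=1}\ge 0$. To turn this into the stated condition I would evaluate \eqref{dL} at $\phi=1$, where $\xi=0$, $\lambda_b=1+\rho_b$, and Lemma \ref{dchi} gives $d\rho_e/d\phi|_{\phi=1}=\rho_e(1)\Gamma_e$. Factoring out $1/\sqrt{\rho_b(\rho_b+2)}>0$, the bracket reduces exactly to $X(\rho_b)$ of \eqref{condition_phi_1}, so the sign of $dL/d\phi|_{\phi=1}$ is the sign of $X(\rho_b)$. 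The same substitution yields $A_1 = \rho_e(1)(1+\Gamma_e)/(1+\rho_e(1))$, whence the side condition $\rho_e(1)/(1+\rho_e(1))<1/(1+\Gamma_e)$ is nothing but $A_1<1$.

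It remains to read off the threshold structure of \eqref{opt_phi}. If $A_1\ge 1$, then $(1-A_1)(1+\rho_b)-1\le -1$ while the term $-[\ln(1+\rho_b)-B_1]/(2+\rho_b)$ is negative on the feasible set — positivity of the secrecy rate forces $\ln(1+\rho_b)>B_1$ — so $X(\rho_b)<0$ throughout and the optimum is always interior, giving the ``otherwise'' branch. If $A_1<1$, then $X(\rho_b)$ is eventually positive ($\to+\infty$ as $\rho_b\to\infty$), and a sign analysis of $X'(\rho_b)$ shows it has a single zero crossing $\rho_b^\circ$ on the feasible range; for $\rho_b\ge\rho_b^\circ$, equivalently $\eta\ge\rho_b^\circ/P_b$, one has $X(\rho_b)\ge 0$ and thus $\phi^*=1$, while for smaller $\eta$ the optimum is the interior $\phi^\circ$. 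Assembling the two regimes reproduces \eqref{opt_phi}.
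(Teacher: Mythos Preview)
Your proposal is correct and matches the paper's approach: the same factorization $L=L_1L_2$ with $L_1=\lambda_b/\sqrt{\lambda_b^2-1}$, the same reliance on the bound $\rho_e''>\tfrac{2}{\rho_e}(\rho_e')^2$ from Lemma~\ref{dchi} (combined in the paper with $\ln(\lambda_b/\lambda_e)\le(\lambda_b-\lambda_e)/\lambda_e$) to force $L''<0$, and the same reduction of $dL/d\phi|_{\phi=1}\ge0$ to $X(\rho_b)\ge0$ under the condition $A_1<1$. The only minor difference is at the very end: the paper asserts $X(\rho_b)$ is monotone increasing in $\rho_b$, whereas you argue for a single zero-crossing via a sign analysis of $X'$; either suffices for the threshold structure.
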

\begin{proof}
	Please refer to Appendix \ref{appendix_theorem_opt_phi}.
\end{proof}

Theorem \ref{theorem_opt_phi} shows that, the naive beamforming scheme without injecting any AN is optimal for maximizing the secrecy throughput only when the quality of the main channel is good enough and meanwhile the quality of the wiretap channel is poor or a high information leakage probability is acceptable. Using the derivative rule for implicit functions with \eqref{dL} proves that $\frac{d\phi^*}{dR_s}>0$, which suggests that in order to support a higher secrecy rate, a larger fraction of power should be allocated to the information signal although at the cost of a larger required rate redundancy.

For a robust design perspective, a worst-case scenario is considered by ignoring Eve's thermal noise, i.e., $\Gamma_e\rightarrow\infty$ in \eqref{oe_app}, such that $\rho_e = \frac{\Lambda}{1-\phi}$ with $\Lambda = (M-1)(\delta^{\frac{1}{1-M}}-1)$.
It is seen from \eqref{dL} that $\phi^*$ is a function of $\eta$ and $\delta$, and the monotonicity of $\phi^*$ is revealed as below. 
  \begin{corollary}\label{corollary_opt_phi}
  	For the worst case $\Gamma_e\rightarrow\infty$, the optimal power allocation $\phi^*$ is non-decreasing w.r.t. $\eta$ and $\delta$. Moreover, $\lim_{\eta\rightarrow\infty}\phi^* = \frac{1}{\sqrt{\Lambda}+1}$ 
and $\lim_{\delta\rightarrow 1}\phi^* = 1$.
\end{corollary}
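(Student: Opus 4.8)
The plan is to exploit the concavity of $L(\phi)$ established in Theorem~\ref{theorem_opt_phi} together with the implicit function theorem applied to the stationarity condition $\frac{dL(\phi)}{d\phi}=0$. For the worst case $\Gamma_e\to\infty$ one has $\rho_e=\frac{\Lambda}{1-\phi}$, whence $\rho_e(1)=\infty$ and the side condition $\frac{\rho_e(1)}{1+\rho_e(1)}<\frac{1}{1+\Gamma_e}$ for $\phi^*=1$ in Theorem~\ref{theorem_opt_phi} can never hold; hence $\phi^*=\phi^{\circ}$ is always the interior root of \eqref{dL}, so the implicit function theorem is applicable throughout. The first --- and decisive --- simplification I would make is to clear the denominators in \eqref{dL}: multiplying by $\phi(\lambda_b^2-1)^{3/2}>0$ and cancelling the common factor $(\lambda_b-1)>0$ collapses the stationarity condition to the clean form
\[
H(\phi;\eta,\delta)\triangleq\left[(1-A_\phi)\lambda_b-1\right](\lambda_b+1)-\left(\ln\lambda_b-B_\phi\right)=0,
\]
which is precisely the generalization of $X(\rho_b)$ in \eqref{condition_phi_1} to arbitrary $\phi$. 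Because $\frac{dL}{d\phi}$ equals $H$ times the strictly positive factor $\frac{\lambda_b-1}{\phi(\lambda_b^2-1)^{3/2}}$, $H$ and $\frac{dL}{d\phi}$ share sign and zeros; in particular concavity gives $\frac{\partial H}{\partial\phi}<0$ at $\phi^*$, so by the implicit function theorem the signs of $\frac{d\phi^*}{d\eta}$ and $\frac{d\phi^*}{d\delta}$ coincide with those of $\frac{\partial H}{\partial\eta}$ and $\frac{\partial H}{\partial\delta}$, respectively.

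For the monotonicity in $\eta$, I would note that $\eta$ enters $H$ only through $\lambda_b=1+\phi P_b\eta$, while $A_\phi$ and $B_\phi$ are $\eta$-independent. Differentiating gives $\frac{\partial H}{\partial\lambda_b}=(1-A_\phi)(2\lambda_b+1)-1-\frac{1}{\lambda_b}$. Here I would invoke the stationarity condition itself: since the positive-secrecy-rate region requires $\ln\lambda_b-B_\phi=\ln\frac{\lambda_b}{\lambda_e}-R_s\ln2>0$, the right-hand side of $H=0$ is positive, forcing $(1-A_\phi)\lambda_b>1$, i.e. $1-A_\phi>\frac{1}{\lambda_b}$; substituting this bound yields $\frac{\partial H}{\partial\lambda_b}>1>0$, hence $\frac{\partial H}{\partial\eta}>0$ and $\frac{d\phi^*}{d\eta}>0$. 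For the monotonicity in $\delta$, I would substitute $\rho_e=\frac{\Lambda}{1-\phi}$, compute $A_\phi=\frac{\phi\Lambda}{(1-\phi)(1-\phi+\phi\Lambda)}$ and $B_\phi=\ln\frac{1-\phi+\phi\Lambda}{1-\phi}+R_s\ln2$, and differentiate w.r.t. $\Lambda$ to obtain the tidy expressions $\frac{\partial A_\phi}{\partial\Lambda}=\frac{\phi}{(1-\phi+\phi\Lambda)^2}$ and $\frac{\partial B_\phi}{\partial\Lambda}=\frac{\phi}{1-\phi+\phi\Lambda}$. These combine into $\frac{\partial H}{\partial\Lambda}=\frac{\phi}{1-\phi+\phi\Lambda}\left(1-\frac{\lambda_b(\lambda_b+1)}{(1-\phi)\lambda_e}\right)$, which is negative because $\lambda_b>\lambda_e>1$ and $1-\phi<1$ force the bracketed term below zero. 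Since $\Lambda=(M-1)(\delta^{1/(1-M)}-1)$ is strictly decreasing in $\delta$, the chain rule gives $\frac{\partial H}{\partial\delta}>0$ and therefore $\frac{d\phi^*}{d\delta}>0$.

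The two limits I would treat asymptotically through $H=0$. As $\eta\to\infty$, $\lambda_b\to\infty$, so the right-hand side $\ln\lambda_b-B_\phi$ grows only logarithmically while the left-hand side scales as $(1-A_\phi)\lambda_b^2$; the only way to balance the two is $A_\phi\to1$, and solving $A_\phi=1$ with the worst-case $A_\phi$ reduces to $\phi^2\Lambda=(1-\phi)^2$, whose admissible root gives $\phi^*\to\frac{1}{\sqrt{\Lambda}+1}$. As $\delta\to1$, $\Lambda\to0^{+}$; then for every fixed $\phi\in(0,1)$ one has $A_\phi\to0$ and $\lambda_e\to1$, so $H\to\lambda_b^2-1-\ln\lambda_b+R_s\ln2>0$, whereas $H\to-\infty$ as $\phi\to1^{-}$ (driven by $A_\phi\sim\frac{1}{1-\phi}\to\infty$). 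Consequently the sign change of $H$, and hence $\phi^*$, is pushed to the boundary, giving $\phi^*\to1$. Monotonicity together with the bound $\phi^*\in(0,1)$ guarantees that both limits exist.

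I expect the main obstacle to be twofold. First, producing and justifying the collapse of \eqref{dL} to the single clean identity $H=0$ --- cancelling $(\lambda_b-1)$ and verifying that the residual factor is sign-definite --- is what makes every subsequent sign computation tractable; without it the derivatives of \eqref{dL} are unwieldy. Second, the $\delta\to1$ limit is delicate because $\rho_e=\frac{\Lambda}{1-\phi}$ is singular at the very boundary $\phi=1$ toward which $\phi^*$ migrates, so a naive substitution of $\phi=1$ is illegitimate; the argument must instead rest on the pointwise positivity of $H$ on $(0,1)$ together with its blow-up to $-\infty$ at the boundary, which I would make rigorous via the monotone-bounded convergence of $\phi^*$ rather than by evaluating $H$ at $\phi=1$.
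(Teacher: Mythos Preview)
Your proposal is correct and follows essentially the same route as the paper: implicit differentiation of the first-order condition $\frac{dL}{d\phi}=0$, using the stationarity itself to extract the key bound $(1-A_\phi)\lambda_b>1$, and an asymptotic reading of the same condition for the two limits. The only differences are cosmetic---you first factor out the positive multiplier $\frac{\lambda_b-1}{\phi(\lambda_b^2-1)^{3/2}}$ to work with the cleaner $H$ rather than $K=\frac{dL}{d\phi}$, and you spell out the $\delta$-monotonicity and the $\delta\to1$ limit explicitly whereas the paper dispatches both with ``similarly.''
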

\begin{proof}
	Please refer to Appendix \ref{appendix_corollary_opt_phi}.
\end{proof}

Corollary \ref{corollary_opt_phi} suggests that when the quality of the main channel improves (i.e., a larger $\eta$) or the secrecy requirement is relaxed (i.e., a larger $\delta$), it would be more appealing to use a higher signal power to promote the main channel than to increase the AN power to degrade the wiretap channel. This is because that the main channel becomes the {dominate} factor to the improvement of secrecy throughput. Different from Theorem \ref{theorem_opt_phi} where $\phi^*=1$ can be achieved,  the optimal $\phi^*$ here only can be increased up to $\frac{1}{\sqrt{\Lambda}+1}$ as $\eta\rightarrow\infty$ due to Eve's background noise being ignored. Besides, it is unsurprising that $\phi^*=1$ for $\delta=1$ since there is no secrecy requirement.

\subsubsection{Solving $R_s$}
For any given power allocation $\phi^*$, it can be proved that the secrecy throughput $\mathcal{T}_{\rm A}(\eta)$ is a concave function of the secrecy rate $R_s$ as done in Theorem \ref{theorem_opt_rs}. Hence, the optimal $R_s^*$ maximizing $\mathcal{T}_{\rm A}(\eta)$ is given by \eqref{opt_rs} and a closed-form lower bound on $R_s^{*}$ can be found in \eqref{lower_bound}. Eventually, problem \eqref{st_max_multi_problem} can be addressed via an alternating optimization (AO) method, which is summarized in Algorithm \ref{ao_algorithm}. 
In addition, at the high $\eta$ regime, the optimal $\phi^*$ is independent of $R_s$, and hence a global optimal pair $(\phi^*, R_s^*)$ is obtained for maximizing  $\mathcal{T}_{\rm A}(\eta)$. 
\begin{algorithm}[!t]
\caption{AO Algorithm for Solving Problem \eqref{st_max_multi_problem}}
\begin{algorithmic}[1]\label{ao_algorithm}
\STATE Initialize $k=1$, $\phi^{(0)}\in[0,1]$, $R_s^{(0)}\ge 0 $, and assign $\epsilon$ a sufficiently small positive value, e.g., $\epsilon = 10^{-10}$;
\STATE Input $\delta\in[0,1]$, $N\ge1$, and $P_b, \Gamma_e, \eta = \|\bm h_b\|^2>0$;
\STATE Calculate $\mu$ from \eqref{mu_multi} and  $\mathcal{T}_{\rm A}^{(0)}(\eta)= R_s^{(0)}p_s^{(0)}$;
\IF{$\eta<\mu$}
\STATE $\mathcal{T}_{\rm A}^{(k)}(\eta)\leftarrow0$;
\ELSE
\STATE\label{update} Update $\phi^{(k)}\leftarrow \phi^{(k-1)}$, $R_s^{(k)}\leftarrow R_s^{(k-1)}$;
\STATE Calculate $\rho_b^{\circ}$ from \eqref{condition_phi_1};
\IF{$P_b\eta\ge\rho_b^{\circ}$}
\STATE $\phi^{(k)}\leftarrow 1$;
\ELSE
\STATE Calculate $\phi^{(k)}$ from \eqref{dL};
\ENDIF
\STATE Calculate $R_s^{(k)}$ from \eqref{opt_rs};
\STATE\label{calculate_rs} Update $\mathcal{T}_{\rm A}^{(k)}(\eta)\leftarrow R_s^{(k)}p_s^{(k)}$;
\WHILE{$\left|\left[\mathcal{T}_{\rm A}^{(k)}(\eta)-\mathcal{T}_{\rm A}^{(k-1)}(\eta)\right]/{\mathcal{T}_{\rm A}^{(k-1)}(\eta)}\right|\ge\epsilon$}
\STATE Update $k \leftarrow k + 1$;
\STATE Repeat step \ref{update} to step \ref{calculate_rs};
\ENDWHILE
\ENDIF
	\STATE Output $\mathcal{T}_{\rm A}^{(k)}(\eta)$
\end{algorithmic}
\end{algorithm}

Fig. \ref{ADA_PHI_ST_RS_MULTI} illustrates the optimal power allocation $\phi^*$ and the corresponding maximal secrecy throughput $\mathcal{T}_{\rm A}(\eta)$ for varying secrecy rate $R_s$. The maximal $\mathcal{T}_{\rm A}(\eta)$ is concave on $R_s$, which guarantees the global optimality of the solution and the convergence of the proposed AO algorithm. The optimal $\phi^*$ increases with $\eta$ and $R_s$, which verifies Corollary \ref{corollary_opt_phi}. { In addition, the curves of $\phi^*$ and $\mathcal{T}_{\rm A}(\eta)$ are truncated after $R_s$ exceeds some critical values. This can be explained similarly as that of Fig. \ref{ADA_ST_RS}.}  
It is shown that $\phi^*$ increases with the blocklength $N$, although slightly. This is because, increasing $N$ will mildly decrease the information leakage probability $\mathcal{O}_e$, thus allowing a larger portion of power to be devoted to transmitting the information-bearing signal.  

\begin{figure}[!t]
	\centering
	\includegraphics[width = 3.6in]{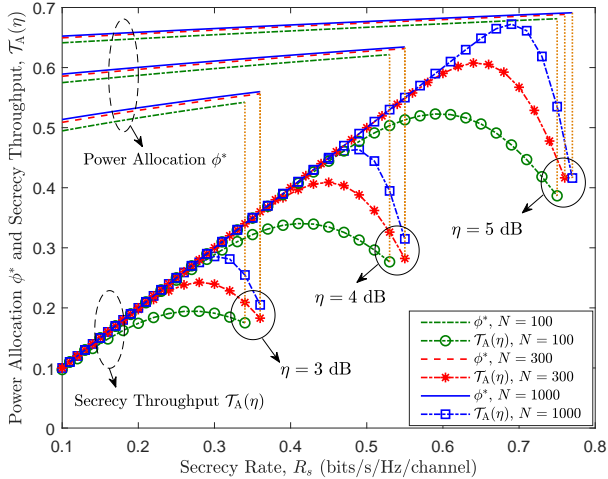}
	\caption{Optimal $\phi^*$ and  $\mathcal{T}_{\rm A}(\eta)$ vs. $R_s$ for different $N$ and $\eta$, with $M=4$, $P_b = 0$ dB, $\Gamma_e = 0$ dB, and $\delta = 0.2$.}
	\label{ADA_PHI_ST_RS_MULTI}
\end{figure}

\subsection{Non-Adaptive Optimization Scheme}
This subsection examines the secrecy throughput maximization through a non-adaptive design manner for the multi-antenna transmitter case. The problem can be formulated as
\begin{subequations}\label{st_max_non_multi}
	\begin{align}
\max_{\mu, R_e, R_s, n,\alpha,\phi}& ~\mathcal{T}_{\rm N}=R_s\bar p_s\\
~~{\rm s.t.} 
~&~\eqref{st_max_c1}-\eqref{st_max_c3}, ~0\le \alpha,\phi\le 1,
\end{align}
\end{subequations}
where $\bar p_s$ is the average successful decoding probability.

The basic idea to solve problem \eqref{st_max_non_multi} is similar to that of problem \eqref{st_max_non}. Again, the optimal rate redundancy is $R_e^* = \mathcal{O}_e^{-1}(\delta)$ with $\mathcal{O}_e$ given in \eqref{oe}.
For the adaptive case, it is known from \eqref{neq_positive} that $\kappa(\gamma_b(\phi,1),\alpha)>\kappa(\Phi_e(\phi,1),\alpha)\Rightarrow\eta>\mu(\phi)=\frac{\rho_e(\phi)}{P_b}$ suffices to guarantee a positive secrecy rate $R_s$ with the threshold $\mu(\phi)$ independent of $\alpha$, and then $\alpha^*=1$ is optimal for secrecy throughput maximization. As for the non-adaptive one, supporting a certain secrecy rate $R_s$ requires that $1+\kappa(\gamma_b(\phi,1),\alpha)>2^{R_s}(1+\kappa(\Phi_e(\phi,1),\alpha))$ which is further transformed to 
\begin{equation}\label{mu_multi_non}
\eta>\mu(\phi)=\frac{1}{\phi P_b}\frac{1}{\frac{\alpha}{2^{R_s}(1+\kappa(\Phi_e^*(\phi,1),\alpha))-1}-1+\alpha}.
\end{equation}
Although $\mu(\phi)$ herein depends on $\alpha$, it is proved that $\mu(\phi)$ monotonically decreases with $\alpha$. Hence, $\alpha^*=1$ is still throughput-optimal for the non-adaptive case. Accordingly, the optimal threshold is $\mu^*= \frac{2^{R_s}(1+\phi\rho_e)}{\phi P_b}$. {Similar to the proof of} Theorem \ref{theorem_opt_n_non}, using the maximal blocklength is optimal for maximizing secrecy throughput, regardless of the distribution of $\gamma_b$. Hence, the optimal blocklength is $n^*=N$. Afterwards, the secrecy throughput is calculated from \eqref{st_non}:
\begin{align}\label{st_multi}
\mathcal{T}_{\rm N} &= R_s\bar p_s = R_s\left[1-\frac{1}{2}\mathcal{F}_{\gamma_b}(\theta_b^2)-\frac{\beta}{\theta_b}\int_{\theta_b^2}^{\tau^u_b}\mathcal{F}_{\gamma_b}(\gamma)d\gamma\right]\nonumber\\&\stackrel{\mathrm{(a)}}= R_s\left[\frac{\bar\Gamma(M,\varrho_1)}{2}+\frac{\phi \Gamma_b\beta}{\theta_b}\Delta\Gamma\right],
\end{align}
{ where $\mathrm{(a)}$ holds by invoking the CDF $\mathcal{F}_{\gamma_b}(\gamma)$ of $\gamma_b$ in \eqref{CDF_rb} and computing the integral, with $\Delta\Gamma \triangleq \sum_{k=0}^{M-1}\left[\bar\Gamma(k+1,\varrho_1)-\bar\Gamma(k+1,\varrho_2\right] ) $ and $\bar \Gamma(m+1,x) \triangleq\sum_{k=0}^m\frac{x^ke^{-x}}{k!}$ being the regularized upper
incomplete gamma function, with $\varrho_1 = \frac{\theta_b^2}{\phi \Gamma_b}$, $\varrho_2 = \frac{\theta_b^2}{\phi \Gamma_b}+\frac{{\theta_b}}{2\beta\phi \Gamma_b}$, $\theta_b=\sqrt{2^{R_s+R_e^*}-1}$, and $\beta=\frac{\sqrt{N}}{2\pi}$.}
Differentiating $\mathcal{T}_{\rm N}$ w.r.t. $\phi$ yields 
\begin{align}
\frac{d\mathcal{T}_{\rm N}}{d \phi} =& R_s\bigg[\frac{\varpi_1\varrho_1^{M}e^{-\varrho_1}}{2(M-1)!} +\frac{\phi \Gamma_b\beta\varpi_2\Delta\Gamma}{\theta_b}+\beta\theta_b \varpi_1\Gamma(M,\varrho_1)\nonumber\\
&-\left(\beta\theta_b \varpi_1+\frac{\varpi_2}{2}\right) \bar\Gamma(M,\varrho_2)\bigg],
\end{align}
where $\varpi_1 = \frac{1}{\phi}-\frac{2^{R_s}}{\theta_b^2}\frac{d\lambda_e}{d\phi}$ and $\varpi_2 =\frac{1}{\phi}- \frac{2^{R_s}}{2\theta_b^2}\frac{d\lambda_e}{d\phi}$ with $\lambda_e$ given in \eqref{st_gamma_multi}. It is verified that the derivative $\frac{d\mathcal{T}_{\rm N}}{d \phi}$ is monotonically decreasing with $\phi$. In other words, for a fixed $R_s$, the optimal $\phi^*$ that maximizes $\mathcal{T}_{\rm N}$ is unique, which is $\phi^*=1$ if $\frac{d\mathcal{T}_{\rm N}}{d \phi}|_{\phi=1}>0$ or otherwise satisfies $\frac{d\mathcal{T}_{\rm N}}{d \phi}=0$. Likewise, it is confirmed that the derivative
\begin{align}
\frac{d\mathcal{T}_{\rm N}}{d R_s} =&
\frac{\bar\Gamma(M,\varrho_1)}{2} +\frac{\phi \Gamma_b\beta}{\theta_b}\Delta\Gamma-\frac{\lambda_eR_s2^{R_s}\ln2}{\theta_b}\bigg[{\beta}\bar\Gamma(M,\varrho_1)\nonumber\\
&+\frac{\phi \Gamma_b\beta \Delta\Gamma}{2\theta_b^2}+\frac{\varrho_1^Me^{-\varrho_1}}{2\theta_b(M-1)!}-\left({\beta}+\frac{1}{4\theta_b}\right)\bar\Gamma(M,\varrho_2)\bigg]
\end{align}
is first positive and then negative with increasing $R_s$, and the unique optimal $R_s^*$ maximizing $\mathcal{T}_{\rm N}$ can be calculated via a bisection method with the equation $\frac{d\mathcal{T}_{\rm N}}{d R_s}=0$. 

\begin{figure}[!t]
	\centering
	\includegraphics[width = 3.6in]{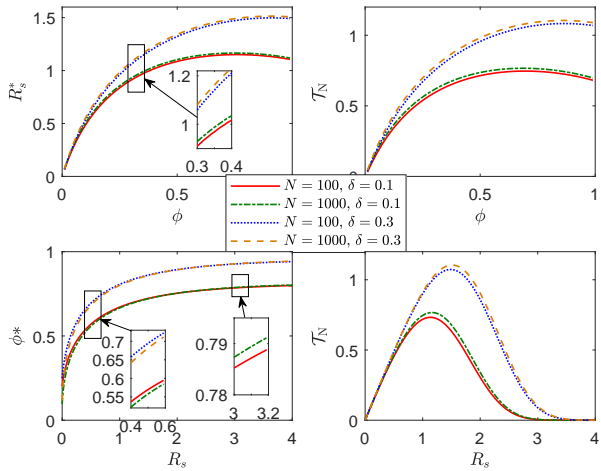}
	\caption{Above: $R_s^*$ and  $\mathcal{T}_{\rm N}$ vs. $\phi$; Bottom: $\phi^*$ and  $\mathcal{T}_{\rm N}$ vs. $R_s$; for different $N$ and $\delta$, with $M=4$, $\Gamma_b = 3$ dB and $\Gamma_e = 0$ dB.}
	\label{NON_ST_RS_PHI_MULTI}
\end{figure}
The monotonicity of $\mathcal{T}_{\rm N}$ w.r.t. $\phi$ and $R_s$ is verified in Fig. \ref{NON_ST_RS_PHI_MULTI}, where,
similar to Fig. \ref{ADA_PHI_ST_RS_MULTI}, $\mathcal{T}_{\rm N}$ is given with the optimal $\phi^*$ or $R_s^*$. This implies that the global maximal $\mathcal{T}_{\rm N}$ is practically achieved even by alternatively solving the optimal $\phi$ and $R_s$. 
As expected, $\mathcal{T}_{\rm N}$ improves with a larger blocklength $N$ and a looser secrecy constraint (a larger $\delta$).
It is found that $R_s^*$ first increases and then might decrease with $\phi$, which means that a moderate $R_s$ is desired to balance the decoding and throughput performance. On the other hand, a larger $\phi^*$ is required to support an increasing $R_s$.
It is also shown that $R_s^*$ for a fixed $\phi$ increases with $N$, since a larger $N$ improves the decoding performance which then affords a larger $R_s$. Nevertheless, $\phi^*$ decreases with $N$ in the low $R_s$ regime whereas increases with $N$ in the high $R_s$ regime. It can be explained as follows: for a low $R_s$, the rate redundancy $R_e$ has a great impact on the decoding performance, and hence the AN power should be increased as $N$ increases to better combat the eavesdropper; in contrast, for a large $R_s$, the decoding correctness is more affected by the main channel quality, which requires a larger signal power to maintain a high decoding probability.

%
\begin{proposition}\label{proposition_high}
	For the high average channel gain  $\Gamma_b\rightarrow\infty$, $\mathcal{T}_{\rm N}$ in \eqref{st_multi} is approximated as 
	\begin{equation}\label{st_high}
	\lim_{\Gamma_b\rightarrow\infty}\mathcal{T}_{\rm N} = R_s\left(1-\frac{\varrho_1^M}{2M!}\right).
	\end{equation}
\end{proposition}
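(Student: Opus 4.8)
The plan is to work directly from the exact expression for $\mathcal{T}_{\rm N}$ in \eqref{st_multi} and to extract its leading behavior together with the first nontrivial correction as $\Gamma_b\to\infty$. First I would note that, with $R_s$, $R_e^*$, and $N$ held fixed, the quantities $\theta_b=\sqrt{2^{R_s+R_e^*}-1}$ and $\beta=\sqrt{N}/(2\pi)$ are independent of $\Gamma_b$ (recall $R_e^*$ depends on $\Gamma_e$, not on $\Gamma_b$), so that both $\varrho_1=\theta_b^2/(\phi\Gamma_b)$ and $\varrho_2=\varrho_1+\theta_b/(2\beta\phi\Gamma_b)$ shrink to $0$ at the rate $\Gamma_b^{-1}$. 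Consequently the two additive pieces in the bracket of \eqref{st_multi}, namely $\bar\Gamma(M,\varrho_1)/2$ and $(\phi\Gamma_b\beta/\theta_b)\Delta\Gamma$, must be handled separately: the former approaches a constant, whereas the latter is an indeterminate $\infty\times 0$ form and is the real crux of the argument.

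The key step is to resolve the $\Delta\Gamma$ term through an integral representation. Using the elementary identity $\frac{d}{dx}\bar\Gamma(k+1,x)=-x^k e^{-x}/k!$ (itself obtained by telescoping the differentiated partial sum), each summand of $\Delta\Gamma=\sum_{k=0}^{M-1}[\bar\Gamma(k+1,\varrho_1)-\bar\Gamma(k+1,\varrho_2)]$ equals $\int_{\varrho_1}^{\varrho_2}x^k e^{-x}/k!\,dx$; summing over $k$ and exchanging sum with integral yields the compact form $\Delta\Gamma=\int_{\varrho_1}^{\varrho_2}\bar\Gamma(M,x)\,dx$. The mean value theorem for integrals then gives $\Delta\Gamma=\bar\Gamma(M,\xi)\,(\varrho_2-\varrho_1)$ for some $\xi\in(\varrho_1,\varrho_2)$. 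Since $\varrho_2-\varrho_1=\theta_b/(2\beta\phi\Gamma_b)$, the diverging prefactor cancels the vanishing interval width \emph{exactly}, leaving $(\phi\Gamma_b\beta/\theta_b)\Delta\Gamma=\bar\Gamma(M,\xi)/2$; and because $\xi\to 0$ forces $\bar\Gamma(M,\xi)\to\bar\Gamma(M,0)=1$, this term converges to $1/2$.

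For the first term I would invoke the small-argument expansion of the regularized upper incomplete gamma function. Writing $\bar\Gamma(M,\varrho_1)=1-\sum_{k\ge M}\varrho_1^k e^{-\varrho_1}/k!$ and noting that the tail is dominated by its $k=M$ entry as $\varrho_1\to 0$, one obtains $\bar\Gamma(M,\varrho_1)=1-\varrho_1^M/M!+O(\varrho_1^{M+1})$, hence $\bar\Gamma(M,\varrho_1)/2\approx \tfrac12-\varrho_1^M/(2M!)$. Adding the two contributions inside the bracket then gives $\mathcal{T}_{\rm N}\approx R_s\,(1-\varrho_1^M/(2M!))$, which is precisely \eqref{st_high}.

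The main obstacle, and the step deserving the most care, is the $\infty\times 0$ cancellation in the $\Delta\Gamma$ term: one cannot pass to the limit factor by factor, and the outcome hinges on recognizing the integral representation so that the prefactor $\phi\Gamma_b\beta/\theta_b$ is paired with the interval length $\varrho_2-\varrho_1$. A secondary subtlety is the bookkeeping of truncation orders. The stated coefficient $1/(2M!)$ arises because the $\Delta\Gamma$ term is retained only at leading order ($\tfrac12$) while the $O(\varrho_1^M)$ correction is kept in the first term. I would make explicit that the curvature of $\bar\Gamma(M,\cdot)$ across $[\varrho_1,\varrho_2]$ contributes a correction of the same order $O(\varrho_1^M)$ that is being absorbed into the approximation, so that \eqref{st_high} should be read as an asymptotic approximation capturing the dominant decoding-error correction near the on-off threshold $\theta_b^2$, rather than as an exact limit.
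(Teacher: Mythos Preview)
Your argument is correct and reaches the same conclusion as the paper, but the route you take for the $\Delta\Gamma$ term is genuinely different. The paper handles $\Delta\Gamma$ purely algebraically: it expands $\bar\Gamma(k+1,\varrho_i)=e^{-\varrho_i}\sum_{m=0}^{k}\varrho_i^m/m!$, swaps the order of the double sum to obtain $\sum_{k=0}^{M-1}\frac{M-k}{k!}\bigl(e^{-\varrho_1}\varrho_1^k-e^{-\varrho_2}\varrho_2^k\bigr)$, and then applies the small-$\varrho$ expansion $\bar\Gamma(m,\varrho)\approx 1-\varrho^m/m!$ term by term, after which the surviving leading contribution is exactly $\varrho_2-\varrho_1=\theta_b/(2\beta\phi\Gamma_b)$; multiplying by the prefactor $\phi\Gamma_b\beta/\theta_b$ gives $1/2$. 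Your approach replaces this combinatorial rearrangement by the integral identity $\Delta\Gamma=\int_{\varrho_1}^{\varrho_2}\bar\Gamma(M,x)\,dx$ together with the mean value theorem, which pairs the diverging prefactor with the interval width in a single step and makes the $\infty\times 0$ cancellation transparent. Both methods discard a correction of order $O(\varrho_1^M)$ in the $\Delta\Gamma$ piece while retaining it in the $\bar\Gamma(M,\varrho_1)/2$ piece; you state this bookkeeping explicitly, whereas the paper leaves it implicit. The integral/MVT route is shorter and conceptually cleaner; the paper's sum-rearrangement route is more elementary in that it never leaves finite sums, and it makes visible exactly which monomials survive.
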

\begin{proof}
	Please refer to Appendix \ref{appendix_proposition_high}.
\end{proof}

Proposition \ref{proposition_high} shows that for a high average channel gain, the secrecy throughput becomes independent of the blocklength.
In consequence, the optimal $\phi^*$ and $R_s^*$ maximizing $\mathcal{T}_{\rm N}$ in \eqref{st_high} admit the following closed-form approximations \cite[(19), (20)]{Zhang2013Design}
\begin{align}
\lim_{\Gamma_b\rightarrow\infty}\phi^*&= \frac{1}{\sqrt{\Lambda}+1},~~~\\
\lim_{\Gamma_b\rightarrow\infty}R_s^* &= \frac{1}{M\ln2}\left[\mathcal{W}_0\left(\frac{2\exp(1) M!\Gamma_b^M}{(\sqrt{\Lambda}+1)^{2M}}\right)-1\right].
\end{align}

Fig. \ref{MAX_ST_MM_MULTI} illustrates the influence of the number of transmit antennas $M$ on the maximal secrecy throughput $\mathcal{T}^*$ for both adaptive and non-adaptive schemes and the relative secrecy throughput gain $\Delta\mathcal{T}= \frac{\mathcal{T}_{\rm A}^*-\mathcal{T}_{\rm N}^*}{\mathcal{T}_{\rm N}^*}$.
It is not surprising that deploying more transmit antennas can significantly improve the secrecy throughput for both schemes. Similar to the observation in Fig. \ref{MAX_ST_SINGLE}, both $\mathcal{T}_{\rm A}^*$ and $\mathcal{T}_{\rm N}^*$ increase with $\delta$ and $N$, but the benefit to $\mathcal{T}_{\rm N}^*$ brought by a larger $N$ is nearly negligible. The right-hand-side subgraph shows that $\Delta\mathcal{T}$ drops sharply as $M$ increases but grows for a larger $N$ and a smaller $\delta$. This indicates that the superiority of the adaptive scheme over its non-adaptive counterpart is more pronounced for the scenarios requiring a large blocklengh, having few transmit antennas, suffering from a stringent secrecy constraint, etc; otherwise, the non-adaptive scheme might be appealing because of the low-complexity off-line design. 
\begin{figure}[!t]
	\centering
	\includegraphics[width = 3.6in]{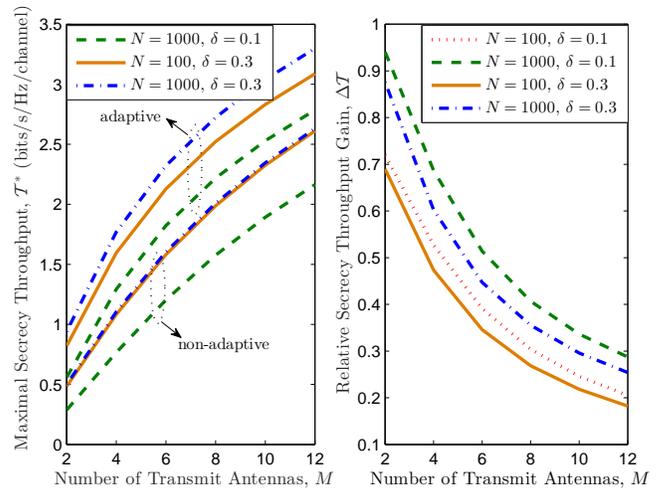}
	\caption{$\mathcal{T}^*$ and $\Delta\mathcal{T}$ vs. $M$ for different $N$ and $\delta$, with $\Gamma_b = 3$ dB and $\Gamma_e = 0$ dB.}
	\label{MAX_ST_MM_MULTI}
\end{figure}

{\subsection{A Note on Multi-Antenna Eve}
	This subsection examines the secure transmission in the presence of an Eve with $M_e$ antennas. Assume that Eve employs the minimum mean-squared error (MMSE) receiver, and then the CDF of Eve's SINR under the null-space AN scheme can be given as \cite{Zheng2017Safeguarding}:
	\begin{equation}\label{cdf_eve}
	\mathcal{F}_{\gamma_{e}}(x)=1-e^{-\frac{x}{\phi P_{e}} } \sum_{n=1}^{M_{e}} \frac{A_{n}(x)}{(n-1) !}\left(\frac{x}{\phi P_{e}}\right)^{n-1},
	\end{equation}
	where 
	\begin{align}
	A_{n}(x)
	=\begin{cases}
	1, &M_{e} \geq M-1+n,\\
	\frac{\sum_{m=0}^{M_{e}-n}\binom{M-1}{m}(\xi x)^{m}}{(1+\xi x)^{M-1}}, & M_{e}<M-1+n.\\
	\end{cases}
	\end{align}
	The information leakage probability $\mathcal{O}_e$ is obtained by substituting \eqref{cdf_eve} into \eqref{app_oe}, and the secrecy throughput can be optimized similarly as described in the above two subsections.
	
	By ignoring the receiver noise at Eve, i.e., considering Eve's transmit power $P_e\rightarrow\infty$, one can obtain $\mathcal{F}_{\gamma_{e}}(x)=1-A_{1}(x)$. Furthermore, if Eve has more antennas than Alice, i.e., $M_e\geq M$, one have $A_1(x)=1$, $\mathcal{F}_{\gamma_{e}}(x)=0$, and accordingly $\mathcal{O}_e=1$. 
	This means, when $P_e\rightarrow\infty$, Eve with enough antennas can completely eliminate all the AN signals with an MMSE receiver such that her SNR will approach infinity. As a consequence, the SOP constraint can no longer be satisfied for any chosen rate redundancy, and no positive secrecy rate can be achieved from the perspective of secrecy outage. In other words, the null-space AN scheme can safeguard secure transmissions well for the finite blocklength regime only when the eavesdropper has fewer antennas than the transmitter, and this conclusion is the same as that for the infinite blocklength case.
}

\section{Conclusions}
This paper investigated the design of secure transmissions in slow fading channels, where secrecy encoding with finite blocklength  was employed to confront the eavesdropper. 
Both adaptive and non-adaptive schemes were devised to maximize the secrecy throughput, providing the optimal threshold of the on-off transmission policy, blocklength, code rates, and power allocation of the AN scheme.
{ Theoretical and numerical results showed that, under the on-off policy, increasing the blocklength can simultaneously enhance the reliability and secrecy, and thus the secrecy throughput is maximized when using the maximal blocklength. In addition, since an overly large secrecy rate will significantly decrease the successful decoding probability thus lowering the secrecy throughput, there exists a critical secrecy rate, but not as large as possible, that can achieve the maximal secrecy throughput. }

\appendix
\subsection{Proof of Lemma \ref{lemma_opt_re_n}}
\label{appendix_lemma_opt_re_n}
For tractability, a {piece-wise} linear approximation approach is leveraged to approximate the $Q$-function given in \eqref{error_pro}, i.e., $Q\left(\frac{C_i-R_i}{\sqrt{V_i/n}}\right)\approx \Xi(\gamma_i,n,R_i)$ for $i\in\{b,e\}$ \cite{Makki2014Finite,Makki2016Wireless},\footnote{{ This approximation has been extensively applied to the finite-blocklength scenarios, and its accuracy has been well validated.}} with 
\begin{align}\label{app_q}
\Xi(\gamma_i,n,R_i)
=\begin{cases}
0, & \gamma_i>\tau^u_i,\\
\frac{1}{2}-\frac{\beta}{\theta_i}(\gamma_i - \theta_i^2), & \tau^l_i\le\gamma_i\le\tau^u_i,\\
1,& \gamma_i<\tau^l_i,\\
\end{cases}
\end{align}
where $\beta\triangleq \frac{\sqrt{n}}{2\pi}$, $\theta_i\triangleq\sqrt{2^{R_i}-1}$, $\tau^u_i\triangleq\theta_i^2+\frac{\theta_i}{2\beta}$, and  $\tau^l_i\triangleq\theta_i^2-\frac{\theta_i}{2\beta}$.\footnote{Generally, $\theta_i>\frac{1}{2\beta}$ or $R_i>\log_2\left(1+{\pi^2}/{n}\right)$ should be satisfied to ensure a positive $\tau_i^l$.}
With \eqref{app_q}, the information leakage probability $\mathcal{O}_e$ defined in \eqref{oe} is calculated as 
\begin{align}\label{app_oe}
\mathcal{O}_e =1 - \mathbb{E}_{\gamma_e}\left[\Xi(\gamma_e,n,R_e)\right]= 1 -\frac{\beta}{\theta_e} \int_{\tau^l_e}^{\tau^u_e}\mathcal{F}_{\gamma_e}(\gamma)d\gamma,
\end{align}
where $\mathcal{F}_{\gamma_i}(\gamma) = 1 - e^{-\gamma/\Gamma_i^2}$ is the CDF of $\gamma_i$ for $i\in\{b,e\}$, and the last equality in \eqref{app_oe} follows from invoking \eqref{app_q} and using partial integration.
Next, treat $n$ as a continuous variable. As $R_e^*$ satisfies $\mathcal{O}_e(R_e^*)=\delta$, the derivative $\frac{dR_e^*}{dn}$ is obtained by using the derivative rule for implicit functions \cite{Zheng2015Multi} with $\mathcal{O}_e(R_e^*)=\delta$, i.e.:
\begin{equation}\label{dRe}
\frac{dR_e^*}{dn} = - \frac{\partial \mathcal{O}_e/\partial n }{\partial \mathcal{O}_e/\partial R_e^*}.
\end{equation}	
First, it can be proved that $\frac{\partial \mathcal{O}_e}{\partial R_e^*}=\frac{\partial \mathcal{O}_e}{\partial \theta_e}\frac{\partial \theta_e}{\partial R_e^*}<0$ by noting that $\frac{\partial \theta_e}{\partial R_e^*}>0$ and 
	\begin{align}
	&\frac{\partial \mathcal{O}_e}{\partial \theta_e} = \frac{\beta}{\theta_e^2}\int_{\tau^l_e}^{\tau^u_e}\mathcal{F}_{\gamma_e}(\gamma)d\gamma-\frac{\beta}{\theta_e}\left[\frac{d\tau^u_e}{d\theta_e}\mathcal{F}_{\gamma_e}(\tau^u_e)-\frac{d\tau^l_e}{d\theta_e}\mathcal{F}_{\gamma_e}(\tau^l_e)\right]\nonumber\\
	& \stackrel{\mathrm{(a)}}\le \frac{\beta}{\theta_e^2}\left[\gamma \mathcal{F}_{\gamma_e}(\gamma)\right]|^{\tau^u_e}_{\tau^l_e}-\frac{4\beta\theta_e+1}{2\theta_e}\mathcal{F}_{\gamma_e}(\tau^u_e)+\frac{4\beta\theta_e-1}{2\theta_e}\mathcal{F}_{\gamma_e}(\tau^l_e) \nonumber\\
	& = \beta \left[\mathcal{F}_{\gamma_e}(\tau^l_e)-\mathcal{F}_{\gamma_e}(\tau^u_e)\right]<0,\nonumber
	\end{align}
where $\mathrm{(a)}$ follows from the partial integration. 
The next step is to determine the sign of $\frac{\partial \mathcal{O}_e}{\partial n}=\frac{\partial \mathcal{O}_e}{\partial \beta}\frac{\partial \beta}{\partial n}$. The first and second derivatives of $\mathcal{O}_e$ w.r.t. $\beta$ are respectively given by 
\begin{align}
\frac{\partial \mathcal{O}_e}{\partial \beta}&=\frac{1}{2\beta}\left[\mathcal{F}_{\gamma_e}(\tau^u_e)+\mathcal{F}_{\gamma_e}(\tau^l_e)\right]-\frac{1}{\theta_e}\int_{\tau^l_e}^{\tau^u_e}\mathcal{F}_{\gamma_e}(\gamma)d\gamma,\label{dqe1}\\
\frac{\partial^2 \mathcal{O}_e}{\partial \beta^2} &= \frac{\theta_e}{4\beta^3}\left[f_{\gamma_e}(\tau^l_e)-f_{\gamma_e}(\tau^u_e)\right].\label{dqe2}
\end{align}
It is easy to see $\frac{\partial^2 \mathcal{O}_e}{\partial \beta^2}>0$ as $f_{\gamma_e}(\gamma) = \frac{1}{\Gamma_e}e^{-{\gamma}/{\Gamma_e}}$ decreases with  $\gamma$ and $\tau^u_e>\tau^l_e$. This indicates that $\frac{\partial \mathcal{O}_e}{\partial \beta}$ increases with $\beta$ such that $\frac{\partial \mathcal{O}_e}{\partial \beta}<\frac{\partial \mathcal{O}_e}{\partial \beta}|_{\beta\rightarrow\infty}=0$. Combining $\frac{\partial \mathcal{O}_e}{\partial \beta}<0$ and $\frac{\partial \beta}{\partial n}>0$ yields $\frac{\partial \mathcal{O}_e}{\partial n}<0$. With $\frac{\partial \mathcal{O}_e}{\partial R_e^*}<0$ and $\frac{\partial \mathcal{O}_e}{\partial n}<0$ in \eqref{dRe}, $\frac{dR_e^*}{dn}<0$ is obtained, which completes the proof.

\subsection{Proof of Theorem \ref{theorem_opt_n}}
\label{appendix_theorem_opt_n}
The derivative of $p_s$ w.r.t. $n$ is given by
\begin{equation}\label{dps}
\frac{dp_s}{dn}=  \frac{1}{\sqrt{2\pi}}e^{-\frac{n(C_b-R_t)^2}{2V_b}}\left[\frac{C_b-R_t}{2\sqrt{nV_b}}-\sqrt{\frac{n}{V_b}}\frac{dR_e^*}{dn}\right],
\end{equation}
which follows from the derivative $\frac{dQ(x)}{dx} = \frac{-1}{\sqrt{2\pi}}e^{-\frac{x^2}{2}}$. 
Plugging $\frac{dR_e^*}{dn}<0$, as shown in Lemma \ref{lemma_opt_re_n}, into \eqref{dps} yields $\frac{dp_s}{dn}>0$. 
For a fixed $R_s$ in \eqref{st_max1}, it is directly concluded that $\frac{d\mathcal{T}_{\rm A}(\eta)}{dn}>0$, which means that $\mathcal{T}_{\rm A}(\eta)$ monotonically increases with $n$.  {Since $n$ is an integer, $\mathcal{T}_{\rm A}(\eta)$ is maximized at the maximal integer of $n$, i.e., $n = N$, which completes the proof.}

\subsection{Proof of Theorem \ref{theorem_opt_rs}}
\label{appendix_theorem_opt_rs}
	From  \eqref{dTs}, it is easy to prove that $\frac{d^2\mathcal{T}_{\rm A}(\eta)}{dR_s^2}<0$, i.e., $\mathcal{T}_{\rm A}(\eta)$ is concave on $R_s$. 
	It is verified that $\frac{d\mathcal{T}_{\rm A}(\eta)}{dR_s}|_{R_s = 0}=1 - Q\left(\frac{C_b-R_e^*}{\sqrt{V_b/N}}\right)>0$. As a result, $\mathcal{T}_{\rm A}(\eta)$ is maximized at the boundary $R_s = C_b- R_e^*$ if $\frac{d\mathcal{T}_{\rm A}(\eta)}{dR_s}|_{R_s = C_b- R_e^*}=\frac{1}{2}-\frac{C_b-R_e^*}{\sqrt{2\pi V_b/N}}\ge 0 \Rightarrow R_e^*\ge C_b-\sqrt{\frac{\pi V_b}{2N}}$ or otherwise at the unique zero-crossing of $\frac{d\mathcal{T}_{\rm A}(\eta)}{dR_s}$, i.e.,  $R_s^{\circ}$.
	Next, the condition $R_e^*\ge C_b-\sqrt{\frac{\pi V_b}{2N}}$ is equivalently transformed to that $\gamma_b$ does not exceed a critical value $\gamma_b^{\circ}$. Let $\psi(\gamma_b)=C_b-\sqrt{\frac{\pi V_b}{2N}}$. It can be readily confirmed that $\psi(\gamma_b)<0$, and $\psi(\gamma_b)$ decreases with $\gamma_b$ if $0<\gamma_b<\gamma_b^{L}\triangleq\sqrt{\frac{1}{2}+\sqrt{\frac{1}{4}+\frac{\pi}{2N}}}-1$ or otherwise increases with $\gamma_b$. This leads to $R_e^*\ge\psi(\gamma_b)\Rightarrow\gamma_b\le\gamma_b^{\circ}\triangleq \psi^{-1}(R_e^*)$. An upper bound for $\gamma_b^{\circ}$ is further provided by realizing that $\psi(\gamma_b^{\circ})=R_e^*>\log_2(1+\gamma_b^{\circ})-\sqrt{\frac{\pi }{2N}}\log_2e\Rightarrow \gamma_b^{\circ}<\gamma_b^{U}\triangleq e^{\sqrt{\frac{\pi}{2N}}+R_e^*\ln 2}-1$. Then, $\gamma_b^{\circ}$ can be quickly calculated using the bisection method with $\psi(\gamma_b)=R_e^*$ in the range $(\gamma_b^{L},\gamma_b^{U})$. This completes the proof.
	
\subsection{Proof of Theorem \ref{theorem_opt_rs_non}}\label{appendix_theorem_opt_rs_non}
First, display the derivative $\frac{dY(\theta_b)}{d\theta_b}$ in a recursive form  $\frac{dY(\theta_b)}{d\theta_b}=\frac{1}{\theta_b}-\left(\frac{1}{\theta_b}+\frac{1}{2\beta\Gamma_b}\right)Y(\theta_b)$ with $Y(\theta_b)$ in \eqref{st_theta}. Then, the derivative $\frac{d\mathcal{T}_{\rm N}}{d\theta_b}$ is given by $\frac{d\mathcal{T}_{\rm N}}{d\theta_b} =\frac{\theta_b}{1+\theta_b^2 }e^{-\frac{\theta_b^2}{\Gamma_b}}G(\theta_b)$,
with $G(\theta_b)$ presented in \eqref{G}. It is easily proved that $G(\theta_b)>0$ when $\theta_b = \sqrt{2^{R_e^*}-1}$ and $G(\theta_b)<0$ as $\theta_b\rightarrow\infty$.
The key step of the proof is to argue that $G(\theta_b)$ monotonically decreases with $\theta_b$, which guarantees a unique zero-crossing of $G(\theta_b)$ within  $\theta_b\in(\sqrt{2^{R_e^*}-1},\infty)$. In other words, ${\mathcal{T}_{\rm N}}$ initially increases and then decreases with $\theta_b$ and reaches the maximum when $\theta_b$ arrives at the unique zero-crossing of $G(\theta_b)$. To this end, it is necessary to calculate the derivative $\frac{dG(\theta_b)}{d\theta_b}${:}
\begin{equation}\label{dG}
\frac{dG(\theta_b)}{d\theta_b} = \frac{\frac{dY(\theta_b)}{d\theta_b}-2g(\theta_b)}{\ln2}-\left[\log_2(1+\theta_b^2)-R_e^*\right]h(\theta_b),
\end{equation}
where $h(\theta_b)=\left(1-\frac{1}{\theta_b^2}\right)g(\theta_b)+\frac{1+\theta_b^2}{\theta_b}\frac{dg(\theta_b)}{d\theta_b}$. 
To {proceed}, the following lemma is {introduced}.
\begin{lemma}\label{W}
	$Y(\theta_b)$ decreases with $\theta_b$ and satisfies
	\begin{equation}\label{bound}
	\frac{2\beta\Gamma_b}{\theta_b+2\beta\Gamma_b}<Y(\theta_b)<\min\left\{1,\frac{2\beta\Gamma_b}{\theta_b}\right\}.
	\end{equation}
	\end{lemma}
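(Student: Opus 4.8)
The plan is to collapse the two parameters $\beta$ and $\Gamma_b$ into a single dimensionless variable and then reduce every assertion to the elementary inequality $e^{x}>1+x$ (strict for $x\neq0$). Concretely, I would introduce $t\triangleq\frac{\theta_b}{2\beta\Gamma_b}$, which is strictly positive (since $\beta,\Gamma_b>0$ and $\theta_b>\sqrt{2^{R_e^*}-1}>0$) and strictly increasing in $\theta_b$. Under this substitution the function $Y(\theta_b)$ in \eqref{st_theta} becomes $Y=\tilde Y(t)\triangleq\frac{1-e^{-t}}{t}$, so it suffices to analyze $\tilde Y(t)$ on $t>0$; the monotonicity in $\theta_b$ then transfers directly because $t$ is increasing in $\theta_b$.

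For the decreasing claim, I would compute $\tilde Y'(t)=\frac{(t+1)e^{-t}-1}{t^{2}}$. Its sign is carried entirely by the numerator, and $(t+1)e^{-t}-1<0$ is precisely $t+1<e^{t}$, i.e. $e^{t}>1+t$ for $t>0$. Hence $\tilde Y'(t)<0$, and $Y$ strictly decreases with $\theta_b$.

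For the two-sided bound \eqref{bound}, I would first rewrite it in the $t$-variable. Using $\theta_b=2\beta\Gamma_b t$ gives $\frac{2\beta\Gamma_b}{\theta_b+2\beta\Gamma_b}=\frac{1}{t+1}$ and $\frac{2\beta\Gamma_b}{\theta_b}=\frac{1}{t}$, so \eqref{bound} is equivalent to $\frac{1}{t+1}<\tilde Y(t)<\min\left\{1,\frac{1}{t}\right\}$. Each piece is then an exponential inequality: $\tilde Y(t)<\frac{1}{t}$ is immediate from $e^{-t}>0$; $\tilde Y(t)<1$ rearranges to $e^{-t}>1-t$, which is $e^{x}>1+x$ at $x=-t$; and the lower bound $\tilde Y(t)>\frac{1}{t+1}$ rearranges to $(t+1)e^{-t}<1$, i.e. again $e^{t}>1+t$. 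Since both upper bounds hold for every $t>0$, the minimum of the two is a valid upper bound, completing \eqref{bound}.

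The derivation is essentially routine, so there is no substantive obstacle; the only points deserving attention are recognizing that the single substitution $t=\theta_b/(2\beta\Gamma_b)$ simultaneously normalizes the bounds to $\frac{1}{t+1}$ and $\frac{1}{t}$, and observing that the monotonicity together with the lower bound rests on exactly the same inequality $e^{t}>1+t$, while the upper bound $\tilde Y<1$ uses its reflection $e^{-t}>1-t$. Establishing that all inequalities are \emph{strict} for $t>0$ (so that the lemma's strict signs are justified) follows from the strictness of $e^{x}>1+x$ away from $x=0$. This lemma is precisely what is needed afterwards to sign the bracketed terms appearing in $\frac{dG(\theta_b)}{d\theta_b}$ of \eqref{dG}, and thereby to push through the monotonicity of $G(\theta_b)$ claimed in Theorem \ref{theorem_opt_rs_non}.
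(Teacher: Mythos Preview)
Your proposal is correct and follows essentially the same approach as the paper: both introduce the dimensionless variable $x=\theta_b/(2\beta\Gamma_b)$, rewrite $Y=\frac{1-e^{-x}}{x}$, and reduce everything to the inequality $e^{x}>1+x$. The only cosmetic difference is that the paper extracts the lower bound by rearranging the recursive derivative identity $\frac{dY}{d\theta_b}=\frac{1}{\theta_b}-\bigl(\frac{1}{\theta_b}+\frac{1}{2\beta\Gamma_b}\bigr)Y<0$, whereas you obtain it directly by manipulating $(t+1)e^{-t}<1$; these are equivalent one-line rearrangements of the same inequality.
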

	\begin{proof}
		Define $x\triangleq \frac{\theta_b}{2\beta\Gamma_b}$ such that $Y(\theta_b) = \frac{1-e^{-x}}{x}$.
		The monotonicity of $Y(\theta_b)$ w.r.t. $\theta_b$ is due to $\frac{d Y(\theta_b)}{d\theta_b}=\frac{(1+x)e^{-x}-1}{x}<0$.
		The lower bound of $Y(\theta_b)$ is obtained from $\frac{dY(\theta_b)}{d\theta_b}=\frac{1}{\theta_b}-\left(\frac{1}{\theta_b}+\frac{1}{2\beta\Gamma_b}\right)Y(\theta_b)<0$ and the upper bound follows from $Y(\theta_b) <\frac{1}{x}$ and $1-e^{-x}<x$.
		\end{proof}
		
		With the lower bound of $Y(\theta_b)$ given in \eqref{bound}, it can be readily proved that $g(\theta_b)>0$ such that the term ${\frac{dY(\theta_b)}{d\theta_b}-2g(\theta_b)}$ in \eqref{dG} is negative. Besides, since $h(\theta_b)\ge0$ directly yields $\frac{dG(\theta_b)}{d\theta_b}<0$, one only needs to discuss the situation $h(\theta_b)<0$ and prove that 
\begin{align}\label{dG1}
\frac{dG(\theta_b)}{d\theta_b} \ln 2&\le {\frac{dY(\theta_b)}{d\theta_b}-2g(\theta_b)}-h(\theta_b)\ln(1+\theta_b^2)\nonumber\\
&\stackrel{\mathrm{(a)}}<{\frac{dY(\theta_b)}{d\theta_b}-2g(\theta_b)}-\theta_b^2h(\theta_b)\nonumber\\
&\stackrel{\mathrm{(b)}}\le-\frac{1}{\theta_b+2\beta\Gamma_b}\left(\frac{\theta_b^2}{\Gamma_b}+\frac{\theta_b^4}{\Gamma_b}+8\beta\theta_b+8\beta\theta_b^3 -\theta_b^2\right)\nonumber\\
&\stackrel{\mathrm{(c)}}<0,
\end{align}
		where $\mathrm{(a)}$ is due to $\ln(1+x)\le x$, $\mathrm{(b)}$ holds by invoking Lemma \ref{W} along with algebraic manipulations, and $\mathrm{(c)}$ derives from $8\beta\theta_b+8\beta\theta_b^3\ge16\beta\theta_b^2>\theta_b^2$ as $\beta=\frac{\sqrt{N}}{2\pi}>\frac{1}{8}$.
		
\subsection{Proof of Theorem 
\ref{theorem_opt_alpha}}
\label{appendix_theorem_opt_alpha}
First fix $R_s$, and it is clear that the term $\frac{-\lambda_bR_s}{\sqrt{\lambda_b^2-1}}$ in \eqref{ps_multi} increases with $\alpha$ as  $\lambda_b$ increases with $\alpha$. It is also verified that the term $Z(\alpha)\triangleq\frac{\lambda_b(\ln\lambda_b - \ln\lambda_e )}{\sqrt{\lambda_b^2-1}}$ in \eqref{ps_multi} increases with $\alpha$ by computing the derivative of $Z(\alpha)$ w.r.t. $\alpha$:
	\begin{align}\label{dz}
	\frac{dZ(\alpha)}{d\alpha} &= 
	\frac{\frac{d\lambda_b}{d\alpha}\left(\lambda_b^2-1-\ln\frac{\lambda_b}{\lambda_e}\right)-\frac{\lambda_b(\lambda_b^2-1)}{\lambda_e}\frac{d\lambda_e}{d\alpha}}{(\lambda_b^2-1)^{3/2}}\nonumber\\
	&\stackrel{\mathrm{(a)}}=\frac{\lambda_b(\lambda_b-1)\left[(\lambda_b-\lambda_e)(\lambda_b+1)-\ln\frac{\lambda_b}{\lambda_e}\right]}{\alpha(\lambda_b^2-1)^{3/2}}\nonumber\\
	&\stackrel{\mathrm{(b)}}\ge\frac{\lambda_b(\lambda_b-1)(\lambda_b-\lambda_e)\left(\lambda_b+1-\frac{1}{\lambda_e}\right)}{\alpha(\lambda_b^2-1)^{3/2}}\stackrel{\mathrm{(c)}}>0,
	\end{align}
{ where $\mathrm{(a)}$ holds by substituting $\frac{d\lambda_i}{d\alpha} =\frac{\lambda_i(\lambda_i-1)}{\alpha}$ for $i\in\{b,e\}$, $\mathrm{(b)}$ follows from the inequality $\ln\frac{\lambda_b}{\lambda_e}\le\frac{\lambda_b-\lambda_e}{\lambda_e}$ with $\lambda_b>\lambda_e>0$, and $\mathrm{(c)}$ is due to $\lambda_b>\lambda_e>1$.} Hence, $p_s$ in \eqref{ps_multi} increases with $\alpha$ as $Q(x)$ decreases with $x$. This indicates, $\alpha^*=1$ is optimal for maximizing $\mathcal{T}_{\rm A}(\eta)= R_sp_s$ for any given $R_s$ and $\eta$ and is also optimal for maximizing $\mathcal{T}_{\rm A}$.

\subsection{Proof of Theorem \ref{theorem_opt_phi}}
\label{appendix_theorem_opt_phi}
Let $L(\phi)=L_1L_2$, where $L_1 =\frac{\lambda_b}{\sqrt{\lambda_b^{2}-1}}$ and $L_2 = {\ln\frac{\lambda_b}{\lambda_e}-R_s\ln2}$ such that  $\frac{dL_1}{d\phi}=-\rho_b(L_1^2-1)^{3/2}$ and $\frac{dL_2}{d\phi} = \frac{\rho_b}{\lambda_b}-\frac{\rho_e+\phi\frac{d\rho_e}{d\phi}}{\lambda_e}$. Rewrite the second derivative as $\frac{d^2L(\phi)}{d\phi^2}=L_1(L_1^2-1)^2 I(\phi)$ with $I(\phi)$ given by \eqref{I} at the top of this page, { where $\mathrm{(a)}$ holds by recalling the definition $L_2 \leq \ln\frac{\lambda_b}{\lambda_e}$ and invoking the result $\frac{d^2\rho_e}{d\phi^2}>\frac{2}{\rho_e}\left(\frac{d\rho_e}{d\phi}\right)^2>0$ from Lemma \ref{dchi}, $\mathrm{(b)}$ follows from plugging $L_1 =\frac{\lambda_b}{\sqrt{\lambda_b^{2}-1}}$, using the inequality $\ln\frac{\lambda_b}{\lambda_e}\le \frac{\lambda_b-\lambda_e}{\lambda_e}$, and omitting the term $(\phi^2+\frac{2\phi}{\rho_e})(\frac{d\rho_e}{d\phi})^2$, $\mathrm{(c)}$ holds by substituting $\frac{dL_2}{d\phi}$ into $\mathrm{(b)}$, and $\mathrm{(d)}$ is established after some manipulation operations and by discarding the negative term  $\frac{2(\lambda_b^2-1)}{\lambda_b\lambda_e^2}\left[\phi\rho_b\lambda_e - \lambda_b\left(\lambda_b^2-1\right)\right]$ noting that $\lambda_b = 1+\phi\rho_b>\lambda_e$.} As indicated by \eqref{I} that  $L(\phi)$ is concave on $\phi$, $L(\phi)$ is maximized at $\phi=1$ if $\frac{dL(\phi)}{d\phi}|_{\phi=1}\ge0$ or otherwise at the unique zero-crossing of $\frac{dL(\phi)}{d\phi}$.
Besides,  $\frac{dL(\phi)}{d\phi}|_{\phi=1}\ge0$ is equivalent to $X(\rho_b)\ge0$ in \eqref{condition_phi_1}. Clearly, $A_1=\frac{(1+\Gamma_e)\rho_e(1)}{1+\rho_e(1)}<1$ must be ensured to yield a positive $X(\rho_b)$, with which it can be verified that $X(\rho_b)$ increases with $\rho_b$. As a consequence,  $\frac{dL(\phi)}{d\phi}|_{\phi=1}\ge0$ can be transformed to an explicit form with relation to $\rho_b$, namely, $\rho_b\ge\rho_b^{\circ}$. This completes the proof.
\begin{figure*}[t]
	\begin{align}\label{I}
	I(\phi)
	&=3\rho_b^2L_2-\frac{2\rho_b\frac{dL_2}{d\phi}}{L_1(L_1^2-1)^{1/2}}-\frac{1}{(L_1^2-1)^{2}}\left[{\frac{\rho_b^2}{\lambda_b^2}+\frac{2\frac{d\rho_e}{d\phi}+(\phi+\phi^2\rho_e)\frac{d^2\rho_e}{d\phi^2}-\rho_e^2-\phi^2(\frac{d\rho_e}{d\phi})^2}{\lambda_e^2}}\right]\nonumber\\ &\stackrel{\mathrm{(a)}}\le 3\rho_b^2\ln\frac{\lambda_b}{\lambda_e}-\frac{2\rho_b}{L_1\sqrt{L_1^2-1}}\frac{dL_2}{d\phi}-\frac{1}{{(L_1^2-1)^2}}\left[{\frac{\rho_b^2}{\lambda_b^2}+\frac{2\frac{d\rho_e}{d\phi}+(\phi^2+\frac{2\phi}{\rho_e})(\frac{d\rho_e}{d\phi})^2-\rho_e^2}{\lambda_e^2}}\right]\nonumber\\ &\stackrel{\mathrm{(b)}}\le \left[3\rho_b^2\frac{\lambda_b-\lambda_e}{\lambda_e}-{2\rho_b}\frac{\lambda_b^2-1}{\lambda_b}\frac{dL_2}{d\phi}-\left(\lambda_b^2-1\right)^2\left({\frac{\rho_b^2}{\lambda_b^2}+\frac{2\frac{d\rho_e}{d\phi}-\rho_e^2}{\lambda_e^2}}\right)\right]\nonumber\\ &\stackrel{\mathrm{(c)}}=
	\frac{\rho_b^2\left[3\lambda_b^2(\lambda_b-\lambda_e)+\lambda_e(1-\lambda_b^4)\right]}{\lambda_b^2\lambda_e}+\frac{2\rho_b(\lambda_b^2-1)}{\lambda_b\lambda_e}\left(\rho_e+\phi\frac{d\rho_e}{d\phi}\right)+\frac{(\lambda_b^2-1)^2}{\lambda_e^2}\left(\rho_e^2-2\frac{d\rho_e}{d\phi}\right)\nonumber\\
	& \stackrel{\mathrm{(d)}}\le -\frac{\phi\rho_b^2(\rho_b-\rho_e)}{\lambda_b^2\lambda_e^2}\left[2\phi^4\rho_b^3\rho_e+\phi^3\rho_b^2(\rho_b+6\rho_e)+\phi^2\rho_b(\rho_b+8\rho_e)+5\phi\rho_e+1\right]<0,
	\end{align}
	\hrulefill
	\end{figure*}

	\subsection{Proof of Corollary \ref{corollary_opt_phi}}
	\label{appendix_corollary_opt_phi}
	Let $K(\phi)$ denote $\frac{dL(\phi)}{d\phi}$ in \eqref{dL}. 
	It is verified that $\frac{d\phi^*}{d\eta}=-\frac{\partial K(\phi)/\partial\eta}{\partial K(\phi)/\partial\phi}|_{\phi=\phi^*}>0$ by recalling that $\frac{\partial K(\phi)}{\partial\phi}|_{\phi=\phi^*}<0$ from Theorem \ref{theorem_opt_phi} and proving that 
\begin{align}
\frac{\partial K(\phi)}{\partial\rho_b}|_{\phi=\phi^*}&=\frac{\frac{\lambda_b^2-\lambda_b+1}{\lambda_b}-(1-A_{\phi^*})+\frac{(2\lambda_b-1)(\ln \lambda_b-B_{\phi^*})}{\lambda_b+1}}{\phi^*(\lambda_b^2-1)^{{5}/{2}}/P_b}\nonumber\\
&\stackrel{\mathrm{(a)}}=\frac{\frac{1}{\lambda_b}-\lambda_b+(2\lambda_b^2-\lambda_b-1)(1-A_{\phi^*})}{\phi^*(\lambda_b^2-1)^{{5}/{2}}/P_b}\nonumber\\
&\stackrel{\mathrm{(b)}}\ge
\frac{\lambda_b-1}{\phi^*(\lambda_b^2-1)^{{5}/{2}}/P_b}>0,
\end{align}
	where $\mathrm{(a)}$ is due to $\frac{\ln \lambda_b - B_{\phi^*}}{\lambda_b+1} = {(1-A_{\phi^*}){\lambda_b}-1}$ from $K(\phi^*)=0$, and $\mathrm{(b)}$ is because $(1-A_{\phi^*}) \lambda_b-1>0$. Moreover, $\lim_{\eta\rightarrow\infty}K(\phi^*)=\frac{1-A_{\phi^*}}{\phi^*}$. Solving $K(\phi^*)=0$ with $A_{\phi^*} =\frac{\phi^*\Lambda}{(1-\phi^*)(1-\phi^*+\phi^*\Lambda)}$ yields $\phi^* = \frac{1}{\sqrt{\Lambda}+1}$.
	Similarly, one can prove that  $\frac{d\phi^*}{d\Lambda}<0\Rightarrow \frac{d\phi^*}{d\delta}>0$ and $\lim_{\delta\rightarrow 1}\phi^* = 1$. 
	
\subsection{Proof of Proposition \ref{proposition_high}}\label{appendix_proposition_high}	
		Note that $\varrho_1,\varrho_2\rightarrow 0$ as $\Gamma_b\rightarrow\infty$. Resorting to \cite[Eqn. (44)]{Zhang2013Design} yields
		\begin{equation}\label{g1}
		\bar\Gamma(M,\varrho_i) = e^{-\varrho_i}\sum_{k=0}^{M-1}\frac{\varrho_i^k}{k!} \approx 1-\frac{\varrho_i^M}{M!}, ~i\in\{1,2\},
		\end{equation}
		and the term $\Delta\Gamma$ in \eqref{st_multi} is approximated as 
		\begin{align}\label{g2}
		&\Delta\Gamma=	\sum_{k=0}^{M-1}\left[\bar\Gamma(k+1,\varrho_1)-\bar\Gamma(k+1,\varrho_2)\right] \nonumber\\
		&=\sum_{k=0}^{M-1}\left(e^{-\varrho_1}\sum_{m=0}^k\frac{\varrho_1^m}{m!}-e^{-\varrho_2}\sum_{m=0}^k\frac{\varrho_2^m}{m!}\right)\nonumber\\
		& =\sum_{k=0}^{M-1}\frac{M-k}{k!}\left(e^{-\varrho_1}{\varrho_1^k}-e^{-\varrho_2}{\varrho_2^k}\right)\nonumber\\
		&=M\Delta\Gamma(M-1,\varrho_1,\varrho_2)-\varrho_1\bar\Gamma(M-1,\varrho_1)+\varrho_2\bar\Gamma(M-1,\varrho_2)\nonumber\\
		&\approx M\left(\frac{\varrho_1^M}{M!}-\frac{\varrho_2^M}{M!}\right)-\left[\varrho_1-\frac{\varrho_1^{M}}{(M-1)!}\right]+\left[\varrho_2-\frac{\varrho_2^{M}}{(M-1)!}\right]\nonumber\\
		&=\varrho_2-\varrho_1=\frac{{\theta_b}}{2\beta\phi \Gamma_b}.
		\end{align}
		Substituting \eqref{g1} and \eqref{g2} into \eqref{st_multi} completes the proof.

\end{document}